\newtheorem{theorem}{Theorem}[section]
\newtheorem{proposition}[theorem]{Proposition}
\newtheorem{lemma}[theorem]{Lemma}
\newtheorem{corollary}[theorem]{Corollary}
\newtheorem{definition}[theorem]{Definition}
\theoremstyle{definition}
\newtheorem{remark}[theorem]{Remark}
\begin{document}

\newcommand{\legendre}[2]{\genfrac{(}{)}{}{}{#1}{#2}} 

\newcommand{\beq}{\begin{equation}}  
\newcommand{\eeq}{\end{equation}}  
\newcommand{\bea}{\begin{eqnarray}}  
\newcommand{\eea}{\end{eqnarray}}  
\newcommand\la{{\lambda}}   
\newcommand\La{{\Lambda}}   
\newcommand\ka{{\kappa}}   
\newcommand\al{{\alpha}}   
\newcommand\be{{\beta}} 
\newcommand\ze{{\eta}} %% NOTE CHANGE OF SPECTRAL PARAMETER
\newcommand\zet{{\nu}} %% NOTE CHANGE OF SPECTRAL PARAMETER
\newcommand\gam{{\gamma}}     
\newcommand\om{{\omega}}  
\newcommand\tal{{\tilde{\alpha}}}  
\newcommand\tbe{{\tilde{\beta}}}   
\newcommand\tla{{\tilde{\lambda}}}  
\newcommand\tmu{{\tilde{\mu}}}  
\newcommand\si{{\sigma}}  
\newcommand\lax{{\bf L}}    
\newcommand\mma{{\bf M}}    
\newcommand\rd{{\mathrm{d}}}  
\newcommand\tJ{{\tilde{J}}}  
\newcommand\ri{{\mathrm{i}}} 
\newcommand\lcm{{\mathrm{lcm}}} 

\newcommand\rS{{\mathrm{S}}} 
\newcommand\rJ{{\mathrm{J}}} 

\newcommand{\F}{{\mathbb F}}
\newcommand{\N}{{\mathbb N}}
\newcommand{\Q}{{\mathbb Q}}
\newcommand{\Z}{{\mathbb Z}}
\newcommand{\C}{{\mathbb C}}
\newcommand{\R}{{\mathbb R}}

\newcommand\tI{{\tilde{\mathcal{I}}}}

\newcommand\SH{{\mathcal{S}}}

\newcommand\Tc{{\mathcal{T}}}

\newcommand\Uc{{\mathcal{U}}}

\newcommand\Pc{{\mathcal{P}}}

\newcommand\tr{{{\mathrm{tr}}\,}}

%\begin{titlepage}

\title{Linear relations for Laurent  polynomials and lattice equations}
\author{Andrew N.W. Hone\footnote{Work partly carried out while on leave in the 
School of Mathematics and Statistics, 
University of New South Wales, Sydney NSW 2052, Australia. }$\,$
and Joe Pallister 
\\School of Mathematics, Statistics \& Actuarial Science \\ University of Kent, UK. %\\  \\ 
}
\maketitle

\begin{abstract}
A recurrence relation is said to have the Laurent property if all of its iterates are Laurent 
polynomials in the initial values with integer coefficients.
We consider a family of nonlinear recurrences with the Laurent property, which were 
derived by Alman et al.\  via a construction of periodic seeds in Laurent phenomenon 
algebras, and generalize the Heideman-Hogan recurrences. Each member of the 
family is shown to be linearizable, in the sense that the iterates satisfy linear recurrence 
relations with constant coefficients. 
The latter are obtained from linear relations with periodic coefficients, which were 
found recently by Kamiya et al.\ from travelling wave reductions of a linearizable lattice equation on a 
6-point stencil. We introduce another linearizable  lattice equation on the same 
stencil, and present the corresponding linearization for its travelling wave reductions. 
Finally, for both of the 6-point lattice equations 
considered, we use the formalism of van der Kamp to construct a broad 
class of initial value problems with the Laurent property.  
\end{abstract}

\section{Introduction}

%In \cite{fordyhone} the authors use mutation in cluster algebras to define recurrence relations. The nature of this mutation causes the recurrences to appear in the form

There continues to be a great deal of interest in nonlinear recurrences of the form 
\begin{equation}\label{mutationrecurrence}
x_{n+m}x_n=P(x_{n+1},\ldots x_{m+n-1}), 
\end{equation}
for a  polynomial $P$, 
with the surprising property that all  of the iterates are Laurent polynomials in the initial data with integer coefficients, that is to say 
$$
x_n\in \mathbb{Z}[x_0^{\pm 1},\ldots,x_{m-1}^{\pm 1}]
$$
for all $n$ \cite{gale}. This Laurent property is a 
central feature of the generators in cluster algebras, a novel class of commutative algebras introduced by 
Fomin and Zelevinsky \cite{clusteri}, which are defined by recursive relations of the same  form as (\ref{mutationrecurrence}) but  with the restriction that $P$ should  be a binomial expression 
of a specific kind. The same authors also considered a more general set of sufficient conditions which ensure that the above recurrence has the Laurent property, 
without requiring $P$ to be a binomial \cite{laurentphenomenon}. More recently, this led to the introduction of the broader framework of 
Laurent phenomenon  algebras \cite{lp}.

Cluster algebras are the focus of much activity due to their connections with diverse areas of mathematics and physics, 
ranging from Lie theory and  supersymmetric gauge theories 
to  Teichm\"uller theory
and dimer models \cite{eager, fst, gk}. The 
structure of a typical
cluster algebra may be very complicated, due to the complexity of  the recursive process, called mutation, that produces the 
generators. However, there are certain subclasses of cluster algebras that are associated with  discrete integrable 
systems of some kind, and often  these are the examples that are  of most interest in applications to other 
areas. Beyond the finite type cluster algebras, which have a taxonomy that coincides with the Cartan-Killing classification of semisimple Lie algebras and finite root systems \cite{clusterii}, 
and are associated with purely periodic dynamics, the next interesting subclass corresponds to discrete dynamical systems that admit linearization, in the sense that the variables 
satisfy linear recurrence relations with constant coefficients. The simplest example is the recurrence 
\begin{equation} \label{kro} 
x_{n}x_{n+2} = (x_{n+1})^2+1,
\end{equation}
which arises from mutations of the Kronecker quiver (an orientation of the affine $A_1^{(1)}$ diagram), for which %all 
the iterates 
satisfy the linear relation 
$$ 
x_{n+2} - Cx_{n+1} + x_n =0
$$ 
for all $n\in\Z$, 
where 
$$
C=\frac{x_n}{x_{n-1}}+\frac{x_n}{x_{n-1}}+\frac{1}{x_nx_{n-1}}
$$ 
is a first integral (independent of $n$). Linearizability was found for dynamics of cluster variables obtained from affine Dynkin quivers of type $A$ in \cite{fordymarsh}, 
for types $A$ and $D$ via frieze patterns in \cite{frises}, and in general for all affine types $ADE$ in  \cite{kellerscherotzke}. It has been further conjectured (and proved 
in certain cases) that 
linearizability holds for sequences of cluster variables obtained from mutation sequences obtained from box products $X\Box Y$ of a finite type Dynkin quiver $X$ and an affine Dynkin 
quiver $Y$ \cite{zamoint}. A previously known example is provided by Q-systems \cite{dk}, which arise from the Bethe ansatz for quantum integrable models,  and 
correspond to taking $X=A_n$ for $n$ arbitrary and $Y=A_1^{(1)}$, so (\ref{kro}) is included when $n=1$. 
The case of a product of a pair of affine quivers $X,Y$ is not linearizable, but is conjectured 
to be associated with systems that are integrable in the Liouville-Arnold sense \cite{gp}.  

In work by one of us  with Fordy \cite{fordyhone}, concerning cluster algebras obtained from quivers that are mutation-periodic with period 1, in the sense 
of \cite{fordymarsh}, we showed a further property of the affine type $A$ recurrences, 
specified by a pair of coprime positive integers $p,q$ as  
\begin{equation}\label{affa} 
x_nx_{n+p+q} =x_{n+p}x_{n+q}+1, 
\end{equation} 
namely that the iterates satisfy additional linear relations with periodic coefficients, of the form 
\begin{equation}\label{pera} 
 \begin{array}{rcl} 
x_{n+2q}-J_n x_{n+q} +x_n & =& 0,\quad J_{n+p}=J_n, \\ 
x_{n+2p}-K_n x_{n+p} +x_n & =& 0,\quad K_{n+q}=K_n.  \end{array} 
\end{equation} 
 Furthermore,  another family of linearizable recurrences from period 1 quivers was found, of the form 
\begin{equation}\label{2fri} 
x_nx_{n+2k} =x_{n+p}x_{n+q} 
+x_{n+k}, \qquad p+q=2k, 
\end{equation} 
which includes Dana Scott's recurrence \cite{gale} 
\begin{equation}\label{danascott} 
x_n x_{n+4} = x_{n+1}x_{n+3} + x_{n+2}, 
\end{equation} 
and these also admit linear relations with periodic coefficients, given by  
\begin{equation}\label{perfr} 
 \begin{array}{rcl} 
x_{n+3q} -J_{n+k}x_{n+2q}+J_n x_{n+q} -x_n & =& 0,\quad J_{n+p}=J_n, \\ 
x_{n+3p}-K_{n+k}x_{n+2p}+K_n x_{n+p} -x_n & =& 0,\quad K_{n+q}=K_n.  \end{array} 
\end{equation} 
Analogous linear relations with periodic coefficients for affine quivers of type $D$ and $E$, 
and associated Liouville integrable systems, appear in \cite{joede}.

In this paper we are concerned with linearizable recurrences that exhibit the Laurent property but go beyond the setting of cluster algebras. 
To begin with we will consider the family of recurrences 
\begin{equation}\label{firstlittlepi}
x_nx_{n+2k+l}=x_{n+2k}x_{n+l}+ax_{n+k}+ax_{n+k+l}, 
\end{equation}
with a fixed parameter $a$ and positive integers $k$ and $l$. These recurrences were named the ``Little Pi" 
family in \cite{laurentphenomenonsequences}, where they were shown to be generated by period $1$ seeds in 
the setting of Laurent phenomenon (LP) algebras. They extend the Heideman-Hogan recurrences 
\cite{heidemanhogan},  corresponding
to the case $l=1$, for which detailed features of the linearization were proved in \cite{heidemanhoganhone}. 
Thus our first aim here is to generalize the results of the latter work, and 
resolve some open 
conjectures from \cite{wardthesis}.
In particular, for (\ref{firstlittlepi}) we 
obtain 
the constant coefficient linear relation 
\begin{equation}\label{introlinear}
x_{n+6kl}-\mathcal{K}x_{n+4kl}+\mathcal{K}x_{n+2kl}-x_n=0
\end{equation}
when 
$2k$ and $l$ are coprime, and a counterpart relation
\begin{equation} \label{introbiglin} 
x_{n+6kl}-\mathcal{A}x_{n+5kl}+\mathcal{B}x_{n+4kl}-\mathcal{C}x_{n+3kl}+\mathcal{B}x_{n+2kl}-\mathcal{A}x_{n+kl}+x_n=0).
\end{equation}
if $\gcd(2k,l)=2$. (All other cases can be reduced to one of these.) 
In addition to the first integrals ($\mathcal K$ or $\mathcal{A}, \mathcal{B}, \mathcal{C}$) that 
appear as coefficients, we derive periodic quantities and associated 
linear relations with periodic coefficients. 

Ordinary difference equations can arise as 
reductions of two-dimensional lattice equations. For instance, the  affine type $A$ recurrences (\ref{affa}) 
are obtained from the 4-point equation 
\begin{equation}\label{frieze} 
\left|\begin{array}{cc} u_{s,t} & u_{s+1,t} \\ 
 u_{s,t+1} & u_{s+1,t+1} 
 \end{array}\right|=1
\end{equation}
for $(s,t)$ being coordinates on $\Z^2$ (or more generally, on a 
quadrilateral lattice), 
which is the relation for a frieze pattern \cite{frises}. To obtain (\ref{affa}), one should take 
the 
$(p,-q)$ travelling wave reduction
\begin{equation}\label{trav} 
u_{s,t}= x_n, \qquad n = ps+qt, 
\end{equation}
corresponding to a wave moving on the lattice 
with constant velocity 
$-q/p\in\mathbb{Q}$. 
Similarly, it was noted in  \cite{kkmt} that the 5-point lattice equation 
\begin{equation}\label{2frieze} 
\left|\begin{array}{cc} u_{s,t-1} & u_{s+1,t} \\ 
 u_{s-1,t} & u_{s,t+1} 
 \end{array}\right|=u_{s,t}, 
\end{equation}
which is the relation for a 2-frieze \cite{mgot}, 
reduces to (\ref{2fri}) by substituting in (\ref{trav}) with the replacement 
$p\to p-k$, $q\to k$, to obtain 
the $(p-k,-k)$ 
travelling  wave reduction.
The authors of  \cite{kkmt} also considered the 
Little Pi family (\ref{firstlittlepi}) 
% was recently found 
as a reduction of the 6-point lattice equation 
\begin{equation}\label{2dlittlepiintro}
u_{s+1,t+2}u_{s,t}=u_{s+1,t}u_{s,t+2}+a(u_{s,t+1}+u_{s+1,t+1}). 
\end{equation} 
(Note that,  compared with \cite{kkmt}, we have switched the order of the independent variables and 
introduced the parameter $a$.) 
By obtaining linear relations for the above 
lattice equation, they deduced linear recurrences with periodic coefficients 
for its $(l,-k)$ travelling wave reduction  (\ref{firstlittlepi}) 
(cf.\  Proposition \ref{periodickernel} and Corollary \ref{periodlrelation} below). 
In addition, they proved the 
Laurent property for the lattice equation (\ref{2dlittlepiintro}), % was proved in \cite{kkmt} 
in the sense that for the initial value problem defined by  
\[
I=\{u_{s,0},u_{s,1}, u_{0,t}:s,t\in \mathbb{N}\}, 
\]
the iterates in the positive quadrant in $\Z^2$ are Laurent polynomials in the elements of this set. 
In this paper we  introduce a new 6-point lattice equation, given by 
\begin{equation}\label{lattice}
(u_{s+1,t+2}+u_{s+1,t}+a)u_{s,t+1}=(u_{s,t+2}+u_{s,t}+a)u_{s+1,t+1}
\end{equation}
and prove the Laurent property for both this and (\ref{2dlittlepiintro}) 
with a much broader set of initial values than just $I$. 
We further show that (\ref{lattice}) is linearizable, and this feature (as well as the Laurent 
property) extends to the family of $(l,-k)$ travelling wave reductions 
\begin{equation}\label{1drecurrence}
(x_{n+2k+l}+x_{n+l}+a)x_{n+k}=(x_{n+2k}+x_n+a)x_{n+k+l}. 
\end{equation} 
Our original motivation for introducing (\ref{lattice}) was the fact that, when $k=1$, 
the reduction (\ref{1drecurrence}) is the total difference of 
\begin{equation}\label{extreme} 
x_{n+l+1}x_n=x_{n+l}x_{n+1}+a\sum_{i=1}^{l}x_{n+i}+b, 
\end{equation}
where the arbitrary parameter $b$ is an integration constant. 
The latter family of recurrences was referred to as 
the ``Extreme polynomial" in  \cite{laurentphenomenonsequences}, where it was  obtained  from 
another set of period 1 seeds in LP algebras, and for $b=0$  it was independently found   in \cite{wardthesis}, 
where it was also shown to be linearizable  and  have the Laurent property (see \cite{honeward} for 
further details). However, the recurrences (\ref{1drecurrence}) lie  beyond the setting of LP algebras. 

All of the lattice equations described above fit into the framework of 
partial differential, differential-difference and partial difference equations 
described by Demskoi and Tran \cite{dt}, 
who considered the family of determinantal equations 
\begin{equation}\label{deteqn} 
 |M| = \mathrm{const}, 
\end{equation} 
where $|M|=\det (M)$  
 is the determinant of an $N\times N$ matrix $M$ of Casorati type, with entries specified by 
$$ 
M= ( u_{s+i-1,t+j-1})_{1\leq i,j\leq N}  
$$ 
(up to shifts of indices) in the lattice case, or with appropriate  modifications to  Wronskian type entries in 
the case of partial   differential/differential-difference equations. Equations of the form (\ref{deteqn}) 
are connected to 2D Toda lattices with appropriate 
boundary conditions, as well as Liouville's equation, and they are said to be Darboux integrable, meaning that they 
admit complete sets of first integrals that do not depend on  one or the other of the independent variables $s,t$.   
The $SL_2$ frieze relation (\ref{frieze}) is already in the form (\ref{deteqn}) with $N=2$, 
and has the consequence that the corresponding $3\times 3$ determinant vanishes, i.e.\ 
$$ 
\left| \begin{array}{ccc} u_{s,t}&u_{s,t+1}& u_{s,t+2} \\ 
u_{s+1,t}&u_{s+1,t+1}& u_{s+1,t+2} \\ 
u_{s+2,t}&u_{s+2,t+1}& u_{s+2,t+2} 
\end{array} \right|=0, 
$$ 
which follows by applying  the Dodgson condensation algorithm \cite{dodgson}, based on 
the Desnanot-Jacobi identity for matrix minors \cite{bressoud} 
(this is also referred to as Sylvester's identity in \cite{dt}), that is   
\begin{equation} \label{dc} 
|M|\, |M^{1N}_{1N}| = |M^1_1|\, |M^N_N| - |M^1_N|\, |M^N_1|
\end{equation} 
in which a superscript $i$ (subscript $j$) on a minor denotes 
that the $i$th row ($j$th column) is deleted. 
Using similar methods to \cite{fordyhone}, the right/left null vectors 
of the   $3\times 3$ matrix yield the linear relations 
\begin{equation}\label{jk} 
 \begin{array}{rcl} 
u_{s,t+2}-J u_{s,t+1} +u_{s,t} & =& 0,\quad \Delta_s J := J(s+1,t) - J(s,t)=0, \\ 
u_{s+2,t}-K u_{s+1,t} +u_{s,t} & =& 0,\quad \Delta_t K := K(s,t+1) - K(s,t)=0, \end{array} 
\end{equation} 
where the coefficients $J=J(t)$, $K=K(s)$ are first integrals of (\ref{frieze}) in the $s,t$ directions respectively, 
and the two linear relations in (\ref{jk}) reduce to those in (\ref{pera}) after imposing the travelling wave reduction (\ref{trav}). 
Similarly, applying Dodgson condensation with the 2-frieze relation (\ref{2frieze}) yields 
$$
\left| \begin{array}{ccc} 
u_{s,t-2}&u_{s+1,t-1}& u_{s+2,t} \\ 
u_{s-1,t-1}&u_{s,t}& u_{s+1,t+1} \\ 
u_{s-2,t}&u_{s-1,t+1}& u_{s,t+2} 
\end{array} \right|=1, 
$$ 
which is the relation for an $SL_3$ frieze on each of the sublattices  
obtained by restricting $s+t$ to have odd/even parity,  and can be 
put in the standard form (\ref{deteqn}) by a linear change of coordinates. 
A further application of (\ref{dc}) shows that the corresponding $4\times 4$ 
determinant vanishes for the 2-frieze relation, while 
in \cite{kkmt} it is shown that there is also a constant $3\times 3$ 
determinant and a vanishing $4\times 4$ 
determinant associated with 
(\ref{2dlittlepiintro}), and 
in the sequel we prove an analogous result for the new  lattice equation  
(\ref{lattice}). 
  
In the next section we give a very brief introduction to LP algebras, and explain how nonlinear recurrences 
of the form (\ref{mutationrecurrence}) can arise in that setting, giving full details for the particular case of the Little Pi family 
(\ref{firstlittlepi}). Section \ref{littlepilin} is devoted to  an 
independent derivation of the linear recurrences with periodic coefficients found  
for the Little Pi family in \cite{kkmt}, which  we then use  
in Section \ref{littlepiconst} to derive 
a constant coefficient  relation of order $6kl$, of the form (\ref{introlinear}) or (\ref{introbiglin}), 
for each pair of coprime positive integers $k,l$.
Section \ref{latticesection}
is concerned with the new 6-point lattice equation (\ref{lattice}), including the proof of linearization 
both for the lattice equation and all its travelling wave reductions (\ref{1drecurrence}). Finally, in Section 
\ref{boatsection} we show that the new lattice equation has the Laurent property for suitable band sets of initial values 
in $\Z^2$, 
of the kind described by van der Kamp in \cite{boat}, 
and we use this to infer the Laurent property for its  reductions (\ref{1drecurrence}). 
We finish by applying the same approach  
to show the Laurent property for the lattice equation  (\ref{2dlittlepiintro}) 
with band sets of initial values.

\section{Laurent phenomenon algebras and recurrence relations} \label{lpalgebras} 
There are various situations where birational transformations of the form (\ref{mutationrecurrence}) arise 
with more than two monomials on the right-hand side \cite{gsvgln, wilson}, and Laurent 
phenomenon (LP) algebras provide a general framework for such situations which goes 
beyond the setting of cluster algebras \cite{lp}. 
Like cluster algebras, LP algebras are constructed from collections of objects called clusters: for an 
LP algebra of rank $m$, a cluster is a set of $m$ independent quantities called cluster variables. 
A seed in an LP algebra consists of  a cluster together with $m$ polynomials in the cluster variables, 
called exchange polynomials. There is a process called mutation which allows new seeds to be produced, 
using the exchange polynomials. Certain conditions must be imposed on the exchange polynomials which ensure that 
%construction is designed to 
the Laurent property is preserved under arbirary sequences of mutations, in the sense that 
all of the cluster variables so obtained are Laurent polynomials in the $m$ cluster variables from the initial seed. 
There is also a concept of periodic seeds \cite{laurentphenomenonsequences}, 
analogous to the concept for cluster variables that was 
introduced in  \cite{fordymarsh}, which allows  recurrence relations to be generated by particular sequences of mutations.
%as in \cite{fordyhone}.

\subsection{Construction of an LP algebra}

\begin{definition}
	
A seed $(\mathbf{x},\mathbf{P})$ in an LP algebra of rank $m$ 
consists of a cluster 
$\mathbf{x}$, which is a   
collection of $m$ algebraically independent elements  (cluster variables) $x_i$, so $\mathbf{x}=\{x_1,\ldots x_m\}$, 
together with $m$ exchange polynomials $\mathbf{P}=\{P_1,\ldots P_m\}$. For each $i\in 1,\ldots m$ 
it is required that %must
\begin{itemize}
\item $P_i$ is  irreducible in $\mathbb{Z}[x_1,\ldots ,x_m]$; 
\item $P_i$  does not contain the variable $x_i$.
\end{itemize}
\end{definition}

%\begin{remark}
%In order to follow the notation used in \cite{laurentphenomenon} we use two different hats. The smaller one over a %polynomial denotes an new polynomial. A larger one over a variable denotes the omission of that variable. 
%\end{remark}

\begin{definition}
For each seed $(\mathbf{x},\mathbf{P})$ there is a %different 
mutation $\mu_k$  for each $k\in\{1,\ldots, m\}$, producing a new seed $\mu_k((\mathbf{x},\mathbf{P}))=(\mathbf{x}',\mathbf{P}')$. The process of mutation is defined in the following steps:

\begin{enumerate}  
\item
Define the exchange Laurent polynomials $\{\hat{P}_1,\ldots , \hat{P}_m\}\subset \mathbb{Z}[x_1^{\pm 1},\ldots , x_m^{\pm 1}]$ 
to be the unique polynomials such that 
\begin{itemize}
\item 
$\hat{P}_j=P_j \prod_{1\leq i\leq n, i\neq j} x_i^{a_i}$%\ldots \widehat{x_j}\ldots x_n^{a_m}P_j \quad$ 
for each $j$ and  $a_i\in \mathbb{Z}_{\leq 0}$ for each $i$; 
\item for $i\neq j$, 
\[
\hat{P}_i|_{x_j \leftarrow \frac{P_j}{x}}\in \mathbb{Z}[x_1^{\pm 1},\ldots, x_{j-1}^{\pm 1},x^{\pm 1}, x_{j+1}^{\pm 1},\ldots, x_m^{\pm 1}]
\]
and this polynomial is not divisible by $P_j$ in this ring.
\end{itemize}
\item 
The new cluster is $\mathbf{x}'=\mu_k({\bf x}) 
=\{x_1,\ldots, x'_k, \ldots , x_m\}$ where $x'_k:= \frac{\hat{P}_k}{x_k}$
\item 
Now define polynomials 
\[
G_j:=P_j|_{x_k\leftarrow \frac{\hat{P}_k|_{x_j\leftarrow 0}}{x'_k}}
\]
\item 
For each $j$, remove all common factors with $\hat{P}_k|_{x_j\leftarrow 0}$ from $G_j$ in the 
unique factorization domain %UFD \newline 
$\mathbb{Z}[x_1,\ldots ,\widehat{x_k},\ldots ,\widehat{x_j},\ldots ,x_m]$, with the hats denoting 
omitted variables. Denote the polynomials obtained in this way by $H_j$. 
\item
The new exchange polynomials are $P'_j=H_jM_j$, where $M_j$ is the unique Laurent monomial in $\mathbb{Z}[x_1,\ldots,x_{k-1},x'_k,x_{k+1},\ldots , x_m]$ such that $P'_j$ is not divisible by any Laurent monomial in this ring.
\item
The new seed is $(\mathbf{x}',\mathbf{P}')=(\{x_1,\ldots,x'_k,\ldots, x_m\},\{P'_1,\ldots , P'_m\})$
\end{enumerate}
\end{definition}

\begin{definition} 
Two seeds are said to be mutation equivalent if one can be obtained from the other via a finite sequence of mutations. 
For a choice of initial  seed $(\mathbf{x},\mathbf{P})$,  the LP algebra $\mathcal{A}=\mathcal{A}(\mathbf{x},\mathbf{P})$ 
is the subalgebra of $\mathbb{Q}(x_1,\ldots,x_m)$ generated by all cluster variables in seeds that are mutation equivalent to the initial seed. Evidently this does not depend on the choice of initial seed.
\end{definition}

The somewhat convoluted construction of LP mutation ensures that the proof of the Laurent property of cluster algebras, via the caterpillar lemma in \cite{clusteri}, is still valid in the more general LP case.

\begin{theorem}[\cite{laurentphenomenon}, Theorem 5.1]\label{LaurentProperty}
Each of the cluster variables in the LP algebra is a Laurent polynomial in the cluster variables of an initial seed, 
i.e.\ for any seed $(\{x_1,\ldots x_m\},\mathbf{P})$ we have 
$\mathcal{A}\subset \mathbb{Z}[x_1^{\pm 1},\ldots x_m^{\pm 1}]$.
\end{theorem}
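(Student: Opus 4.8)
The plan is to follow the strategy of Fomin and Zelevinsky's original proof of the Laurent phenomenon via the caterpillar lemma \cite{clusteri}, checking at each stage that the extra data built into LP mutation makes the argument go through. The global statement $\mathcal{A}\subset\mathbb{Z}[x_1^{\pm1},\ldots,x_m^{\pm1}]$ will follow once we know that every cluster variable produced by a finite sequence of mutations is a Laurent polynomial in a fixed initial cluster $\mathbf{x}$. I would argue by induction on the length of the mutation sequence, but the naive induction fails: a single mutation $x'_k=\hat P_k/x_k$ introduces the denominator $x_k$, and one has no \emph{a priori} control over what a later mutation does to it. The remedy, as in the cluster case, is to organise the mutations on a caterpillar tree and to strengthen the inductive hypothesis so that it carries a coprimality statement alongside bare Laurentness.

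First I would set up the local configuration. Consider a cluster variable $z$ obtained at the end of a mutation path and examine its last two steps, say three consecutive seeds $\Sigma_0,\Sigma_1,\Sigma_2$ with $\Sigma_1=\mu_i(\Sigma_0)$ and $\Sigma_2=\mu_j(\Sigma_1)$, where $i\neq j$ (the case $i=j$ is trivial, since $\mu_i$ is an involution). By the inductive hypothesis applied to the subtree based at $\Sigma_1$, $z$ is a Laurent polynomial in the cluster $\mathbf{x}(\Sigma_1)$, which agrees with $\mathbf{x}(\Sigma_0)$ except that $x_i$ is replaced by $\hat P_i/x_i$. Substituting, each occurrence of this variable with a negative exponent contributes a denominator that is a power of $\hat P_i$, so one obtains
\[
z=\frac{L}{\hat P_i^{\,a}},\qquad L\in\mathbb{Z}[x_1^{\pm1},\ldots,x_m^{\pm1}],\ a\ge 0,
\]
where $\hat P_i$ does not involve $x_i$. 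The caterpillar tree provides a second, independent route to $z$ in which a mutation in direction $j$ is performed first; running the same substitution along that route expresses $z$ as $L'/D^{\,b}$, with $L'$ again Laurent and $D$ the corresponding exchange Laurent polynomial (a power of $\hat P_j$, up to a monomial).

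The decisive step is then purely algebraic: I would show that $\hat P_i$ and $D$ are coprime in the unique factorization domain $\mathbb{Z}[x_1^{\pm1},\ldots,x_m^{\pm1}]$. Granting this, a rational function equal both to $L/\hat P_i^{\,a}$ and to $L'/D^{\,b}$ with coprime denominators must itself be a Laurent polynomial, which closes the induction and yields the theorem. This coprimality is exactly what the conditions in the definition of mutation are engineered to guarantee: the requirement that $\hat P_i|_{x_j\leftarrow P_j/x}$ not be divisible by $P_j$, together with the removal of common factors in passing from $G_j$ to $H_j$ (step~4) and the normalising monomial $M_j$ (step~5), ensures that the exchange Laurent polynomials attached to the two edges share no common factor. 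The main obstacle is precisely the \emph{preservation} of this coprimality under mutation: one must verify that the strengthened inductive hypothesis (Laurentness plus coprimality of the relevant exchange data) is reproduced after a step, which requires tracking carefully how the substitution $x_k\leftarrow\hat P_k|_{x_j\leftarrow 0}/x'_k$ in steps~3--5 interacts with irreducibility and factorisation in the ambient polynomial rings. Once this bookkeeping is in place the remainder of the argument is formal, reproducing the caterpillar lemma of \cite{clusteri} in the LP setting.
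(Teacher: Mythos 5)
The paper offers no proof of this statement: it is quoted as Theorem 5.1 of Fomin--Zelevinsky \cite{laurentphenomenon} (in the form used for LP algebras in \cite{lp}), with only the remark that the caterpillar-lemma argument of \cite{clusteri} survives the LP mutation rule, and your sketch follows exactly that route. Note, however, that the coprimality statement you defer --- that the exchange Laurent polynomials attached to adjacent edges of the caterpillar share no common factor in $\mathbb{Z}[x_1^{\pm1},\ldots,x_m^{\pm1}]$, and that this is reproduced after each mutation --- is the entire substantive content of the cited proof, so what you have is an accurate outline of the strategy rather than a self-contained argument.
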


\subsection{Recurrence relations from period $1$ seeds}

Following \cite{laurentphenomenonsequences}, 
we now show how the notion of periodic seeds for cluster algebras, 
introduced in  \cite{fordymarsh}, may be generalized to LP algebras, in the special case where the period is $1$. 
Periodic seeds may be used to show that the iterates of certain recurrence relations correspond 
to mutations in an LP algebra, hence satisfying the conditions of Theorem \ref{LaurentProperty}. This proves the Laurent property for these recurrences. In the definition above, we considered unordered seeds, but when 
we consider recurrence relations it is helpful to fix an ordering. 

\begin{definition}[Period $1$ seed]
Let 
$(\mathbf{x},\mathbf{P})= 
\Big((x_1,\ldots,x_{m}),(P_1,\ldots,P_{m})\Big)$ be a seed 
(where the cluster variables and exchange polynomials are ordered according to 
their subscript), %\qquad
%and 
let  
$$ (\mathbf{x}',\mathbf{P}')=
\rho\circ \mu_1(\mathbf{x},\mathbf{P}) 
=\Big((x'_2,\ldots,x'_{m},x_1'),(P'_2,\ldots,P'_{m},P_1')\Big)
%=\{(x'_2,\ldots,x'_{m},x'_1),(P'_2,\ldots,P'_{m},P'_1)\}
$$
be the seed obtained from it by  applying the mutation $\mu_1$ and then reordering the 
variables with a cyclic permutation $\rho$, and define $x_{m+1}=x_1'$. 
The seed $(\mathbf{x},\mathbf{P})$ is called periodic with period $1$ if 
\beq\label{condns} 
P'_1=\mathcal{S}P_{m} \qquad  and \qquad 
P'_i=\mathcal{S}P_{i-1}\quad  for \quad 2\leq i \leq m, 
\eeq  
where the shift operator $\mathcal{S}$ increases the subscripts on each of the $x_i$ appearing by one. 
\end{definition}

% Note the reordering of the seed after mutation. 

The importance of the above definition is due to the following result, which is Corollary 2.5 in 
\cite{laurentphenomenonsequences}, but for completeness we sketch the proof here. 

\begin{proposition}\label{period1seed}
If $(\mathbf{x},\mathbf{P})= 
\Big((x_1,\ldots,x_{m}),(P_1,\ldots,P_{m})\Big)$
is a period $1$ seed with $P_1=P(x_2,\ldots, x_m)$,  then the iterates of recurrence
\begin{equation}\label{genericrecurrence}
x_nx_{n+m}=P(x_{n+1},\ldots,x_{n+m-1})
\end{equation}
are Laurent polynomials $x_1,\ldots, x_m$ with integer coefficients. %in the first $m$ variables. 
\end{proposition}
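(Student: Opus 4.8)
The plan is to show that the recurrence (\ref{genericrecurrence}) is realized by iterated mutations in the LP algebra attached to the period $1$ seed $(\mathbf{x},\mathbf{P})$, so that the Laurent property follows directly from Theorem \ref{LaurentProperty}. Concretely, I would define a sequence of seeds by repeatedly applying the composite operation $\rho\circ\mu_1$, setting $(\mathbf{x}^{(0)},\mathbf{P}^{(0)})=(\mathbf{x},\mathbf{P})$ and $(\mathbf{x}^{(j)},\mathbf{P}^{(j)})=\rho\circ\mu_1(\mathbf{x}^{(j-1)},\mathbf{P}^{(j-1)})$. The key claim is that the cluster variables produced in this way are exactly the iterates $x_1,x_2,x_3,\ldots$ of the recurrence, with $x_{m+j}$ being the new variable created at the $j$th step.

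First I would unwind the period $1$ condition (\ref{condns}) to see its dynamical meaning. At the initial seed, the mutation $\mu_1$ replaces $x_1$ by $x_1'=\hat{P}_1/x_1$; since $P_1=P(x_2,\ldots,x_m)$ does not involve $x_1$, the exchange Laurent polynomial $\hat P_1$ coincides with $P_1$ up to a Laurent monomial in the other variables, and I would argue (this is the one genuine computation to pin down) that under the normalization in the definition one in fact gets $x_1 x_1' = P(x_2,\ldots,x_m)$, i.e.\ exactly the $n=1$ instance of (\ref{genericrecurrence}) with $x_{m+1}:=x_1'$. Then I would check that the conditions $P'_1=\mathcal{S}P_m$ and $P'_i=\mathcal{S}P_{i-1}$ for $2\le i\le m$ say precisely that, after the cyclic relabelling $\rho$, the new seed $(\mathbf{x}^{(1)},\mathbf{P}^{(1)})=\big((x_2,\ldots,x_m,x_{m+1}),(P_2,\ldots,P_m,P_1')\big)$ has its exchange polynomials equal to the $\mathcal{S}$-shifts of the old ones. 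In other words, the composite $\rho\circ\mu_1$ acts as a single shift $n\mapsto n+1$ on the whole data, so the seed $(\mathbf{x}^{(j)},\mathbf{P}^{(j)})$ is the $\mathcal S$-shift by $j$ of the initial seed, with cluster $(x_{1+j},\ldots,x_{m+j})$.

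The heart of the argument is therefore an induction on $j$ establishing this shift-covariance: assuming $(\mathbf{x}^{(j)},\mathbf{P}^{(j)})=\mathcal S^{\,j}(\mathbf{x},\mathbf{P})$, I would apply $\rho\circ\mu_1$ once more and use (\ref{condns}) to conclude the same for $j+1$, with the newly created variable satisfying $x_{j+1}x_{m+j+1}=P(x_{j+2},\ldots,x_{m+j})$, i.e.\ the $n=j+1$ case of (\ref{genericrecurrence}). This identifies every iterate $x_n$ of the recurrence with a cluster variable in a seed mutation equivalent to the initial one.

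The main obstacle I expect is bookkeeping rather than anything deep: one must verify that the abstract mutation rule for the exchange Laurent polynomials $\hat P_k$, and the steps producing $G_j$, $H_j$ and the monomial normalizations $M_j$, really do collapse to the clean shift described above, and in particular that $\hat P_1=P_1$ (no negative powers are introduced) precisely because $P_1=P(x_2,\ldots,x_m)$ is a polynomial not involving $x_1$ and the period $1$ condition forces the requisite coprimality. Once this compatibility is checked, the Laurent property is immediate: by Theorem \ref{LaurentProperty} every cluster variable lies in $\mathbb{Z}[x_1^{\pm1},\ldots,x_m^{\pm1}]$, and since each iterate $x_n$ is such a cluster variable, all iterates of (\ref{genericrecurrence}) are Laurent polynomials in $x_1,\ldots,x_m$ with integer coefficients, as claimed.
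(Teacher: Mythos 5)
Your proposal follows essentially the same route as the paper's own (sketched) proof: realize each iterate $x_{m+j}$ as the cluster variable created by the $j$th mutation of the period $1$ seed, use the shift conditions (\ref{condns}) to propagate the exchange polynomial $P$ along the orbit, and then invoke Theorem \ref{LaurentProperty}. You are in fact slightly more careful than the paper in flagging that one must check $\hat{P}_1=P_1$ (no negative monomial factors) so that the mutation reproduces the recurrence exactly, a point the paper's proof passes over silently.
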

\begin{proof}
Since $P'_2=\mathcal{S}P_1$ we have $P'_2=P(x_3,\ldots x_{m+1})$.
Applying the 
mutation $\mu_1$ to the  period $1$ seed $(\mathbf{x},\mathbf{P})$ 
gives $$x_1'=\frac{P(x_2,\ldots,x_{m})}{x_1},$$ 
and  setting $x_{m+1}=x_1'$ agrees with the first iteration of the proposed 
recurrence (\ref{genericrecurrence}). After applying the cyclic permutation $\rho$ to reorder 
the variables and exchange polynomials,  the new seed is 
$$ 
\rho\circ \mu_1(\mathbf{x},\mathbf{P})  =
\Big((x_2,\ldots,x_{m},x_{m+1}),(P'_2,\ldots,P'_{m},P_1')\Big), 
$$ 
with new exchange polynomials given by (\ref{condns}).
Now applying the mutation $\mu_2$ gives a new cluster variable 
\[
x_2''=\frac{P'_2(x_3,\ldots,x_{m+1})}{x_2}=\frac{P(x_3,\ldots,x_{m+1})}{x_2}, 
\]
which is defined to be $x_{m+2}$, and produces the new seed 
$$ 
\rho\circ \mu_2(\mathbf{x}',\mathbf{P}')  =
\Big((x_3,\ldots,x_{m},x_{m+1},x_{m+2}),(P''_3,\ldots,P''_{m},P_{1}'',P_2'')\Big), 
$$ 
where 
$$ 
P''_1=\mathcal{S}P'_{m}, \qquad  %P''_2=\mathcal{S}P_{1}', \qquad 
P''_i=\mathcal{S}P_{i-1}'\quad  \mathrm{for} \quad 2\leq i \leq m. 
$$
Continuing to apply  consecutive mutations $\mu_3$, $\mu_4$, and so on, 
one can see that this will give precisely the iterates of (\ref{genericrecurrence}). 
Since these iterates are given by compositions of mutations they belong to the LP algebra 
$\mathcal{A}$ generated by the seed $(\mathbf{x},\mathbf{P})$, 
hence are Laurent polynomials in the initial cluster variables by Theorem \ref{LaurentProperty}.
\end{proof}

\subsection{Little Pi from a period $1$ seed}

The Little Pi recurrences (\ref{firstlittlepi}) are included in many examples of the form (\ref{genericrecurrence}) found in \cite{laurentphenomenonsequences} that can be shown to have the Laurent property by 
describing them in terms of 
successive  mutations of a period 1 seed, as in Proposition \ref{period1seed}. 
In order to apply this result they construct the ``intermediate polynomials", that is to say, the other 
exchange polynomials that appear in the period $1$ seed, such that the shifting conditions 
(\ref{condns}) hold. For Little Pi,  this construction is split in to four cases, which we list below.

For convenience, we slightly change the notation compared with the above discussion, where 
we followed \cite{laurentphenomenon} in labelling a cluster of size $m$ with indices from $1$ to $m$. To be consistent 
with \cite{laurentphenomenonsequences}, below we label the initial cluster variables $x_i$ 
and exchange polynomials $P_i$  with indices $0\leq i\leq m-1$.  
Note that the inclusion of the coefficient $a$ in (\ref{firstlittlepi}) means that the Laurent property takes the form 
$$x_n\in \Z [a, x_0^{\pm 1}, x_1^{\pm 1},\ldots, x_{2k+l-1}^{\pm 1}],$$
but in fact in the next section we will take $a\to 1$. 
(More details of the Laurent phenomenon over a ring of coefficients are provided in \cite{laurentphenomenon}.)   
Only the polynomials $P_j$ for $j\in J:=\{0,k,2k,l,k+l\}$ are given here. To find the intermediate polynomial 
$P_i$ for any $i$, take the largest $j\in J$ with $j\leq i$ and shift $P_j$ up by $i-j$, so that 
$P_i=\mathcal{S}^{i-j}P_j$. Note that in all cases we have $P_0=P$. %and $n$ has been set to $0$ for readability.

\begin{itemize}
\item For $l>2k$:
\[
P_k=ax_0x_{2k}+ax_{2k}x_{l}+x_0x_{3k}x_{k+l}+a^2_{k+l}, 
\]
\[
P_{2k}=ax_0x_{3k}+ax_{l-k}x_{k+l}+x_0x_{l-2k}x_{k+l}+a^2x_0, 
\]
\[
P_{l}=ax_kx_{l-k}+ax_{l-k}x_{k+l}+x_0x_{l-2k}x_{k+l}+a^2x_0, 
\]
\[
P_{k+l}=ax_0+ax_{l}+x_kx_{l-k}. 
\]
\item For $l=2k$:
\[
P_k=ax_0x_{2k}+ax^2_{2k}+x_0x^2_{3k}+a^2x_{3k}, 
\]
\[
P_{2k}=ax^2_{k}+ax_kx_{3k}+x_0^2x_{3k}+a^2x_0, 
\]
\[
P_{3k}=ax_0+ax_{2k}+x^2_{k}.
\]
\item For $2k>l>k$:
\[
P_k=ax_0x_{2k}+ax_{2k}x_{l}+x_0x_{3k}x_{k+l}+a^2x_{k+l}, 
\]
\[
P_{k+l}=x_0x_{l-k}x_{k+l}+x_0x_{2l-k}x_{k+l}+x_{l-k}x_kx_{2l}+x_{l-k}x_{k+l}x_{2l}+ax_0x_{2l}, 
\]
\[
P_{2k}=ax_kx_{3k-l}+ax_kx_{3k}+x_0x_kx_{3k}+a^2x_{2k-l}, 
\]
\[
P_{k+l}=ax_0+ax_{l}+x_kx_{l-k}.
\]
\item For $k>l$:
\[
P_{l}=x_{2l}x_k+x_{2l}x_{k+l}+x_0x_{2k}+x_0x_{k+2l}, 
\]
\[
P_k=x_0x_{k+l}x_{2k-l}+x_0x_{k+l}x_{2k}+x_0x_{k-l}x_{2k}+x_{l}x_{k-l}x_{2k}+ax_{k-l}x_{k+l}, 
\]
\[
P_{k+l}=x_{l}x_k+x_{k+2l}x_{l}+x_0x_{k+2l}+x_kx_{2l}, 
\]
\[
P_{2k}=ax_{k-l}+ax_k+x_0x_{2k-l}. 
\]
\end{itemize}

\section{Linear relations with periodic coefficients for Little Pi}\label{littlepilin}

Henceforth we shall work with the Little Pi family of recurrences in the form 
\begin{equation}\label{10}
x_nx_{n+2k+l}=x_{n+2k}x_{n+l}+x_{n+k}+x_{n+k+l}, 
\end{equation}
which is obtained from (\ref{firstlittlepi}) after 
rescaling $x_n\rightarrow ax_n$.  
These generalize the family found by Heideman and Hogan \cite{heidemanhogan}, 
which is the case $l=1$. 
%in  gives the recurrence (\ref{10}), so these two forms are equivalent.  
%First we need to prove that $\hat{\Psi}_n$ has determinant $0$, via proving that its $3 \times 3$ analogue
In order to find linear relations, we begin by showing that the $3 \times 3$ matrix 
\beq\label{psidef}
\Psi_n:=
\begin{bmatrix}
x_n & x_{n+2k} & x_{n+4k}  \\
x_{n+l} & x_{n+2k+l} & x_{n+4k+l}  \\
x_{n+2l} & x_{n+2k+2l} & x_{n+4k+2l} 
\end{bmatrix}
\eeq 
has a non-zero periodic determinant. 
%The non-zero determinant ensures that $\hat{\Psi}_n$ is the smallest matrix of its shape with determinant zero. To tidy up the following calculations we'll define
For convenience we set 
\[
z_n:=x_n+x_{n+l}, 
\]
and 
%It will be useful to 
note the following 
two identities which %follow from 
are a consequence of (\ref{10}): 
\begin{equation}\label{8}
z_nx_{n+2k+l}=x_{n+l}z_{n+2k}+z_{n+k}, 
\end{equation}
\begin{equation}\label{9}
z_nx_{n+2k}=x_nz_{n+2k}-z_{n+k}. 
\end{equation}

\begin{lemma}
The $3\times 3$ determinant
\begin{equation*}
\delta_n:=|\Psi_n|=\begin{vmatrix}
x_n & x_{n+2k} & x_{n+4k} \\
x_{n+l} & x_{n+2k+l} & x_{n+4k+l} \\
x_{n+2l} & x_{n+2k+2l} & x_{n+4k+2l}
\end{vmatrix}
\end{equation*}
has period $k$.
\end{lemma}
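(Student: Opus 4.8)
The plan is to exploit the fact that the defining recurrence (\ref{10}) is itself a $2\times 2$ determinant identity, so that $\delta_n$ can be collapsed by Dodgson condensation. Rewriting (\ref{10}) as $x_nx_{n+2k+l}-x_{n+l}x_{n+2k}=x_{n+k}+x_{n+k+l}=z_{n+k}$ shows that every ``adjacent'' $2\times 2$ minor of $\Psi_n$ (with row‑shift $l$ and column‑shift $2k$) equals a single $z$, namely
\[
\begin{vmatrix} x_m & x_{m+2k}\\ x_{m+l} & x_{m+2k+l}\end{vmatrix}=z_{m+k}.
\]
First I would identify the four corner $2\times 2$ minors of $\Psi_n$ and its central $1\times 1$ minor: the corners are, by the displayed identity, $z_{n+3k+l}$, $z_{n+k}$, $z_{n+k+l}$ and $z_{n+3k}$, while the central minor is simply $x_{n+2k+l}$. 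Applying the Desnanot--Jacobi relation (\ref{dc}) to $M=\Psi_n$ then yields the clean reduction
\[
\delta_n\,x_{n+2k+l}=z_{n+k}\,z_{n+3k+l}-z_{n+k+l}\,z_{n+3k}
=\begin{vmatrix} z_{n+k} & z_{n+3k}\\ z_{n+k+l} & z_{n+3k+l}\end{vmatrix}=:D_{n+k},
\]
so that $\delta_n$ is determined by a $2\times 2$ determinant $D_m$ in the $z$‑variables having exactly the same row/column shift pattern as the $x$‑minors, but based at $n+k$. Since the initial data are algebraically independent, all $x_m$ are nonzero in $\Q(x_0,\dots,x_{2k+l-1})$ and the division is legitimate.

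With this in hand, shifting $n\to n+k$ gives $\delta_{n+k}\,x_{n+3k+l}=D_{n+2k}$, so the periodicity claim $\delta_{n+k}=\delta_n$ is equivalent to the first‑order relation
\[
x_{n+k+l}\,D_{n+k}=x_{n+2k+l}\,D_{n}
\]
for the reduced determinant $D_m$. This is where the two auxiliary identities (\ref{8}) and (\ref{9}) enter: read as mixed $x$–$z$ minor identities, they express $z_{m+k}=x_mz_{m+2k}-z_mx_{m+2k}=z_mx_{m+2k+l}-x_{m+l}z_{m+2k}$, which let me rewrite the $z$‑entries of $D_{n+k}$ in terms of $x$'s and $z$'s based two steps lower, and then reduce the resulting products again via the $2\times 2$ identity above. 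Establishing this first‑order relation is the whole content of the lemma; once it holds, periodicity of $\delta_n$ with period $k$ follows immediately, and nonvanishing of $\delta_n$ is inherited from the nonvanishing of the $x_m$.

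The main obstacle, I expect, is that expanding $D_m$ produces, alongside the ``adjacent'' minors $z_{m+k}$ and $z_{m+k+l}$ supplied directly by (\ref{10}), the \emph{non‑adjacent} double minor $E_m:=x_mx_{m+2k+2l}-x_{m+2k}x_{m+2l}$, whose row‑shift is $2l$ rather than $l$. This term is not an instance of (\ref{10}) and cannot be collapsed in one step, so the cancellation needed for the first‑order relation is genuinely the crux. I would handle it either by applying (\ref{8}) and (\ref{9}) twice (multiplying through by a suitable $z_m$ to linearise the products before reducing), or by a secondary Dodgson condensation on the $z$‑array, choosing whichever route makes the $E_m$ contributions on the two sides of $x_{n+k+l}D_{n+k}=x_{n+2k+l}D_n$ match identically. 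All remaining manipulations are routine polynomial identities consequent on (\ref{10}).
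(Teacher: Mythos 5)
Your setup coincides exactly with the paper's: the same rewriting of (\ref{10}) as a $2\times 2$ minor identity, the same Dodgson condensation giving $x_{n+2k+l}\delta_n=D_{n+k}$ with $D_{n+k}$ the $2\times 2$ determinant in the $z$'s (the paper's $\delta'_{n+k}$ in (\ref{3})), and the correct observation that everything reduces to the first-order relation $x_{n+k+l}D_{n+k}=x_{n+2k+l}D_n$. However, the decisive step --- actually proving that relation --- is left as a hedge between two untried strategies, and you explicitly flag it as the unresolved crux. That is a genuine gap: without it the lemma is not proved. Moreover, the obstacle you anticipate (the non-adjacent minor $E_m=x_mx_{m+2k+2l}-x_{m+2k}x_{m+2l}$) never arises if the computation is organised as a column operation rather than a full expansion of $D_m$ into $x$'s, so the difficulty you are budgeting for is illusory and the hedging is unnecessary.

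The paper closes the gap as follows. Multiply the \emph{first} column of $D_{n+k}$ by $x_{n+3k+l}$ and apply (\ref{8}) and (\ref{9}) entrywise, so that column becomes $\bigl(x_{n+k+l}z_{n+3k}+z_{n+2k},\; x_{n+k+l}z_{n+3k+l}-z_{n+2k+l}\bigr)^{T}$; subtracting $x_{n+k+l}$ times the second column leaves
\[
x_{n+3k+l}D_{n+k}=\begin{vmatrix} z_{n+2k} & z_{n+3k}\\ -z_{n+2k+l} & z_{n+3k+l}\end{vmatrix}.
\]
The symmetric manipulation of the \emph{second} column gives
\[
x_{n+k+l}D_{n+k}=\begin{vmatrix} z_{n+k} & -z_{n+2k}\\ z_{n+k+l} & z_{n+2k+l}\end{vmatrix},
\]
and shifting this by $k$ shows that both right-hand sides equal $z_{n+2k}z_{n+3k+l}+z_{n+2k+l}z_{n+3k}$, whence $x_{n+3k+l}D_{n+k}=x_{n+2k+l}D_{n+2k}$, which is your first-order relation (up to a shift by $k$) and gives $\delta_n=\delta_{n+k}$. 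Note that at no point is $D_m$ expanded into individual $x$-monomials, so no $2l$-shifted minors appear; this is the single missing computation in your write-up.
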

\begin{proof} 
First observe that %thing to notice from 
(\ref{10}) 
can be rewritten as  
\[
\begin{vmatrix}
x_n & x_{n+2k} \\
x_{n+l} & x_{n+2k+l}
\end{vmatrix}
=z_{n+k}, 
\] 
so using Dodgson condensation, as given by (\ref{dc}) with $N=3$, 
we may write
\begin{equation}\label{3}
x_{n+2k+l}\delta_n=%\frac{1}{x_{n+2k+l}}
\begin{vmatrix}
z_{n+k} & z_{n+3k} \\
z_{n+k+l} & z_{n+3k+l}
\end{vmatrix}=:\delta'_{n+k}.
\end{equation}
Upon scaling the first column by $x_{n+3k+l}$, we see that 
the $2\times 2$ determinant $\delta'_{n+k}$ in (\ref{3}) satisfies 
\begin{equation*}
x_{n+3k+l}\delta'_{n+k}=
\begin{vmatrix}
z_{n+k}x_{n+3k+l} & z_{n+3k} \\
z_{n+k+l}x_{n+3k+l} & z_{n+3k+l}
\end{vmatrix}.
\end{equation*}
Then we can use (\ref{8}) and (\ref{9}) on the left column to obtain 
\begin{equation}\label{14}
x_{n+3k+l}\delta'_{n+k}=
\begin{vmatrix}
x_{n+k+l}z_{n+3k}+z_{n+2k} & z_{n+3k} \\
x_{n+k+l}z_{n+3k+l}-z_{n+2k+l} & z_{n+3k+l}
\end{vmatrix}
=
\begin{vmatrix}
z_{n+2k} & z_{n+3k} \\
-z_{n+2k+l} & z_{n+3k+l}
\end{vmatrix}, 
\end{equation}
and 
by the same token, but instead manipulating the right column in (\ref{3}), we have
\begin{equation} \label{dprime} 
x_{n+k+l}\delta'_{n+k}=
\begin{vmatrix}
z_{n+k} & -z_{n+2k} \\
z_{n+k+l} & z_{n+2k+l}
\end{vmatrix}. 
\end{equation} 
Shifting up $n \to n+k$ and comparing with (\ref{14}) we arrive at  
\[
\frac{\delta'_{n+k}}{x_{n+2k+l}}=\frac{\delta'_{n+2k}}{x_{n+3k+l}}, 
\]
so these ratios are periodic with period $k$, which is the required result.
\end{proof}

\begin{lemma} \label{nonzerolemma}
For each $n$ the determinant $\delta_n=|\Psi_n|$ is non-zero, 
considered as an element of $\mathbb{Q}(x_0,x_1,\ldots,x_{2k+l-1})$, the ambient field of fractions in the initial data 
for (\ref{10}). 
\end{lemma}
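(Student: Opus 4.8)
The plan is to show that the determinant $\delta_n=|\Psi_n|$ does not vanish identically as a rational function of the initial data $x_0,\ldots,x_{2k+l-1}$. Since the previous lemma establishes that the ratio $\delta_n/\delta_{n+k}$ (more precisely, that $\delta'_{n+k}/x_{n+2k+l}$ is $k$-periodic, and by (\ref{3}) relates consecutive $\delta_n$), it suffices to prove $\delta_n\neq 0$ for a single representative $n$ in each residue class modulo $k$, or even for just one value of $n$ if the periodicity can be leveraged to propagate non-vanishing. The first step is therefore to reduce the claim to checking finitely many determinants, using relation (\ref{3}): because $x_{n+2k+l}\delta_n=\delta'_{n+k}$ and the iterates are nonzero Laurent polynomials, $\delta_n$ vanishes iff $\delta'_{n+k}$ does, and the $k$-periodicity of $\delta'_{n+k}/x_{n+2k+l}$ ties all these together.

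The core of the argument will be an explicit non-vanishing check. The cleanest route is a specialization (evaluation) argument: since $\delta_n\in\Q(x_0,\ldots,x_{2k+l-1})$ is a rational function, it is nonzero as a rational function if and only if it is nonzero for some choice of numerical values of the initial data. So I would substitute convenient values — for instance setting all initial variables equal to a common value, or choosing a simple arithmetic/geometric pattern — and compute the resulting numerical determinant, exhibiting that it is nonzero. The freedom here is that the initial data $x_0,\ldots,x_{2k+l-1}$ are algebraically independent, so any single valid specialization that avoids the denominators and yields a nonzero number certifies that $\delta_n\not\equiv 0$. One must take care that the chosen specialization keeps the recurrence (\ref{10}) well-defined (no division by zero when generating the further iterates $x_{n+4k}$, etc., that appear in $\Psi_n$), which constrains the allowable substitutions.

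An alternative, more structural approach would be to track the leading behavior of $\delta_n$ with respect to one of the initial variables. Treating $\delta_n$ as a Laurent polynomial in, say, $x_0$ (with coefficients in the remaining variables), I would identify a single monomial that appears with a nonzero coefficient and cannot be cancelled by any other term; isolating its degree and coefficient then forces $\delta_n\neq 0$. This is essentially a Newton-polygon / extremal-monomial argument, and it has the advantage of working uniformly in $k$ and $l$ rather than case by case.

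The main obstacle I anticipate is uniformity over all coprime pairs $(k,l)$: the entries of $\Psi_n$ are themselves high iterates whose explicit Laurent expansions grow complicated, so a brute-force numerical check at a single seed must still be justified to be valid for every $k,l$ simultaneously, and a naive specialization might accidentally land on a zero of $\delta_n$ for special $(k,l)$. I expect the resolution to hinge on choosing a specialization (or an extremal variable) whose non-vanishing can be argued symbolically in $k$ and $l$ — for example, exploiting that in the chosen limit the matrix $\Psi_n$ degenerates to a manifestly non-singular structured matrix (such as one whose determinant factors into a product of nonzero iterates via repeated use of (\ref{3}), (\ref{8}), (\ref{9})). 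Getting that degeneration to be transparent, rather than requiring a separate computation for each of the four cases listed in Section~\ref{lpalgebras}, is where the real work lies.
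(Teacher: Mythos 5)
Your proposal correctly identifies the right general strategy --- certify $\delta_n\not\equiv 0$ by evaluating at a suitable specialization of the initial data --- but it stops exactly where the actual proof has to happen. You list two candidate routes (a numerical check at a single seed, or an extremal-monomial argument) and then explicitly concede that making either one work uniformly in $k$ and $l$ ``is where the real work lies.'' That missing step is the whole content of the lemma, so as it stands the argument has a genuine gap: no specialization is exhibited, no monomial is isolated, and the concerns you raise (accidentally landing on a zero of $\delta_n$, keeping the recurrence well-defined along the orbit) are real and unresolved.

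The idea that closes the gap is positivity, and it is simpler than either of your routes. Substitute \emph{arbitrary} positive real initial values $x_0,\dots,x_{2k+l-1}>0$; from the form of (\ref{10}), $x_{n+2k+l}=(x_{n+2k}x_{n+l}+x_{n+k}+x_{n+k+l})/x_n$, an easy induction gives $x_n>0$ for all $n\in\Z$, hence $z_n>0$ for all $n$. This disposes of the well-definedness worry (no denominator ever vanishes) and gives not one specialization but an open cone of them, so there is no danger of accidentally hitting a zero for special $(k,l)$. Now if $\delta_n=0$ at such a point then $\delta'_{n+k}=0$ by (\ref{3}), but expanding the determinant in (\ref{dprime}) gives $x_{n+k+l}\,\delta'_{n+k}=z_{n+k}z_{n+2k+l}+z_{n+2k}z_{n+k+l}>0$, a contradiction. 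This is precisely the ``manifestly non-singular structured'' degeneration you were hoping for, realized through the identities (\ref{8})--(\ref{9}) you cite; note also that the reduction to one $n$ per residue class mod $k$ that you propose is unnecessary, since the positivity argument applies to every $n$ directly.
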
 
\begin{proof} 
Without assuming the Laurent property, a priori the iterates of (\ref{10}) are rational functions of the 
initial data with rational numbers as coefficients, and the same is true for the determinant $\delta_n$. Let us consider the case of 
substituting real positive initial values $x_n>0$ for $n=0,\ldots,2k+l-1$. It follows by induction that $x_n>0$ for all $n\in\mathbb{Z}$, 
hence also  $z_n>0$ for all $n$. If $\delta_n$ vanishes for some $n$ then $\delta_{n+k}'$ vanishes, by (\ref{3}), but then 
$$ 
z_{n+k}z_{n+2k+l}+z_{n+2k}z_{n+k+l}=0
$$ 
by (\ref{dprime}), which is a contradiction. 
Hence $\delta_n$ is a non-zero rational function. 
\end{proof}

%We will find   

We now consider the corresponding $4\times 4$ matrix
\[
\hat{\Psi}_n:=
\begin{bmatrix}
x_n & x_{n+2k} & x_{n+4k} & x_{n+6k} \\
x_{n+l} & x_{n+2k+l} & x_{n+4k+l} & x_{n+6k+l} \\
x_{n+2l} & x_{n+2k+2l} & x_{n+4k+2l} & x_{n+6k+2l} \\
x_{n+3l} & x_{n+2k+3l} & x_{n+4k+3l} & x_{n+6k+3l} 
\end{bmatrix} ,
\]
and use Dodgson condensation once more, with $N=4$ in (\ref{dc}), to calculate  
\[
|\hat{\Psi}_n|=\frac{\delta_{n+k}\delta_{n+2k+l}-\delta_{n+k+l}\delta_{n+2k}}{z_{n+3k+l}}, 
\]
and then by periodicity of $\delta_n$ we have the 
\begin{corollary}
The $4\times 4$ determinant $|\hat{\Psi}_n|$ is identically zero. 
\end{corollary}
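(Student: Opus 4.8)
The plan is to obtain the statement directly from the Dodgson condensation identity displayed just above it, so the only substantive work is the correct identification of the minors of $\hat\Psi_n$; once the formula is in hand the conclusion follows in one line from the period-$k$ Lemma. First I would apply the Desnanot--Jacobi identity (\ref{dc}) with $N=4$ to $\hat\Psi_n$, deleting its first and last rows and columns. Because the $(r,c)$ entry of $\hat\Psi_n$ is $x_{n+rl+2ck}$ for $0\le r,c\le 3$, each of the four corner $3\times 3$ minors is a shifted copy of the matrix $\Psi_m$ in (\ref{psidef}): deleting the last row and column gives $\Psi_n$, deleting the first row and column gives $\Psi_{n+2k+l}$, and the two mixed deletions give $\Psi_{n+l}$ and $\Psi_{n+2k}$. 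Thus
\[
|M^1_1|=\delta_{n+2k+l},\qquad |M^N_N|=\delta_n,\qquad |M^1_N|=\delta_{n+l},\qquad |M^N_1|=\delta_{n+2k}.
\]

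Next I would evaluate the central $2\times 2$ minor. Deleting the first and last rows and columns leaves
\[
|M^{1N}_{1N}|=\begin{vmatrix} x_{n+2k+l} & x_{n+4k+l} \\ x_{n+2k+2l} & x_{n+4k+2l}\end{vmatrix},
\]
and putting $m=n+2k+l$ this is exactly $\begin{vmatrix} x_m & x_{m+2k}\\ x_{m+l} & x_{m+2k+l}\end{vmatrix}=z_{m+k}=z_{n+3k+l}$, using the rewriting of (\ref{10}) already employed in the period-$k$ Lemma. Substituting into (\ref{dc}) reproduces the displayed relation $|\hat\Psi_n|\,z_{n+3k+l}=\delta_{n+2k+l}\delta_n-\delta_{n+l}\delta_{n+2k}$ (which coincides with the stated formula after using $\delta_{n+k}=\delta_n$ and $\delta_{n+k+l}=\delta_{n+l}$).

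The decisive step is the periodicity collapse. Since $\delta_n$ has period $k$, reducing indices modulo $k$ gives $\delta_{n+2k+l}=\delta_{n+l}$ and $\delta_{n+2k}=\delta_n$, so the right-hand side equals $\delta_{n+l}\delta_n-\delta_{n+l}\delta_n=0$. Hence $|\hat\Psi_n|\,z_{n+3k+l}$ is identically zero. To conclude $|\hat\Psi_n|=0$ I would observe that $z_{n+3k+l}$ is not the zero element of $\mathbb{Q}(x_0,\ldots,x_{2k+l-1})$ --- indeed $z_n>0$ for all $n$ whenever the initial values are positive reals, by the same induction as in Lemma \ref{nonzerolemma} --- so it may be cancelled. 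I do not anticipate a genuine obstacle: all of the content is in the index bookkeeping for the four $\delta$'s and the single $z$, and the cancellation is then forced by periodicity.
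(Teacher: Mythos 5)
Your proposal is correct and follows essentially the same route as the paper: apply the Desnanot--Jacobi/Dodgson identity (\ref{dc}) with $N=4$ to $\hat\Psi_n$, identify the four corner minors as shifts of $\delta_n$ and the central minor as $z_{n+3k+l}$, and let the period-$k$ property of $\delta_n$ force the right-hand side to vanish. Your explicit justification that $z_{n+3k+l}\neq 0$ in the ambient field before cancelling is a small extra care the paper leaves implicit, and your index bookkeeping (including the reconciliation with the paper's formula via $\delta_{n+k}=\delta_n$) checks out.
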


Given that $|\hat{\Psi}_n|=0$, we obtain 
linear relations with periodic coefficients 
by considering 
the right and left kernels (i.e.\ the kernel of $\hat{\Psi}_n$ and that of its transpose). 

\begin{remark}
The kernel of $\hat{\Psi}_n$ is one-dimensional, since if it were of dimension greater than one then  $\Psi_n$ 
we would have a non-trivial kernel, contradicting Lemma \ref{nonzerolemma}.
\end{remark}

\begin{proposition}\label{periodickernel}
The iterates of (\ref{10}) satisfy the linear relations  
\begin{equation}\label{rel1} 
x_{n+6k}+K^{(3)}_{n}x_{n+4k}+K^{(2)}_nx_{n+2k}-x_n=0, 
\end{equation}
\begin{equation}\label{rel2} 
x_{n+3l}+\gamma_nx_{n+2l}+\beta_nx_{n+l}+\alpha_nx_n=0, 
\end{equation}
with periodic coefficients: $K^{(2)}_n$ and $K^{(3)}_n$ have period $l$, $\alpha_n$ has period $k$, and $\beta_n$ and $\gamma_n$ have period $2k$.
\end{proposition}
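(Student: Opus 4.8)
The plan is to read off both linear relations from the one-dimensional right and left kernels of $\hat{\Psi}_n$. Since the preceding Corollary gives $|\hat{\Psi}_n|=0$ while Lemma \ref{nonzerolemma} guarantees $\delta_n\neq 0$, the matrix $\hat{\Psi}_n$ has rank exactly $3$; by the Remark both its kernel and the kernel of its transpose are one-dimensional. A right null vector $v(n)$ then yields, reading off the top row of $\hat{\Psi}_n v=0$, a relation among $x_n,x_{n+2k},x_{n+4k},x_{n+6k}$ of the shape (\ref{rel1}); a left null vector $w(n)$ yields, from the first column of $w^{\top}\hat{\Psi}_n=0$, a relation among $x_n,x_{n+l},x_{n+2l},x_{n+3l}$ of the shape (\ref{rel2}).

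I would produce these null vectors explicitly as the cofactor (adjugate) vectors. Taking $v(n)$ to be the vector of signed $3\times 3$ minors obtained by deleting the last row of $\hat{\Psi}_n$, the minors multiplying $x_n$ and $x_{n+6k}$ are obtained by further deleting the first and last columns, and these submatrices are precisely $\Psi_{n+2k}$ and $\Psi_n$; hence those cofactors equal $-\delta_{n+2k}$ and $\delta_n$. The period-$k$ property of $\delta_n$ gives $\delta_{n+2k}=\delta_n$, so after normalising the coefficient of $x_{n+6k}$ to $1$ the coefficient of $x_n$ is exactly $-1$, recovering the precise form of (\ref{rel1}). The same computation for $w(n)$, deleting the last column, identifies the extreme cofactors with $-\delta_{n+l}$ and $\delta_n$, giving $\alpha_n=-\delta_{n+l}/\delta_n$; since $\delta_n$ has period $k$ this already shows $\alpha_n$ has period $k$.

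For the remaining coefficients I would use a shift-and-proportionality argument. Replacing $n$ by $n+l$ shifts the rows of $\hat{\Psi}_n$, so that rows $2,3,4$ of $\hat{\Psi}_n$ coincide with rows $1,2,3$ of $\hat{\Psi}_{n+l}$; these three shared rows form a $3\times 4$ block one of whose $3\times 3$ minors is $\delta_{n+l}\neq 0$, so they have rank $3$ and a one-dimensional right kernel containing both $v(n)$ and $v(n+l)$. Hence $v(n+l)\propto v(n)$, and since $K^{(2)}_n,K^{(3)}_n$ are ratios of components of $v(n)$ they are unchanged, so $K^{(2)}_n$ and $K^{(3)}_n$ have period $l$. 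The dual argument, replacing $n$ by $n+2k$ so that columns $2,3,4$ of $\hat{\Psi}_n$ coincide with columns $1,2,3$ of $\hat{\Psi}_{n+2k}$ (the relevant surviving minor now being $\delta_{n+2k}=\delta_n\neq 0$), shows $w(n+2k)\propto w(n)$, whence $\beta_n$ and $\gamma_n$ have period $2k$.

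The main obstacle is bookkeeping rather than conceptual: one must track carefully which rows and columns are aligned under each shift and verify in each case that the surviving $3\times 3$ minor is a shift of $\delta$, and therefore nonzero by Lemma \ref{nonzerolemma}, which is exactly what licenses the one-dimensionality needed to conclude proportionality. The structural point is that the row-shift by $l$ and the column-shift by $2k$ are two independent symmetries producing the two periods $l$ and $2k$, while the finer period-$k$ of $\delta$ upgrades $\alpha_n$ from period $2k$ to period $k$.
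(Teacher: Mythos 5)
Your proposal is correct and follows essentially the same route as the paper: both identify the extreme entries of the right/left null vectors of $\hat{\Psi}_n$ as $\mp\delta_{n+2k}/\delta_n$ and $\mp\delta_{n+l}/\delta_n$ (via Cramer's rule in the paper, equivalently via cofactors in yours), use the period-$k$ property of $\delta_n$ to fix the coefficient of $x_n$ and the period of $\alpha_n$, and derive the periods $l$ and $2k$ of the remaining coefficients from the fact that the overlapping rows (resp.\ columns) of $\hat{\Psi}_n$ and $\hat{\Psi}_{n+l}$ (resp.\ $\hat{\Psi}_{n+2k}$) force the corresponding null vectors to be proportional, with one-dimensionality licensed by Lemma \ref{nonzerolemma}.
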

\begin{proof}
Let $(K^{(1)}_n,K^{(2)}_n,K^{(3)}_n,1)^T$ be in the kernel of $\hat{\Psi}_n$. (We are justified in scaling the last entry to $1$ due to Lemma \ref{nonzerolemma}.)
From the first three rows of 
\begin{equation}
\label{6}\hat{\Psi}_n(K^{(1)}_n,K^{(2)}_n,K^{(3)}_n,1)^T=0
\end{equation}  
we get the matrix equation
\begin{equation}\label{4}
\begin{bmatrix}
x_n & x_{n+2k} & x_{n+4k} \\
x_{n+l} & x_{n+2k+l} & x_{n+4k+l} \\
x_{n+2l} & x_{n+2k+2l} & x_{n+4k+2l} 
\end{bmatrix}
\begin{bmatrix}
K^{(1)}_n \\
K^{(2)}_n \\
K^{(3)}_n
\end{bmatrix}
=-\begin{bmatrix}
x_{n+6k} \\
x_{n+6k+l} \\
x_{n+6k+2l}
\end{bmatrix}, 
\end{equation}
and by Cramer's rule 
\[
K^{(1)}_n=\frac{-\delta_{n+2k}}{\delta_n}=-1
\]
The last $3$ rows of (\ref{6}) give
\begin{equation}\label{5}
\begin{bmatrix}

x_{n+l} & x_{n+2k+l} & x_{n+4k+l} \\
x_{n+2l} & x_{n+2k+2l} & x_{n+4k+2l}\\
 x_{n+3l} & x_{n+2k+3l} & x_{n+4k+3l}
\end{bmatrix}
\begin{bmatrix}
K^{(1)}_n \\
K^{(2)}_n \\
K^{(3)}_n
\end{bmatrix}
=-\begin{bmatrix}
x_{n+6k+l} \\
x_{n+6k+2l} \\
x_{n+6k+3l}
\end{bmatrix}.
\end{equation}
The equations (\ref{4}) and (\ref{5}) imply that $K^{(2)}_n$ and $K^{(3)}_n$ both have period $l$.  
Now set 
\begin{equation}\label{7}
\hat{\Psi}_n^T(\alpha_n,\beta_n,\gamma_n,1)^T=0, 
\end{equation}
and analogous 
arguments to the preceding ones give 
\beq\label{aldef} 
\alpha_n=-\frac{\delta_{n+l}}{\delta_n}
\eeq 
and the result that $\alpha_n$ is $k$-periodic, and $\beta_n$ and $\gamma_n$ are $2k$-periodic. 
\end{proof}

We can derive further relations between the coefficients in (\ref{rel1}) and (\ref{rel2}) by using  (\ref{4}) and (\ref{5}), 
as well as  the corresponding equations for the 
left kernel of $\hat{\Psi}_n$.
\begin{lemma}\label{coefficientrelation}
The periodic coefficients in (\ref{rel1}) are related to one another by $K^{(2)}_{n+k}=-K^{(3)}_n$. 
\end{lemma}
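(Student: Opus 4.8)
The plan is to convert the claimed relation into a single determinantal identity and then attack that identity by Dodgson condensation, exactly as in the proof that $\delta_n$ has period $k$. Throughout, write $c_j=c_j(n)$ for the column $(x_{n+jk},\,x_{n+jk+l},\,x_{n+jk+2l})^{T}$, so that $\Psi_n=[\,c_0\;c_2\;c_4\,]$, $\delta_n=\det[\,c_0\;c_2\;c_4\,]$, and the right-hand side of (\ref{4}) is $-c_6$. First I would read off closed formulas for the coefficients from the Cramer solution of (\ref{4}): since $(-1,K^{(2)}_n,K^{(3)}_n)^{T}$ solves $\Psi_n(K^{(1)}_n,K^{(2)}_n,K^{(3)}_n)^{T}=-c_6$, replacing the appropriate column of $\Psi_n$ by $-c_6$ gives
\[
K^{(2)}_n=\frac{\det[\,c_0\;c_4\;c_6\,]}{\delta_n},\qquad
K^{(3)}_n=-\frac{\det[\,c_0\;c_2\;c_6\,]}{\delta_n},
\]
where division by $\delta_n$ is legitimate by Lemma~\ref{nonzerolemma}.

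Next I would evaluate the first formula at $n+k$, using $c_j(n+k)=c_{j+1}(n)$, to obtain $K^{(2)}_{n+k}=\det[\,c_1\;c_5\;c_7\,]/\delta_{n+k}$. Since $\delta_n$ has period $k$, we have $\delta_{n+k}=\delta_n$, so the assertion $K^{(2)}_{n+k}=-K^{(3)}_n$ is equivalent to the single determinantal identity $\det[\,c_1\;c_5\;c_7\,]=\det[\,c_0\;c_2\;c_6\,]$, which I will call $(\star)$. I would flag at this point that one must \emph{not} try to prove $(\star)$ by substituting the kernel relation of Proposition~\ref{periodickernel} for $c_6$ or $c_7$: that merely reproduces the Cramer formulas above and is circular. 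The identity $(\star)$ has to be derived from the recurrence (\ref{10}) itself.

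The third and decisive step is to prove $(\star)$ by Dodgson condensation. Applying (\ref{dc}) with $N=3$ to each of the two $3\times3$ determinants, and using the rewriting of (\ref{10}) as the $2\times2$ minor $\bigl|\begin{smallmatrix}x_m&x_{m+2k}\\ x_{m+l}&x_{m+2k+l}\end{smallmatrix}\bigr|=z_{m+k}$ to evaluate every corner minor whose column gap is $2k$, I expect to reach expressions of the form $x_{n+2k+l}\det[\,c_0\;c_2\;c_6\,]=z_{n+k}B_l-z_{n+k+l}B_0$ and $x_{n+5k+l}\det[\,c_1\;c_5\;c_7\,]=z_{n+6k+l}C_0-z_{n+6k}C_l$, where $B_r,C_r$ denote the surviving $2\times2$ minors whose column gap is $4k$. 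The remaining task is to reduce those $4k$-gap minors: here I would scale one column by a suitable iterate and apply the identities (\ref{8}) and (\ref{9}) to the resulting column, precisely as in the manipulation leading to (\ref{14}) and (\ref{dprime}), turning $B_r$ and $C_r$ into determinants in the $z$ variables alone and then matching the two sides after a shift by $k$ and a further appeal to the period-$k$ property of $\delta_n$ to clear the auxiliary factors $x_{n+2k+l}$ and $x_{n+5k+l}$.

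The main obstacle is exactly this last reduction. In the period-$k$ lemma every column gap is $2k$, so one pass of Dodgson lands directly on the $z$-frieze determinant $\delta'$; here the column sets $\{0,2k,6k\}$ and $\{k,5k,7k\}$ are non-uniformly spaced, so a $4k$ gap necessarily survives the first condensation and must be handled by hand. Carrying the $B_r,C_r$ through (\ref{8})--(\ref{9}) while tracking the index shifts carefully enough that the two sides genuinely coincide is the delicate point, and I expect the non-vanishing and $k$-periodicity of $\delta_n$ to be invoked once more to finish. As a cleaner alternative to attempt in parallel, one could couple (\ref{4})--(\ref{5}) with the left-kernel equations (\ref{7}) of $\hat\Psi_n$; exploiting both kernels simultaneously may yield $(\star)$, or $K^{(2)}_{n+k}=-K^{(3)}_n$ directly, without resolving the $4k$-gap minors explicitly.
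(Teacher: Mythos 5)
Your reduction of the lemma to the determinantal identity $(\star)$ is correct: the Cramer formulas $K^{(2)}_n=\det[\,c_0\;c_4\;c_6\,]/\delta_n$ and $K^{(3)}_n=-\det[\,c_0\;c_2\;c_6\,]/\delta_n$ are right, and together with the $k$-periodicity of $\delta_n$ they make the lemma equivalent to $\det[\,c_1\;c_5\;c_7\,]=\det[\,c_0\;c_2\;c_6\,]$. You are also right that substituting the kernel relation back in would be circular. But the proof stops there: the entire content of the lemma is now concentrated in $(\star)$, and your treatment of it is a plan, not a proof. You say you ``expect'' the condensation to produce certain expressions, you identify the reduction of the $4k$-gap minors as ``the delicate point'', and you do not carry it out. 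A $4k$-gap minor such as $x_{n+2k}x_{n+6k+l}-x_{n+2k+l}x_{n+6k}$ is not directly evaluated by the recurrence (\ref{10}), which only controls $2k$-gap minors, so whether one pass of (\ref{8})--(\ref{9}) after a column scaling actually closes the computation is exactly what needs to be verified, and it is not verified. As it stands the argument is incomplete.

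For comparison, the paper avoids $(\star)$ altogether. It solves only the first two rows of (\ref{6}) for $K^{(2)}_n,K^{(3)}_n$, obtaining expressions with the common denominator $z_{n+3k}$; combining these gives $z_{n+5k}K^{(2)}_{n+2k}+z_{n+3k}K^{(3)}_n=z_{n+7k}-z_{n+k}$. It then observes that, because $K^{(2)},K^{(3)}$ have period $l$, the sums $z_j=x_j+x_{j+l}$ satisfy the same linear relation (\ref{rel1}) as the $x_j$, which rewrites the right-hand side as $-K^{(2)}_{n+k}z_{n+3k}-K^{(3)}_{n+k}z_{n+5k}$. If $K^{(2)}_{n+k}+K^{(3)}_n$ were nonzero, taking the ratio of consecutive instances of the resulting identity would force $z_{n+3k}/z_{n+5k}$ to be $l$-periodic, i.e.\ $\delta'_{n+3k}=0$, contradicting Lemma \ref{nonzerolemma}. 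That trick --- feeding the auxiliary sequence $z_j$ back into the same linear relation and concluding by contradiction --- is the idea your proposal is missing; if you want to salvage the purely determinantal route you would need to resolve the $4k$-gap minors explicitly, which is likely to amount to rediscovering the same identities by a longer path.
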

\begin{proof}
From the first $2$ rows of (\ref{6})
we have 
\[
\begin{bmatrix}
x_{n+2k} & x_{n+4k} \\
x_{n+2k+l} & x_{n+4k+l}
\end{bmatrix}
\begin{bmatrix}
K^{(2)}_n \\
K^{(3)}_n
\end{bmatrix}=
\begin{bmatrix}
x_n & x_{n+6k} \\
x_{n+l} & x_{n+6k+l}
\end{bmatrix}
\begin{bmatrix}
1 \\
-1
\end{bmatrix}
\]
Then 
solving for $K^{(2)}_n$ and $K^{(3)}_n$ yields   
\[
K^{(2)}_n=\frac{x_nx_{n+4k+l}-x_{n+l}x_{n+4k}+z_{n+5k}}{z_{n+3k}}, 
\]
\[
K^{(3)}_n=\frac{x_{n+2k+l}x_{n+6k}-x_{n+2k}x_{n+6k+l}-z_{n+k}}{z_{n+3k}}, 
\]
from which we get
\[
z_{n+5k}K^{(2)}_{n+2k}+z_{n+3k}K^{(3)}_n=z_{n+7k}-z_{n+k}. 
\]
The sequence of $z_j$ satisfy the same matrix equation (\ref{4}) as $x_j$, obtained 
by replacing each $x_j\to z_j$ in (\ref{4}), due to the $l$-periodicity of $K^{(2)}$ and $K^{(3)}$, so
\[
z_{n+5k}K^{(2)}_{n+2k}+z_{n+3k}K^{(3)}_n=-K^{(2)}_{n+k}z_{n+3k}--K^{(3)}_{n+k}z_{n+5k}
\]
Assuming that $K_{n+k}^{(2)}+K^{(3)}_n\neq 0 $ implies %(it is clear that $z_{n+5k}\neq 0 $) then 
\[
\frac{K_{n+2k}^{(2)}+K_{n+k}^{(3)}}{K_{n+k}^{(2)}+K_{n}^{(3)}}=-\frac{z_{n+3k}}{z_{n+5k}}, 
\]
and 
the left-hand side above is periodic with period $l$ so the right-hand side should be too, i.e.
\[
\frac{z_{n+3k}}{z_{n+5k}}=\frac{z_{n+3k+l}}{z_{n+5k+l}} \quad \iff \quad
\begin{vmatrix}
z_{n+3k} & z_{n+3k+l} \\
z_{n+5k} & z_{n+5k+l}
\end{vmatrix}=0, 
\]
and this determinant is $\delta'_{n+3k}$ from (\ref{3}), but by the proof of Lemma \ref{nonzerolemma} this cannot be 
identically zero, which gives a contradiction. Hence $K_{n+k}^{(2)}+K^{(3)}_n= 0$ as required.
\end{proof}
\begin{corollary}\label{periodlrelation} 
The linear relation (\ref{rel1}) with $l$-periodic coefficients has the form    
\[
x_{n+6k}-K_{n+k}x_{n+4k}+K_nx_{n+2k}-x_n=0
\]
with $K_n:=K^{(2)}_n$.
\end{corollary}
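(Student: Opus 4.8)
The plan is to combine the two facts already established in Proposition~\ref{periodickernel} and Lemma~\ref{coefficientrelation}. Proposition~\ref{periodickernel} gives the linear relation \eqref{rel1}, namely
\[
x_{n+6k}+K^{(3)}_{n}x_{n+4k}+K^{(2)}_nx_{n+2k}-x_n=0,
\]
together with the information that $K^{(2)}_n$ and $K^{(3)}_n$ are both $l$-periodic. Lemma~\ref{coefficientrelation} then supplies the single extra relation $K^{(2)}_{n+k}=-K^{(3)}_n$, equivalently $K^{(3)}_n=-K^{(2)}_{n+k}$. The whole content of the corollary is to substitute this latter identity into \eqref{rel1} and adopt the abbreviation $K_n:=K^{(2)}_n$.

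First I would set $K_n:=K^{(2)}_n$, so that $K_n$ inherits $l$-periodicity from $K^{(2)}_n$. Then I would rewrite the coefficient of $x_{n+4k}$ in \eqref{rel1}: since $K^{(3)}_n=-K^{(2)}_{n+k}=-K_{n+k}$ by Lemma~\ref{coefficientrelation}, the term $K^{(3)}_nx_{n+4k}$ becomes $-K_{n+k}x_{n+4k}$. The coefficient of $x_{n+2k}$ is $K^{(2)}_n=K_n$ by definition. Making both substitutions in \eqref{rel1} yields
\[
x_{n+6k}-K_{n+k}x_{n+4k}+K_nx_{n+2k}-x_n=0,
\]
which is exactly the asserted form.

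There is genuinely no obstacle here: the statement is a bookkeeping consequence of the two preceding results, and the only thing to verify is that the relabelling is consistent. The one small point worth noting explicitly is periodicity: both coefficients in the rewritten relation are expressed through the single $l$-periodic sequence $K_n$ (at the shifted arguments $n$ and $n+k$), so the claim that \eqref{rel1} takes this ``$l$-periodic'' form is justified by the $l$-periodicity of $K^{(2)}_n$ established in Proposition~\ref{periodickernel}. Thus the entire proof is the substitution of $K^{(3)}_n=-K_{n+k}$ into \eqref{rel1}, and no further computation or estimate is required.
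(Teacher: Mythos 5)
Your proposal is correct and is exactly the argument the paper intends: the corollary is an immediate substitution of the identity $K^{(2)}_{n+k}=-K^{(3)}_n$ from Lemma \ref{coefficientrelation} into the relation \eqref{rel1} of Proposition \ref{periodickernel}, with the relabelling $K_n:=K^{(2)}_n$. The paper gives no separate proof precisely because the statement is this bookkeeping consequence, and your note on the $l$-periodicity being inherited from $K^{(2)}_n$ is the right (and only) point to check.
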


\begin{remark} The latter results were previously obtained via a different method, using the travelling wave reduction of (\ref{firstlittlepi}), 
in \cite{kkmt} (see 
Corollary 3.2 and Proposition 3.3 therein).  
\end{remark} 

We close this section by proving some conjectures made for $l=1$ in \cite{wardthesis}, and extending them to arbitrary $l$. 
\begin{proposition}\label{abcid} 
%The entries of the left kernel of $\hat{\Psi}_n$ 
The periodic coefficients of the linear relation (\ref{rel2}) 
satisfy the following set of identities:
\begin{equation}\label{greekrelation1}
\alpha_n=\beta_n+\gamma_{n+k}-1, 
%\qquad n=0,\ldots, 2k-1, 
\end{equation} 
\begin{equation}\label{greekrelation2}
\alpha_{n+l}( {\gamma_n+\gamma_{n+k}} ) =%\frac
{\beta_{n+l}+\beta_{n+k+l}} , %\qquad n=0,\ldots, k-1, 
\end{equation}
\beq\label{productlemma}
\prod_{i=0}^{k-1}\alpha_{n+i} %{n+il}
=(-1)^{k}. 
\eeq
\end{proposition}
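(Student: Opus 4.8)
The plan is to treat all three identities as consequences of the Cramer's rule / adjugate description of the left null vector $(\alpha_n,\beta_n,\gamma_n,1)$ of $\hat{\Psi}_n$, exactly parallel to the derivation of $K^{(1)}_n,K^{(2)}_n,K^{(3)}_n$ from the right kernel. Writing out the left analogue of (\ref{4}) and (\ref{5}), the relation $(\alpha_n,\beta_n,\gamma_n,1)\hat{\Psi}_n=0$ restricted to the first three columns reads $(\alpha_n,\beta_n,\gamma_n)\Psi_n=-(x_{n+3l},x_{n+2k+3l},x_{n+4k+3l})$, and Cramer's rule then expresses each of $\alpha_n,\beta_n,\gamma_n$ as $\pm$ the ratio to $\delta_n$ of one of the four maximal $3\times 3$ minors of the $4\times 3$ block formed by the first three columns of $\hat{\Psi}_n$. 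In particular this reproduces $\alpha_n=-\delta_{n+l}/\delta_n$ as in (\ref{aldef}), while $\beta_n$ and $\gamma_n$ are (up to sign) the ratios to $\delta_n$ of the two \emph{mixed} minors, obtained by deleting the row of $l$-offset $l$, respectively $2l$, from that block.

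The product identity (\ref{productlemma}) is then immediate and is the easiest step. Using (\ref{aldef}) and the $k$-periodicity of $\delta_n$,
\[
\prod_{i=0}^{k-1}\alpha_{n+i}=(-1)^k\,\frac{\prod_{i=0}^{k-1}\delta_{n+i+l}}{\prod_{i=0}^{k-1}\delta_{n+i}}=(-1)^k,
\]
because numerator and denominator are each a product of $k$ consecutive values of the $k$-periodic sequence $\delta_j$, hence both equal $\delta_0\delta_1\cdots\delta_{k-1}$ and cancel.

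For the two remaining identities I would substitute the minor expressions for $\alpha_n,\beta_n,\gamma_n$ and clear denominators, using $k$-periodicity of $\delta_n$ to rewrite the shifted denominators appearing through $\gamma_{n+k}$ and $\beta_{n+k+l}$ (so that every term carries $\delta_n$ or $\delta_{n+l}$). Identity (\ref{greekrelation1}) thereby reduces to a single linear relation among four $3\times 3$ minors, three of which live on the even-$k$ columns $\{0,2k,4k\}$ while the term coming from $\gamma_{n+k}$ lives on the odd-$k$ columns $\{k,3k,5k\}$; identity (\ref{greekrelation2}) reduces to an analogous relation that is bilinear in such minors. Each of these determinantal identities I would establish by Dodgson condensation (\ref{dc}) together with the nonlinear recurrence in the forms (\ref{8}) and (\ref{9}), in the same spirit as the manipulations in Lemma \ref{coefficientrelation}; the key structural input is that $z_j=x_j+x_{j+l}$ satisfies the very same linear system (\ref{4}) as $x_j$, which is what lets one fold the $2\times 2$ determinant $\delta'$ back in.

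The main obstacle is precisely the even/odd parity of the $k$-shift. Applying $n\mapsto n+k$ to a minor built on columns $\{0,2k,4k\}$ produces one on columns $\{k,3k,5k\}$, so the terms in (\ref{greekrelation1}) and (\ref{greekrelation2}) cannot be matched by multilinearity alone; genuinely connecting the two parities forces one to invoke the recurrence (\ref{10}) (equivalently (\ref{8})--(\ref{9})), which is the only relation bridging a shift by $k$. I expect the bilinear identity (\ref{greekrelation2}) to be the most laborious, and as a consistency check one can verify directly in the case $k=1$ (where $\alpha_n\equiv-1$ by (\ref{productlemma}) and $\beta,\gamma$ have period $2$) that (\ref{greekrelation1}) collapses to $\gamma_{n+1}=-\beta_n$ and that (\ref{greekrelation2}) then follows from $2$-periodicity, both of which agree with the known Heideman--Hogan case.
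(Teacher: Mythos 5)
Your proof of the product identity (\ref{productlemma}) is complete and is exactly the paper's argument: $\alpha_n=-\delta_{n+l}/\delta_n$ from (\ref{aldef}) plus the $k$-periodicity of $\delta_n$ makes the telescoping product collapse to $(-1)^k$. The problem is with (\ref{greekrelation1}) and (\ref{greekrelation2}): for these you have written a plan rather than a proof. You reduce each identity to an unnamed determinantal identity among $3\times 3$ minors of $\hat{\Psi}_n$ (linear for the first, bilinear for the second), state that you ``would establish'' these by Dodgson condensation together with (\ref{8})--(\ref{9}), and then explicitly flag that the hard step --- bridging the parity mismatch between minors on the columns $\{0,2k,4k\}$ and those on $\{k,3k,5k\}$ --- remains to be done. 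Since that bridging step is precisely where all the content of the two identities lives, the argument as written does not establish them; it only relocates the difficulty.

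It is worth seeing how the paper sidesteps exactly the computation you are dreading. Instead of Cramer's rule on $3\times 3$ minors, it solves a $2\times 2$ system (two columns of the left-kernel equation) to express $\beta_n$ and $\gamma_n$ in terms of $\alpha_n$ and the $x_j$ with the common denominator $z_{n+k+l}$; comparing the expression for $\beta_{n+k}$ with that for $\gamma_n$ eliminates the shared $2\times 2$ determinant and yields a single scalar relation (\ref{20}). Substituting $x_{n+k+4l}$ from the linear relation (\ref{rel2}) turns this into a homogeneous linear relation (\ref{21}) among $x_{n+k},x_{n+k+l},x_{n+k+2l},x_{n+k+3l}$ only. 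The decisive input is then the one-dimensionality of the kernel of $\hat{\Psi}_{n+k}$ (guaranteed by Lemma \ref{nonzerolemma}), which forces the coefficient vector of (\ref{21}) to be proportional to $(\alpha_{n+k},\beta_{n+k},\gamma_{n+k},1)$ from the shift of (\ref{7}); equating coefficients gives three scalar equations that rearrange directly into (\ref{greekrelation1}) and (\ref{greekrelation2}). No bilinear determinantal identity ever needs to be verified. If you want to salvage your route, the kernel-uniqueness argument is the missing idea you should import: it converts ``two linear relations of the same shape'' into ``equal coefficients'' for free, which is much cheaper than condensation through the $k$-shift.
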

\begin{proof}
From the left kernel analogue of (\ref{5}) we have 
\[
\begin{bmatrix}
x_{n+l} & x_{n+2l} \\
x_{n+2k+l} & x_{n+2k+2l} 
\end{bmatrix}
\begin{bmatrix}
\beta_n \\
\gamma_n
\end{bmatrix}
=-
\begin{bmatrix}
x_n & x_{n+3l} \\
x_{n+2k} & x_{n+2k+3l}
\end{bmatrix}
\begin{bmatrix}
\alpha_n \\
1
\end{bmatrix}, 
\]
so we can express $\beta_n$ and $\gamma_n$ as
\[
\beta_n=\frac{\alpha_n(x_{n+2k}x_{n+2l}-x_nx_{n+2k+2l})+z_{n+k+2l}}{z_{n+k+l}}, 
\]
\[
\gamma_n=\frac{(x_{n+2k+l}x_{n+3l}-x_{n+l}x_{n+2k+3l})+\alpha_nz_{n+k}}{z_{n+k+l}}. 
\]
Upon shifting $\beta_n \to \beta_{n+k}$ %up by $l$ 
we can equate the bracketed terms above as 
\begin{equation}\label{20}
\alpha_{n+l}\gamma_nz_{n+k+l}-\alpha_n\alpha_{n+l}z_{n+k}=\beta_{n+l}z_{n+k+2l}-z_{n+k+3l}. 
\end{equation}
Now if we write the $z_j$ in terms of the $x_i$ and replace the $x_{n+k+4l}$ that appears as
\[
x_{n+k+4l}=-\alpha_{n+l}x_{n+k+l}-\beta_{n+k+l}x_{n+k+2l}-\gamma_{n+k+l}x_{n+k+3l}, 
\]
then (\ref{20}) becomes
\begin{multline}\label{21}
-\alpha_n\alpha_{n+l}x_{n+k}+(\alpha_{n+l}\gamma_n-\alpha_n\alpha_{n+l}-\alpha_{n+l})x_{n+k+l}\\+(\alpha_{n+l}\gamma_n-\beta_{n+l}-\beta_{n+k+l})x_{n+k+2l}+(1-\beta_{n+l}-\gamma_{n+k+l})x_{n+k+3l}=0. 
\end{multline}
Since the kernel of $\hat{\Psi}_n$ is one-dimensional we can scale and equate coefficients in (\ref{21}) and an appropriate shift of (\ref{7}) to get three equations, namely 
\[
\alpha_n=\beta_n+\gamma_{n+k}-1, \,\,
\gamma_{n+k}\alpha_{n+l}=\beta_{n+l}+\beta_{n+k+l}-\alpha_{n+l}\gamma_n,\,\,  
\alpha_{n+l}=\beta_{n+l}+\gamma_{n+k+l}-1. 
\]
where the third of these is simply a shift of the first, and these rearrange to give 
(\ref{greekrelation1}) and (\ref{greekrelation2}). The identity (\ref{productlemma}) 
follows from (\ref{aldef}) and the fact that $\delta_n$ has period $k$. 
%All of the identities hold for arbitrary  
\end{proof}

\section{Linear relations with constant  coefficients} \label{littlepiconst}

In this section we derive the linearization of the Little Pi family (\ref{10}), in the form 
of linear relations with constant coefficients, which were not previously considered in \cite{kkmt}. 
The key is to use monodromy arguments, similar to those employed in \cite{fordyhone} in the case 
of the cluster algebra recurrences (\ref{affa}) and (\ref{2fri}). 
%The two kernel equations of Proposition \ref{periodickernel} can be used to shift $\Psi_n$ by $2k$ and by $l$ respectively, i.e. if we define two matrices:

We start by defining the sequences of matrices 
\beq\label{lltdef} 
L_n:=
\begin{bmatrix}
0 & 0 & 1 \\
1 & 0 & -K_n \\
0 & 1 & K_{n+k}
\end{bmatrix}, 
\qquad
\tilde{L}_n:=
\begin{bmatrix}
0 & 1 & 0 \\
0 & 0 & 1 \\
-\alpha_n & -\beta_n & -\gamma_n
\end{bmatrix}, 
\eeq 
which vary with overall periods $l$ and $2k$, respectively, and 
allow the linear relations (\ref{rel1}) and (\ref{rel2}) to be rewritten in 
matrix form as 
\[
\Psi_nL_n=\Psi_{n+2k},  \qquad \tilde{L}_n\Psi_n=\Psi_{n+l}, 
\]
where as before $\Psi_n$ is given by (\ref{psidef}).
The point of this is that if we define
the pair of monodromy matrices 
\begin{equation}\label{M_ndef}
M_n:=L_nL_{n+2k}L_{n+4k}\cdots L_{n+2k(l-1)},  \qquad \tilde{M}_n:=\tilde{L}_{n+(2k-1)l} \cdots \tilde{L}_{n+2l}\tilde{L}_{n+l}\tilde{L}_n
\end{equation}
then right multiplication by $M_n$ will shift $\Psi_n$ by $2k$ upwards $l$ times, that is 
\begin{equation} \label{mpsi}
\Psi_nM_n=\Psi_{n+2kl}, 
\end{equation}
and left multiplication by $\tilde{M}_n$  will shift $\Psi_n$ by $l$ upwards $2k$ times, so that 
\begin{equation} \label{tmpsi}
\tilde{M}_n\Psi_n=\Psi_{n+2kl}. 
\end{equation}

\begin{remark}
If $d:=\gcd(k,l)>1$ then the recurrence (\ref{10}) splits into $d$ copies of itself, so without loss of generality we can take $d=1$. 
Then with $d=1$, if $l$ is odd then $\gcd(2k,l)=1$ and $\lcm(2k,l)=2kl$, while if $l$ is even then $\gcd(2k,l)=2$ and $\lcm(2k,l)=kl$, 
and we need to deal with these two different cases separately.
\end{remark}

\subsection{The case $\gcd(2k,l)=1$}
Here $\lcm(2k,l)=2kl$, so $l$ is odd, and with the monodromy matrices 
$M_n$ and $\tilde{M}_n$ defined as in (\ref{M_ndef}) above, we see that due to the $l$-periodicity of $L_n$ and the cyclic property of the trace, 
the quantity $\mathcal{K}:= \mathrm{tr}(M_n)$ has period $l$, and similarly $\mathrm{tr}(\tilde{M}_n)$ has period $2k$. Now from 
(\ref{mpsi}) and (\ref{tmpsi}) we have 
$$ 
\mathcal{K}=\mathrm{tr}(M_n)=\mathrm{tr}(\Psi_n^{-1}\Psi_{n+2kl}) = \mathrm{tr}(\Psi_{n+2kl}\Psi_n^{-1})=\mathrm{tr}(\tilde{M}_n), 
$$ 
so $\mathcal{K}$ has period $\gcd(2k,l)=1$, hence is a first integral for (\ref{10}), independent of $n$. 
The same argument applies  to the quantity $\mathcal{\tilde{K}}:=\tr(M_n^{-1})=\tr(\tilde{M}_n^{-1})$, 
which we will now show is equal to $\mathcal{K}$. %We begin by  making the following simple observation.

\begin{proposition} \label{traceequality} 
The trace of the monodromy matrix $M_n$ satisfies 
$\mathcal{K}=\mathrm{tr}(M_n) =\tr(M_n^{-1})$. % \mathcal{\tilde{K}}$.
\end{proposition}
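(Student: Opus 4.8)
The plan is to use the fact that for a $3\times 3$ matrix of determinant $1$ the equality $\tr(M_n)=\tr(M_n^{-1})$ is equivalent to its characteristic polynomial being self-reciprocal. First I would record that $\det L_n=1$ directly from \eqref{lltdef} (expanding along the first row), so that $\det M_n=1$ by \eqref{M_ndef}. For any $3\times 3$ matrix $A$ with $\det A=1$ one has $A^{-1}=\mathrm{adj}\,A$, hence $\tr(A^{-1})=\tr(\mathrm{adj}\,A)=e_2(A)$, the second elementary symmetric function of the eigenvalues; thus the characteristic polynomial of $M_n$ is $\chi(\lambda)=\lambda^3-\mathcal{K}\lambda^2+\tr(M_n^{-1})\lambda-1$. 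Since $\chi(1)=\tr(M_n^{-1})-\mathcal{K}$, the proposition is equivalent to $\chi(1)=0$, i.e.\ to $M_n$ having $1$ as an eigenvalue, equivalently to $M_n$ being conjugate to $M_n^{-1}$ (the surviving eigenvalues then being reciprocal, which is automatic once $\det M_n=1$).

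The key technical ingredient I would isolate is an involutive conjugation of the individual factors. Writing $\hat{L}(a,b)$ for the matrix obtained from \eqref{lltdef} on replacing $(K_n,K_{n+k})$ by $(a,b)$, so that $L_n=\hat{L}(K_n,K_{n+k})$, a direct computation with the $3\times 3$ exchange (antidiagonal) matrix $J$ gives
\[
J\,\hat{L}(a,b)\,J=\hat{L}(b,a)^{-1},
\qquad
J\,\hat{L}(a,b)^{-1}\,J=\hat{L}(b,a).
\]
This single identity simultaneously inverts the block and interchanges its two periodic slots, which is exactly the anti-palindromic structure of the relation in Corollary~\ref{periodlrelation} promoted to the matrix level.

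Applying this factor by factor to \eqref{M_ndef} and using $J^2=I$ yields
\[
J M_n^{-1} J=\prod_{j=l-1}^{0}\hat{L}(K_{n+2kj+k},K_{n+2kj}),
\]
so that $\tr(M_n^{-1})$ is the trace of a product of the same companion blocks, but taken in the reversed order and with the two slots interchanged. To identify this with $\mathcal K=\tr(M_n)$ I would invoke the reversal symmetry of \eqref{10}: the right-hand side is invariant under $j\mapsto 2n+2k+l-j$, so $x_n\mapsto x_{c-n}$ carries any solution to another solution of the same recurrence, under which $\Psi_n$ transforms by left and right multiplication by $J$ and the periodic coefficients are replaced by their index-reflections. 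This symmetry turns the forward monodromy into precisely the reversed, slot-swapped product above, while leaving the value of the first integral $\mathcal K$ unchanged, giving $\mathcal{K}=\tr(M_n)=\tr(M_n^{-1})$. Throughout, Lemma~\ref{nonzerolemma} guarantees the matrices $\Psi_n$ are invertible, so all conjugations are legitimate; one may cross-check the conclusion against $\tilde{M}_n=\Psi_n M_n\Psi_n^{-1}$ from \eqref{mpsi}--\eqref{tmpsi}.

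The step I expect to be the main obstacle is reconciling the reversed order of the factors, since the trace is invariant under cyclic permutations of a product but \emph{not} under reversal. The identity $J\hat{L}(a,b)J=\hat{L}(b,a)^{-1}$ disposes of the inversion and the slot interchange within each block, and cyclicity of the trace accounts for rotations of the $l$ factors (which, because $\gcd(2k,l)=1$, realise every cyclic shift of the data $K_0,\dots,K_{l-1}$); what remains is to show invariance under the single reflection supplied by the reversal symmetry, thereby upgrading cyclic invariance to dihedral invariance. Concretely, one must verify that $\mathcal{K}$, viewed as a function of $K_0,\dots,K_{l-1}$, is unchanged when this data is read in reverse. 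I would establish this by combining the reversal symmetry with the cyclicity already in hand, after confirming the smallest case $l=3$, where an explicit multiplication gives the fully symmetric $\mathcal{K}=3-\sum_i K_i^2+K_0K_1K_2$, and its coincidence with $e_2(M_n)=\tr(M_n^{-1})$, as a guide for the general argument.
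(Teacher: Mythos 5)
Your algebraic setup is correct as far as it goes: $\det L_n=1$, the equivalence of the claim with $e_2(M_n)=\tr(M_n)$, and the conjugation identity $J\hat{L}(a,b)J=\hat{L}(b,a)^{-1}$ all check out, as does your $l=3$ trace $3-\sum_i K_i^2+K_0K_1K_2$. This cleanly reformulates the proposition as: the trace of the product of companion blocks is unchanged when the cyclic word in $K_0,\dots,K_{l-1}$ is read backwards. The gap is in the step you use to close this. The assertion that the reversal symmetry $x_n\mapsto x_{c-n}$ of (\ref{10}) ``leaves the value of the first integral $\mathcal{K}$ unchanged'' is exactly the statement to be proved, not a consequence of the symmetry: what reversal actually gives (via uniqueness of the kernel from Lemma \ref{nonzerolemma}) is that the reversed solution $y$ has $K_n[y]$ equal to an index-reflection of $K[x]$, and hence $\mathcal{K}[y]=\tr(M_n^{-1}[x])$. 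A first integral is constant along shift orbits; nothing forces it to take the same value on a solution and on its reversal. Indeed, applying reversal to the order-$6kl$ relation merely swaps $\tr(M_n)$ with $\tr(M_n^{-1})$, so this route re-derives the equivalence rather than resolving it — the argument is circular at the decisive point. Note also that the desired trace identity is \emph{not} a formal identity in independent slot variables: already $\tr\bigl(\hat{L}(a_0,b_0)\hat{L}(a_1,b_1)\bigr)=b_0b_1-a_0-a_1$ is not invariant under reversal-with-slot-swap. So any proof must exploit the interlocking structure (each second slot $K_{n+2kj+k}$ reappears as the first slot of another factor), which your plan never does; confirming $l=3$ does not supply the general mechanism.

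For comparison, the paper closes this by a different route: it verifies the identity at $l=1$ (where it is immediate, $\tr(L_n)=K_{n+k}=K_n=\tr(L_n^{-1})$ by $1$-periodicity), rewrites both traces in terms of the dual monodromy $\tilde{M}_n$ built from the $\tilde{L}_j(\al_j,\be_j,\gam_j)$, and observes that the defining relations of Proposition \ref{abcid} for odd $l>1$ are obtained from those for $l=1$ by the permutation $\si(i)=il\bmod 2k$, so the $l=1$ identity transports to all odd $l$. Even this relies on a surjectivity assumption flagged in the Remark that follows; the watertight argument is the $2\times 2$ reduction $\tr(M_n)=\tr(M_n^*)+1$ revisited in the proof of Theorem \ref{extlin}, where the identity becomes automatic because unimodular $2\times 2$ matrices satisfy $\tr(A)=\tr(A^{-1})$. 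If you wish to keep your $J$-conjugation framework — which is a genuinely nice reformulation — you still need one of these mechanisms, or an honest combinatorial proof of dihedral invariance of the trace word, to finish.
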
 

\begin{proof}
The result holds in the case $l=1$,  since we have $\mathcal{K}=\mathrm{tr}(M_n)=\mathrm{tr}(L_n)=K_{n+k}$, $\mathcal{\tilde{K}}=\mathrm{tr}(L_n^{-1})=K_n$, and 
$K_n$ has period $1$, hence   $\mathcal{K}=\mathcal{\tilde{K}}$.  (Another proof for  $l=1$ is given in \cite{heidemanhoganhone}.) Rewriting $\mathcal{K}$ as $\mathrm{tr}(\tilde{M}_n)$, 
and similarly for $\mathcal{\tilde{K}}=\mathrm{tr}(\tilde{M}_n^{-1})$, 
and (setting $n=0$ without loss of generality) 
this implies an algebraic relation between the $5k$  quantities 
$\al_0,\ldots, \al_{k-1}, \beta_0,\ldots , \beta_{2k-1}, \gamma_0,\ldots, \gamma_{2k-1}$, namely that 
\beq\label{traceid} 
\tr (\tilde{L}_{2k-1} \tilde{L}_{2k-2} \cdots \tilde{L}_{0})= \tr( \tilde{L}_{0}^{-1} \tilde{L}_{1}^{-1} 
\cdots \tilde{L}_{2k-1}^{-1} )
\eeq    
must hold as a consequence of the relations in Proposition \ref{abcid} for $l=1$. 
Due to periodicity there are $2k$ independent relations of the form (\ref{greekrelation1}), as well 
as $k$ independent relations of the form   (\ref{greekrelation2}), together with  (\ref{productlemma}), 
but in fact there are only $3k$ independent relations in total, so these equations define an affine variety of 
dimension $2k$.  
Now for odd $l>1$ note that for $\mathcal{K}=\mathcal{\tilde{K}}$ to holds if and only if  %we require that 
\beq\label{trace2}
\tr (\tilde{L}_{\si (2k-1)} \tilde{L}_{\si (2k-2)} \cdots \tilde{L}_{\si(0)})= \tr (\tilde{L}_{\si(0)}^{-1} \tilde{L}_{\si(1)}^{-1} 
\cdots \tilde{L}_{\si(2k-1)}^{-1}),
\eeq 
where $\si$ is the permutation of the indices $0,1,\ldots,2k-1$ defined by 
$$ 
\si(i) = il \bmod 2k, 
$$ 
which satisfies the properties 
$$ 
\si(i+k)-\si(i) \equiv k \, (\bmod \,2k), \qquad 
\si(i+1)-\si(i) \equiv l \, (\bmod \,2k). 
$$ 
With all indices read mod $2k$ (or mod $k$ in the case of $\al_j$), 
it follows from these properties that $\si$ acts  by permuting the coordinates $\al_j,\beta_j,\gamma_j$ in the identities for 
$l=1$ in Proposition \ref{abcid}, 
so that the identities for each odd $l$ are just 
$$ 
\al_{\si(n)} = \beta_{\si(n)}+    \gamma_{\si(n+k)} -1, 
\qquad 
\al_{\si(n+1)} (\gamma_{\si(n)} +\gamma_{\si(n+k)}) 
= \beta_{\si(n+1)} +\gamma_{\si(n+k+1)}, 
$$ 
and similarly for (\ref{productlemma}). In other words, the identities for odd $l>1$ are just permutations of those 
for $l=1$, so the algebraic relation (\ref{trace2}) holds as an immediate consequence of the relation 
(\ref{traceid}) when $l=1$.   
\end{proof}

\begin{remark}
There is an implicit assumption in the above proof, namely that 
when $l=1$ the map from the initial values $x_0,x_1,\ldots, x_{2k}$ for (\ref{10}) 
to the variety defined by  the relations in Proposition \ref{abcid} is surjective, 
which ensures that the identity $\mathrm{tr}(M_n) =\tr(M_n^{-1})$ must be an algebraic 
consequence of these relations. In particular, it is enough to check 
that for $l=1$ there is collection of $2k$ independent $2k$-periodic functions of the 
initial data (e.g.\ either of the sets
$\beta_0,\ldots,\beta_{2k-1}$ or  
$\gamma_0,\ldots,\gamma_{2k-1}$ should be functionally independent). 
While this is a straightforward but laborious task for any given $k$, we do not know of a 
simple verification that is 
valid for all $k$. However, for any $l$, a direct algebraic proof of Proposition  \ref{traceequality}  
is provided by the argument used to prove Theorem 4.5 in \cite{honeward}, which we will revisit 
in the proof of Theorem \ref{extlin} below. 
\end{remark} 

\begin{theorem}\label{linearrelationtheorem}
If $\gcd(2k,l)=1$ then the iterates of (\ref{10}) satisfy
the constant coefficient linear relation 
\[
x_{n+6kl}-\mathcal{K}x_{n+4kl}+\mathcal{K}x_{n+2kl}-x_n=0
\] 
where $\mathcal{K}=\mathrm{tr}(M_n)$ is a first integral.
\end{theorem}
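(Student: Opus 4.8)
The plan is to extract the scalar linear relation from the characteristic polynomial of the monodromy matrix $M_n$, using the intertwining identity (\ref{mpsi}), namely $\Psi_n M_n = \Psi_{n+2kl}$. Since $M_n$ is $3\times 3$, the Cayley--Hamilton theorem yields a cubic relation $M_n^3 - e_1 M_n^2 + e_2 M_n - e_3 I = 0$, where $e_1,e_2,e_3$ are the elementary symmetric functions of the eigenvalues of $M_n$. Left-multiplying this relation by $\Psi_n$ and reading off a single matrix entry should convert it into a scalar recurrence of order $6kl$ for the iterates $x_n$. The crux is therefore to pin down the coefficients $e_1,e_2,e_3$ and show they equal $\mathcal{K},\mathcal{K},1$.

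First I would compute $\det L_n$. Expanding (\ref{lltdef}) along the first row gives $\det L_n = 1$, and since $M_n$ is a product of $l$ such matrices, $e_3 = \det M_n = 1$. For a $3\times 3$ matrix the second elementary symmetric function satisfies $e_2 = \det(M_n)\,\tr(M_n^{-1})$, so with $e_3=1$ this reads $e_2 = \tr(M_n^{-1})$. Invoking Proposition \ref{traceequality}, which supplies $\tr(M_n^{-1}) = \tr(M_n) = \mathcal{K}$, I obtain $e_2 = \mathcal{K} = e_1$. Hence the characteristic polynomial is the palindrome $\lambda^3 - \mathcal{K}\lambda^2 + \mathcal{K}\lambda - 1$, and Cayley--Hamilton gives $M_n^3 - \mathcal{K} M_n^2 + \mathcal{K} M_n - I = 0$.

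Next I would propagate the intertwiner. Because $L_n$ has period $l$ and the shift $2kl$ is a multiple of $l$, the monodromy satisfies $M_{n+2kl} = M_n$; combined with (\ref{mpsi}) this yields $\Psi_n M_n^{\,j} = \Psi_{n+2klj}$ by induction on $j$. Left-multiplying the Cayley--Hamilton relation by $\Psi_n$ and applying this identity for $j=1,2,3$ turns the cubic into the matrix relation $\Psi_{n+6kl} - \mathcal{K}\,\Psi_{n+4kl} + \mathcal{K}\,\Psi_{n+2kl} - \Psi_n = 0$. Reading off the $(1,1)$ entry, which is $x_m$ in $\Psi_m$, produces exactly the claimed relation; that $\mathcal{K}$ is a first integral was already established before Proposition \ref{traceequality}.

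The genuine difficulty lies in establishing that the two inner coefficients coincide, i.e.\ $\tr(M_n) = \tr(M_n^{-1})$; this is precisely the content of Proposition \ref{traceequality}, and it also underpins the claim that $\mathcal{K}$ is $n$-independent (which uses $\gcd(2k,l)=1$ to force period $1$, hence $\lcm(2k,l)=2kl$). Granting that, the remainder is essentially bookkeeping: the determinant computation $\det L_n = 1$, the $3\times 3$ symmetric-function identity $e_2 = \det(M_n)\,\tr(M_n^{-1})$, and the telescoping $\Psi_n M_n^{\,j} = \Psi_{n+2klj}$. The one point needing care is verifying $M_{n+2kl} = M_n$, which follows from the $l$-periodicity of $L_n$ since $2kl$ is a multiple of $l$, so that the cubic in $M_n$ transcribes faithfully into a shift relation of step $2kl$.
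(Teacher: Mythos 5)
Your proposal is correct and follows essentially the same route as the paper: apply Cayley--Hamilton to $M_n$, use $\det L_n=1$ to get unit determinant, identify the middle coefficient as $\tr(M_n^{-1})$ and equate it to $\tr(M_n)=\mathcal{K}$ via Proposition \ref{traceequality}, then left-multiply by $\Psi_n$ and read off the $(1,1)$ entry. The only cosmetic difference is that you obtain $c=\tr(M_n^{-1})$ from the symmetric-function identity $e_2=\det(M_n)\tr(M_n^{-1})$, whereas the paper premultiplies the Cayley--Hamilton relation by $M_n^{-3}$; you also make explicit the periodicity $M_{n+2kl}=M_n$ needed for the telescoping, which the paper leaves implicit.
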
 

\begin{proof} 
Note that $|L_n|=1$, hence $|M_n|=1$, so  
by the Cayley-Hamilton theorem applied to $M_n$ we have
\begin{equation}\label{13}
M_n^3-\tr(M_n)M_n^2+cM_n-I=0
\end{equation} 
for some $c$. Premultiplying by $M_n^{-3}$ in (\ref{13})
gives 
\[
M_n^{-3}-cM_n^{-2}+\tr(M_n)M_n^{-1}-I=0, 
\] 
so from the  Cayley-Hamilton theorem for $M_n^{-1}$
we see that $c=\tr(M^{-1}_n)=\tr(M_n)=\mathcal{K}$. Multiplying (\ref{13}) by $\Psi_n$ from the left yields
\[
\Psi_{n+6kl}-\mathcal{K}\Psi_{n+4kl}+\mathcal{K}\Psi_{n+2kl}-\Psi_n=0
\] 
and the top leftmost entry of this matrix equation gives the required linear relation for $x_n$.
\end{proof}

\subsection{The case $\gcd(2k,l)=2$}

In this case $\lcm(2k,l)=kl$, with $l$ even and $k$ odd. 
With the same definition (\ref{M_ndef}) for $M_n$, each matrix in the product 
appears twice in the same 
order relative to its neighbours, so  $M_n$ is a perfect square, 
and we can define the square root $M_n^*=M_n^{1/2}$ by the same product with half as many factors, and similarly 
for $\tilde{M}_n^*=\tilde{M}_n^{1/2}$. 
The total shift for $\Psi_n$ is now $kl$ instead of $2kl$, so we have   
\[
M_n^*:=L_nL_{n+2k}\cdots L_{n+k(l-4)}L_{n+k(l-2)}, \qquad \tilde{M}_n^*:=\tilde{L}_{n+(k-1)l}\tilde{L}_{n+(k-2)l}
\cdots \tilde{L}_{n+l}\tilde{L}_n, 
\]
with
\begin{equation}\label{psis}
\Psi_{n+kl}=\Psi_nM_n^*=\tilde{M}_n^*\Psi_n
\end{equation}
Again $\tr(M_n^*)$ has period $2k$ and $\tr(\tilde{M}_n^*)$ has period $l$, but now this implies that  $\mathcal{K}_n:=\tr(M_n^*)=\tr(\tilde{M}^*_n)$ has period $\gcd(2k,l)=2$, 
and similarly for $\tilde{\mathcal{K}}_n:=\tr((M_n^*)^{-1})$. The analogue of Proposition \ref{traceequality} requires more work in this case. We begin with

\begin{proposition} \label{traceequality2} 
The trace of %the monodromy matrix 
$M_n^*$ satisfies 
$\mathcal{K}_n=\mathrm{tr}(M_n^*) =\tr((M_{n+1}^*)^{-1})$. 
%$\mathcal{K}_n=\mathcal{\tilde{K}}_{n+1}$.
\end{proposition}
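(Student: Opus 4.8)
The plan is to reduce the statement to a purely algebraic identity in the periodic coefficients and then prove it in two steps, mirroring the scheme of Proposition~\ref{traceequality}: first settle a base case, then reduce the general situation to it by a permutation of indices. The conjugacy $M_n^* = \Psi_n^{-1}\tilde{M}_n^*\Psi_n$ supplied by (\ref{psis}) gives $\tr(M_n^*)=\tr(\tilde{M}_n^*)$ and $\tr((M_{n+1}^*)^{-1})=\tr((\tilde{M}_{n+1}^*)^{-1})$, so it suffices to establish
\[
\tr(\tilde{M}_n^*)=\tr\big((\tilde{M}_{n+1}^*)^{-1}\big),
\]
an identity involving only the quantities $\al_j,\be_j,\gam_j$ and the relations of Proposition~\ref{abcid}.

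The base case is $l=2$. Here $M_n^*=L_n$ is a single factor, so $\tr(M_n^*)=\tr(L_n)=K_{n+k}$ directly from (\ref{lltdef}). Since $|L_n|=1$, the trace of the inverse of a $3\times3$ matrix equals the sum of its principal $2\times2$ minors, whence $\tr((M_{n+1}^*)^{-1})=\tr(L_{n+1}^{-1})=K_{n+1}$. Because $K_n$ has period $l=2$ and $k$ is odd, $K_{n+k}=K_{n+1}$, which proves the identity. This computation also explains the shift $n\mapsto n+1$ in the statement: it is forced by $k$ being odd, since $\tr(L_n^{-1})=K_n\neq K_{n+k}$ in general, and only the parity-flipped index $n+1$ restores equality.

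For general even $l$ I would reduce the algebraic identity to the $l=2$ case by a permutation of indices, exactly as $\si$ is used in Proposition~\ref{traceequality}. Because $\gcd(2k,l)=2$, the factors of $\tilde{M}_n^*$ run over $\tilde{L}_{n+jl}$ with $j=0,\ldots,k-1$, whose indices exhaust one parity class modulo $2k$, while those of $(\tilde{M}_{n+1}^*)^{-1}$ exhaust the other. Writing $l'=l/2$, one has $\gcd(l',k)=1$, and the period-$2k$ property gives $\tilde{L}_{n+jl}=\tilde{L}_{n+2(jl'\bmod k)}$, so that the permutation $\tau(j):=jl'\bmod k$ of $\Z/k\Z$ carries the ordered product for $l=2$ to that for general even $l$. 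The plan is then to show that $\tau$, suitably extended across both parity classes, permutes the relations (\ref{greekrelation1})--(\ref{productlemma}) of Proposition~\ref{abcid} for $l=2$ onto those for general even $l$; granting this, the target trace identity for general $l$ is the $\tau$-image of the $l=2$ identity and hence follows from the base case.

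The main obstacle is precisely the parity bookkeeping that this extension of $\tau$ requires, and it is what makes the present case genuinely harder than Proposition~\ref{traceequality}. In the coprime case the orbit of multiplication by $l$ covers all of $\{0,\ldots,2k-1\}$, so $\si$ is a single clean bijection preserving the relation ideal. Here the index set splits into two parity classes, and since $k$ is odd the shifts by $k$ occurring inside (\ref{greekrelation1}) and (\ref{greekrelation2}) interchange these classes; the relations therefore couple both parities and cannot be read off one class in isolation. Verifying that $\tau$, together with the parity flip induced by $n\mapsto n+1$, carries the full coupled system of relations for $l=2$ onto that for general even $l$ is the delicate step. As the remark following Proposition~\ref{traceequality} suggests, an alternative route is to bypass the combinatorics and give a direct algebraic proof along the lines of the argument for Theorem~\ref{extlin}, which I expect to transfer to this case once the full monodromy is replaced by its square root $M_n^*$.
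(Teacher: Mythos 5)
Your base case is correct and is a genuinely different one from the paper's: you take $l=2$, where $M_n^*=L_n$ is a single factor, and the identity $\tr (L_n)=K_{n+k}=K_{n+1}=\tr (L_{n+1}^{-1})$ follows from $|L_n|=1$, the $l$-periodicity of $K_n$ and the oddness of $k$ alone, with no appeal to the relations of Proposition \ref{abcid}. The paper instead takes $k=1$ as its base case, for arbitrary even $l$, working on the $\tilde{L}$ side and using $\al_n=-1$ and $\be_n=-\gam_{n+1}$, which come from (\ref{productlemma}) and (\ref{greekrelation1}).

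The gap is in your reduction of general even $l$ to $l=2$, which remains a plan whose crux you yourself flag as an unresolved ``delicate step''. Two things go wrong. First, reducing over $l$ forces you onto the $\tilde{L}$ side (the number of $L$-factors in $M_n^*$ is $l/2$, so no fixed relabelling of the $K_j$ can relate different values of $l$), and on that side the trace identity is \emph{not} a tautology: it holds only modulo the relations of Proposition \ref{abcid}, so your permutation $\tau$ must be shown to carry the entire coupled relation system for $l=2$ onto that for general $l$, including the parity-mixing shifts by $k$ in (\ref{greekrelation1}) and (\ref{greekrelation2}) --- exactly the bookkeeping you leave unverified. Second, even granting that, your $l=2$ identity was established on the $K$ side, not derived as an algebraic consequence of the $l=2$ relations among the $\al_j,\be_j,\gam_j$; before you can transport it by $\tau$ you would have to re-prove it in that form, which in turn needs the genericity assumption flagged in the remark after Proposition \ref{traceequality}. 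The paper's choice of base case sidesteps both issues: proving $k=1$ for every even $l$ yields, for each $l$, a single identity purely in $K_0,\ldots,K_{l-1}$, asserted to be a tautology, after which general $k$ is a bare relabelling $K_i\mapsto K_{ik\bmod l}$ with no relations to track. If you wish to keep your cleaner base case, the viable route to the general statement is the one you mention last --- the inhomogeneous $2\times 2$ monodromy argument of Theorem \ref{extlin} --- but that would have to be written out; as it stands the proof is incomplete.
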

\begin{proof}	
When $k=1$  in terms of $\tilde{M}^*_n$ we have 
\[
\mathcal{K}_n=\tr(\tilde{L}_n)=-\gamma_{n}, \qquad 
\mathcal{\tilde{K}}_n=\tr(\tilde{L}_n^{-1})=-\frac{\beta_n}{\alpha_n} .
\]
Now due to (\ref{productlemma}) we have $\alpha_n=-1$ and using (\ref{greekrelation1}) we get $\beta_n=-\gamma_{n+1}$, hence $\mathcal{K}_n=\mathcal{\tilde{K}}_{n+1}$, 
and the result holds in this case. This implies an algebraic identity between the entries of the 
monodromy matrix $M^*_n$ and its shift, namely that  
$$ 
\tr (L_0L_2\cdots L_{l-2}) = \tr ( L_{l-1}^{-1}  L_{l-3}^{-1}\cdots L_1^{-1})
$$  
(where we set $n=0$ without loss of generality), which is just a tautology in terms of the 
$l$ quantities $K_0,K_1,\ldots,K_{l-1}$ that appear in the entries. 
Similarly to the argument in the proof of Proposition \ref{traceequality}, we have that 
for $k>1$ the required identity of traces for $M^*_n$ and $(M^*_{n+1})^{-1}$ 
just corresponds to a permutation $\si$ of the indices 
of the quantities $K_j$, given by $\si(j)=2jk\bmod l$, so 
the relation   $\mathcal{K}_n=\mathcal{\tilde{K}}_{n+1}$ holds for all $k$. 
\end{proof}

\begin{proposition}
The iterates of (\ref{10}) satisfy
a linear relation with period 2 coefficients, given by 
\begin{equation}\label{18}
x_{n+3kl}-\mathcal{K}_nx_{n+2kl}+\mathcal{K}_{n+1}x_{n+kl}-x_n=0
\end{equation}
\end{proposition}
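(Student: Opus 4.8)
The plan is to mirror the Cayley–Hamilton monodromy argument of Theorem \ref{linearrelationtheorem}, but now working with the square-root monodromy matrix $M_n^*$ whose total shift is $kl$ rather than $2kl$. Since $|L_n|=1$ we have $|M_n^*|=1$, so the characteristic polynomial of $M_n^*$ reads
\[
(M_n^*)^3-\tr(M_n^*)(M_n^*)^2+c_n M_n^*-I=0
\]
for some scalar $c_n$. The essential new feature compared with the coprime case is that the relevant traces are only period $2$, not constant, so I must track how the coefficient $c_n$ behaves and carefully align the shifted relations.

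First I would identify the middle coefficient $c_n$. Premultiplying the characteristic equation by $(M_n^*)^{-3}$ gives
\[
(M_n^*)^{-3}-c_n(M_n^*)^{-2}+\tr(M_n^*)(M_n^*)^{-1}-I=0,
\]
and comparing with the Cayley–Hamilton relation for $(M_n^*)^{-1}$ shows that $c_n=\tr((M_n^*)^{-1})=\tilde{\mathcal{K}}_n$. By Proposition \ref{traceequality2} we have $\tilde{\mathcal{K}}_n=\mathcal{K}_{n-1}$, i.e.\ $c_n=\mathcal{K}_{n-1}$, and since the traces are $2$-periodic this means $c_n=\mathcal{K}_{n+1}$. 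Thus the characteristic relation becomes
\[
(M_n^*)^3-\mathcal{K}_n(M_n^*)^2+\mathcal{K}_{n+1}M_n^*-I=0.
\]

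Next I would convert this into a relation on the $\Psi$ matrices. From (\ref{psis}) we have $\Psi_{n+kl}=\Psi_n M_n^*$, and by iterating (using that $M_n^*$ advances the index by $kl$, so that $\Psi_n(M_n^*)^j=\Psi_{n+jkl}$) I get
\[
\Psi_{n+3kl}-\mathcal{K}_n\Psi_{n+2kl}+\mathcal{K}_{n+1}\Psi_{n+kl}-\Psi_n=0.
\]
Here I must check that the coefficients genuinely match: since $\mathcal{K}_n$ has period $2$ and $kl$ is odd (as both $k$ and $l/\ldots$—more precisely $k$ is odd, $l$ is even, so $kl$ is even, which would force $\mathcal{K}_{n+kl}=\mathcal{K}_n$), the scalar coefficients pulled out of the matrix products are consistent with reading them at index $n$. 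Reading off the top-left entry of this matrix identity then yields the stated scalar relation (\ref{18}) for the sequence $x_n$.

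The main obstacle I anticipate is the bookkeeping of the period-$2$ coefficients through the square-root monodromy and its inverse: one must verify that the index on $\mathcal{K}$ coming from $\tr((M_n^*)^{-1})$ lands exactly at $n+1$ as claimed in (\ref{18}), and that the parities of $k$, $l$, and $kl$ are used correctly so that the $2$-periodicity is compatible with the shift by $kl$. This is precisely where Proposition \ref{traceequality2} (with its shift by one) does the heavy lifting, so the proof hinges on applying it with the correct index alignment rather than on any genuinely new computation.
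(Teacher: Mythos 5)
Your proposal is correct and follows exactly the route the paper intends: the paper's own proof is a one-line appeal to the Cayley--Hamilton argument of Theorem \ref{linearrelationtheorem} ``with different traces appearing'', and you have simply filled in the details, correctly identifying the middle coefficient as $\tr((M_n^*)^{-1})=\tilde{\mathcal{K}}_n=\mathcal{K}_{n-1}=\mathcal{K}_{n+1}$ via Proposition \ref{traceequality2} and the period-$2$ property. The only blemish is the garbled parenthetical about the parity of $kl$ (it is even since $l$ is even, and the iteration $\Psi_n(M_n^*)^j=\Psi_{n+jkl}$ really rests on $M_{n+kl}^*=M_n^*$, which holds because the entries of $L_n$ have period $l$ and $l\mid kl$), but this does not affect the validity of the argument.
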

\begin{proof}
This follows by the Cayley-Hamilton theorem, as in the proof of  Theorem \ref{linearrelationtheorem}, but with different traces appearing.
\end{proof}

\begin{theorem} When $\gcd(2k,l)=2$, the iterates of (\ref{10}) satisfy 
the constant coefficient relation
\[
x_{n+6kl}-\mathcal{A}x_{n+5kl} +\mathcal{B}x_{n+4kl}-\mathcal{C}x_{n+3kl}+\mathcal{B}x_{n+2kl}-\mathcal{A}x_{n+kl}+x_n=0
\]
where
\[
\mathcal{A}=\mathcal{K}_n+\mathcal{K}_{n+1}, \qquad \mathcal{B}=\mathcal{K}_n\mathcal{K}_{n+1}+\mathcal{K}_n+\mathcal{K}_{n+1}, \qquad \mathcal{C}=\mathcal{K}^2_n+\mathcal{K}^2_{n+1}+2.
\]
\end{theorem}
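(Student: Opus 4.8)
The plan is to promote the period-$2$ relation \eqref{18} to a genuinely constant-coefficient relation by forming the product of the two ``parity'' characteristic operators. The starting observation is that, since $k$ is odd and $l$ is even, the shift $kl$ is \emph{even}, so the period-$2$ coefficients satisfy $\mathcal{K}_{m+kl}=\mathcal{K}_m$ for every $m$. Writing $E$ for the shift $n\mapsto n+kl$, the relation \eqref{18} says that $x_n$ is annihilated by the cubic operator $p_n(E)$ with $p_n(\lambda):=\lambda^3-\mathcal{K}_n\lambda^2+\mathcal{K}_{n+1}\lambda-1$, and because all shifts by $kl$ preserve the parity of the index, these coefficients stay frozen under such shifts. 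Equivalently, iterating \eqref{psis} together with $M^*_{n+kl}=M^*_n$ (which holds because $L_n$ has period $l$ and $kl\equiv 0\bmod l$) gives $\Psi_{n+jkl}=\Psi_n(M^*_n)^j$, and $p_n$ is precisely the characteristic polynomial of $M^*_n$ produced by Cayley--Hamilton, with $\tr(M^*_n)=\mathcal{K}_n$ and $\tr((M^*_n)^{-1})=\mathcal{K}_{n+1}$ by Proposition \ref{traceequality2}.

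Because $\mathcal{K}_n$ has period $2$, only two such operators occur: $p_0(\lambda)=\lambda^3-\mathcal{K}_0\lambda^2+\mathcal{K}_1\lambda-1$ annihilates $x_m$ for every even $m$, while $p_1(\lambda)=\lambda^3-\mathcal{K}_1\lambda^2+\mathcal{K}_0\lambda-1$ (the image of $p_0$ under the transposition $\mathcal{K}_0\leftrightarrow\mathcal{K}_1$) annihilates $x_m$ for every odd $m$. The key step is then to set $Q(E):=p_0(E)\,p_1(E)$. As these are constant-coefficient shift operators they commute, so $Q=p_0p_1=p_1p_0$; hence for even $n$ we have $Q(E)x_n=p_1(E)\big(p_0(E)x_n\big)=0$ and for odd $n$ we have $Q(E)x_n=p_0(E)\big(p_1(E)x_n\big)=0$, where in each case the inner operator already kills the relevant (parity-fixed) shifts of $x_n$. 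Thus $Q(E)x_n=0$ holds for every $n$, irrespective of parity. At the matrix level this is the identity $Q(M^*_n)=0$, valid for each parity since $Q$ is divisible by the characteristic polynomial of $M^*_n$; multiplying by $\Psi_n$ on the left and reading off the top-left entry then reproduces the scalar relation.

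Finally, I would expand $Q(\lambda)=p_0(\lambda)p_1(\lambda)$. Since $p_0$ and $p_1$ differ only by the transposition $\mathcal{K}_0\leftrightarrow\mathcal{K}_1$, every coefficient of the product is symmetric in $\mathcal{K}_0,\mathcal{K}_1$ and so is independent of $n$; indeed $p_1(\lambda)=-\lambda^3 p_0(1/\lambda)$, which also explains why the resulting sextic is palindromic. A direct multiplication gives $Q(\lambda)=\lambda^6-\mathcal{A}\lambda^5+\mathcal{B}\lambda^4-\mathcal{C}\lambda^3+\mathcal{B}\lambda^2-\mathcal{A}\lambda+1$ with $\mathcal{A}=\mathcal{K}_0+\mathcal{K}_1$, $\mathcal{B}=\mathcal{K}_0\mathcal{K}_1+\mathcal{K}_0+\mathcal{K}_1$ and $\mathcal{C}=\mathcal{K}_0^2+\mathcal{K}_1^2+2$, matching the stated $\mathcal{A},\mathcal{B},\mathcal{C}$ (written there as $\mathcal{K}_n,\mathcal{K}_{n+1}$), and $Q(E)x_n=0$ is the asserted relation of order $6kl$.

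The substance of the argument sits in the results already established (the period-$2$ relation \eqref{18}, the period-$2$ property of $\mathcal{K}_n$, and Proposition \ref{traceequality2}), so the only genuinely delicate point here is the parity bookkeeping in the middle step: one must be careful that \eqref{18} freezes to $p_0$ on even indices and to $p_1$ on odd indices, which is exactly where the evenness of $kl$ (equivalently $k$ odd, $l$ even) is used. The commutativity of the constant-coefficient operators makes the single product $Q(E)$ work uniformly across both residue classes, and the expansion of $Q$ is then routine.
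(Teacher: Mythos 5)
Your proof is correct and is essentially the paper's argument: the paper simply applies the operator $\mathcal{S}^{3kl}-\mathcal{K}_{n+1}\mathcal{S}^{2kl}+\mathcal{K}_{n}\mathcal{S}^{kl}-1$ to the period-2 relation (\ref{18}), which is exactly your product $p_1(E)p_0(E)$ (resp.\ $p_0(E)p_1(E)$) with $E=\mathcal{S}^{kl}$. Your write-up just makes explicit the parity bookkeeping — that $kl$ is even so the coefficients freeze under $E$ — which the paper leaves implicit, and the expansion of the palindromic sextic matches the stated $\mathcal{A},\mathcal{B},\mathcal{C}$.
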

\begin{proof}
Let $\mathcal{S}$ be the shift operatorm such that 
$\mathcal{S}(f_n) =f_{n+1}$ for any function of $n$. Then applying
the operator 
\[
\mathcal{S}^{3kl}-\mathcal{K}_{n+1}\mathcal{S}^{2kl}+\mathcal{K}_{n}\mathcal{S}^{kl}-1
\]
to equation (\ref{18}) gives the required result.
\end{proof}

\subsection{Superintegrability of Little Pi}

The birational map 
$$ 
\varphi: \quad (x_0,\ldots, x_{2k+l-2},x_{2k+l-1}) \mapsto 
(x_1,\ldots, x_{2k+l-1},x_{2k+l})
$$ 
defined by the Little Pi recurrence (\ref{10}) is measure-preserving, in the sense that 
$$ 
\varphi^* \Omega = (-1)^l \Omega, 
$$ 
where $\Omega$ is the volume form 
$$ 
\Omega = \frac{\rd x_0\wedge \rd x_1 \wedge \cdots \wedge \rd x_{2k+l-1}}{x_0x_1\cdots x_{2k+l-1}}. 
$$ 
We can use this to show that the map $\varphi$ is maximally superintegrable, in the sense that 
it admits an (anti-) invariant Poisson structure, and the number of independent first integrals is one less than 
the dimension of the phase space. 

In the case $\gcd(2k,l)=1$, it appears that the $l$-periodic quantities $K_0,\ldots, K_{l-1}$ are independent 
of one another, hence any cyclically symmetric functions of these quantities are first integrals: so this provides 
$l$ independent first integrals for (\ref{10}). Similarly, subject to the 
relations in Proposition \ref{abcid} one can take $2k$ independent $2k$-periodic quantities, and cyclically 
symmetric functions of these provide 
$2k$ independent first integrals. However, in total this should give exactly $2k+l-1$ independent 
first integrals $I_1,I_2,\ldots,I_{2k+l-1}$, since the identity    
$\tr (M_n) = \tr (\tilde{M}_n)$ gives a relation between these two sets of cyclically symmetric functions. 
Then by a result from \cite{byrnes}, taking all but one of these first integrals together with 
the covolume form 
$$ 
V = x_0\cdots x_{2k+l-1} \,\frac{\partial}{\partial x_0} \wedge \cdots \wedge  \frac{\partial}{\partial x_{2k+l-1}} 
$$ 
(i.e.\ the $(2k+l)$-multivector field that contracts with $\Omega$ to give 1) yields a Poisson bracket 
defined by 
$$ 
\{\,f,g\,\}=V(\rd f,\rd g,\rd I_1,\rd I_2,\ldots, \rd  I_{2k+l-2}), 
$$
and this bracket is invariant/anti-invariant under the action of $\varphi$, according to the 
parity of $l$, that is 
$$ 
\varphi^*\{\,f,g\,\} = (-1)^l\,\{\,\varphi^*f,\varphi^*g\,\}
$$ 
for any pair of functions $f,g$ on the $(2k+l)$-dimensional phase space. 
By construction, the first integrals $I_1,I_2,\ldots,I_{2k+l-2}$ are Casimirs 
for this bracket, but the additional first integral $I_{2k+l-1}$ is not, so 
it defines a non-trivial Hamiltonian vector field. 

As an example, we take the simplest case $k=l=1$, when $\varphi$ is defined by 
\beq\label{simple} 
x_{n+3}x_n = x_{n+2}x_{n+1} + x_{n+2}+x_{n+1}. 
\eeq 
 The quantity 
$K_0=\mathcal{K}$ is a first integral, which can be written as 
$$ 
\mathcal{K}=\frac{x_{6}-x_0}{x_4-x_2}, 
$$ 
by Theorem \ref{linearrelationtheorem}, and 
then rewritten as a function of the initial values $x_0,x_1,x_2$ by using  (\ref{simple}). In fact, by Theorem 1.2 
in \cite{heidemanhoganhone}, the explicit expression is 
$$ 
\mathcal{K}=P^{(0)}+P^{(1)}+P^{(2)}, 
$$ where 
$$
P^{(0)}=1+\frac{x_0}{x_2}+\frac{x_2}{x_0}, 
\quad P^{(1)}=\left(1+\frac{x_2}{x_0}\right)\, \frac{x_0+x_1}{x_1x_2} + 
\left(1+\frac{x_0}{x_2}\right)\, \frac{x_1+x_2}{x_0x_1}, 
$$
and 
$$ 
P^{(2)}= \frac{1}{x_1x_2} +\frac{1}{x_0x_1} +\frac{1}{x_0x_2}.
$$  
Also, by Proposition \ref{abcid} we have  
$$\al_0=-1, \qquad \gamma_0=-\beta_1,\qquad \gamma_1=-\beta_0,$$ 
and then from (\ref{rel2}) we can find two independent 2-periodic 
quantities by solving for $\beta_0,\beta_1$ in terms of the $x_j$ from the pair 
of linear equations 
$$ 
\begin{array}{rcl} 
x_3-\beta_1\,x_2+\beta_0\,x_1-x_0 & = & 0, \\ 
x_4-\beta_0\,x_3+\beta_1\,x_2-x_1 & = & 0. 
\end{array} 
$$
Then the two symmetric functions 
$$I_1=\beta_0\beta_1, \qquad I_2=\beta_0+\beta_1$$ 
provide two independent first integrals, but they are related to $\mathcal{K}$ 
by 
$$ 
\mathcal{K}=\tr(M_0) = \tr(\tilde{M}_0) = \tr (\tilde{L}_1\tilde{L}_0)=\beta_0\beta_1-\beta_0-\beta_1=I_1-I_2. 
$$ 
Finally, contracting the covolume form 
$$ 
V=x_0x_1x_2\, \frac{\partial}{\partial x_0} \wedge  \frac{\partial}{\partial x_{1}} \wedge  \frac{\partial}{\partial x_{2}}
$$ 
with the one-form $\rd I_1$ gives the Poisson bracket 
$$ 
\{\,f,g\,\}=V(\rd f,\rd g,\rd I_1),  
$$ 
which is anti-invariant under the action of the map $\varphi$ defined by (\ref{simple}), so 
it is invariant under the doubled map $\varphi^2$, and this is a superintegrable map in three 
dimensions. The flow of the Hamiltonian vector field 
$$ 
\frac{\rd }{\rd t} = \{ \, \cdot , I_2 \, \} 
$$ commutes with the map, and its level sets are curves defined by $I_1=\,$const, $I_2=\,$const.

\section{Linearization and reductions of a   6-point lattice equation}\label{latticesection}

In this section we consider the new 6-point lattice equation (\ref{lattice}), which can be rewritten as 
an equality of two $2\times 2$ determinants, in the form 
$$ 
\left| \begin{array}{cc} u_{s,t+1}  & u_{s,t+2} \\ 
u_{s+1,t+1}& u_{s+1,t+2} +a 
\end{array} \right|= 
\left| \begin{array}{ccc} u_{s,t}+a &u_{s,t+1} \\ 
u_{s+1,t}&u_{s+1,t+1} 
\end{array} \right|, 
$$
or in the form of a conservation law, as  
\beq\label{conslaw}
 \Delta_s \, au_{s,t+1}
=\Delta_t \left| \begin{array}{ccc} u_{s,t}&u_{s,t+1} \\ 
u_{s+1,t}&u_{s+1,t+1} 
\end{array} \right|.
\eeq 
By imposing the constraint 
$$
u_{s,t}=u_{s+k,t-l}   
$$ 
for integers $k,l$, one obtains the $(l,-k)$ travelling wave reduction 
\begin{equation}\label{wavereduction}
u_{s,t}=x_n \qquad n=ls+kt, 
\end{equation}
which produces the family of recurrences (\ref{1drecurrence}). Upon making use 
of the conservation law (\ref{conslaw}), we can write 
the reduction as 
$$ 
(\mathcal{S}^k-1) \left| \begin{array}{ccc} x_n & x_{n+k} \\ 
x_{n+l}&x_{n+k+l} 
\end{array} \right|=(\mathcal{S}^l-1) \, a x_{n+k},  
$$ 
and as both sides are a total difference this can be integrated to give 
\beq\label{genextreme} 
\sum_{j=0}^{k-1} 
\left| \begin{array}{ccc} x_{n+j} & x_{n+j+k} \\ 
x_{n+j+l}&x_{n+j+k+l} 
\end{array} \right|-
\sum_{j=0}^{l-1} a x_{n+j+k}=b, 
\eeq 
where $b$ is an integration constant. In other words, $b$ is a first integral for 
(\ref{1drecurrence}). The particular case $k=1$, given by (\ref{extreme}),  
is the ``Extreme polynomial'' family 
found from period 1 seeds in LP algebras in   \cite{laurentphenomenonsequences}, 
whose linearization was studied in \cite{honeward} for $b=0$. 

For $k>1$,  the recurrences (\ref{genextreme}) are 
not of the type (\ref{mutationrecurrence}) that can arise from periodic seeds in LP 
algebras, and none of the recurrences  (\ref{1drecurrence}) are of this type. 
Nevertheless, these recurrences turn out to have the Laurent 
property for any $k$, as a consequence of the fact that the lattice equation (\ref{lattice}) has the Laurent 
property. 

\subsection{Linearization of new lattice equation} 

Similarly to the results on linearization of the lattice equation (\ref{2dlittlepiintro}) obtained 
in \cite{kkmt}, the iterates of the new 6-point equation (\ref{lattice}), or equivalently (\ref{conslaw}),  
satisfy two types of linear relation, with coefficients that are independent of one or the other 
of the lattice variables $s,t$. 

\begin{proposition} The solutions $u_{s,t}$ of the lattice equation 
(\ref{lattice}) satisfy the linear relations
\begin{equation}\label{2dlin1}
u_{s,t+3}-(J(t+1)+1)u_{s,t+2}+(J(t)+1)u_{s,t+1}-u_{s,t}=0, 
\end{equation}
\begin{equation}\label{2dlin2}
u_{s+3,t}+A(s)u_{s+2,t}+B(s)u_{s+1,t}+C(s)u_{s,t}=0, 
\end{equation}
where $J(t)$ is independent of $s$, and $A(s),B(s)$ and $C(s)$ are independent of $t$.
\end{proposition}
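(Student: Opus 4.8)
The plan is to treat the two relations separately, because the lattice equation (\ref{lattice}) is second order in $t$ but only first order in $s$: the $t$-direction relation (\ref{2dlin1}) comes out directly, whereas (\ref{2dlin2}) requires a short linear-algebra argument. I would first establish (\ref{2dlin1}). Dividing (\ref{lattice}) by $u_{s,t+1}u_{s+1,t+1}$ shows that the ratio
\[
J(t):=\frac{u_{s,t+2}+u_{s,t}+a}{u_{s,t+1}}
\]
takes the same value at $s$ and at $s+1$, hence is independent of $s$. This is exactly the inhomogeneous second-order relation $u_{s,t+2}-J(t)u_{s,t+1}+u_{s,t}=-a$. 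Shifting $t\to t+1$ and subtracting the two copies eliminates the constant $-a$ and produces (\ref{2dlin1}) immediately, with coefficients $J(t+1)+1$ and $J(t)+1$ manifestly independent of $s$.

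The crucial observation for (\ref{2dlin2}) is that the coefficients in (\ref{2dlin1}) do \emph{not} depend on $s$. Hence for every fixed $s$ the sequence $\big(u_{s,t}\big)_{t}$ solves one and the same third-order linear recurrence in $t$, whose solution space is three-dimensional (its leading and trailing coefficients being $1$ and $-1$, so nonzero). Consequently any four of the sequences $\big(u_{s,t}\big)_{t}$ are linearly dependent over the field of coefficients; in particular the four consecutive ones labelled $s,s+1,s+2,s+3$ obey a relation $\sum_{i=0}^{3}\lambda_i(s)\,u_{s+i,t}=0$ in which the $\lambda_i$ are independent of $t$. After normalizing the coefficient of $u_{s+3,t}$ to $1$, this is precisely (\ref{2dlin2}) with $A(s),B(s),C(s)$ independent of $t$.

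To make this rigorous and to recover the determinantal picture promised in the Introduction, I would package the dependence through Casoratians. The vanishing $4\times 4$ determinant is the Casoratian of the four sequences,
\[
\begin{vmatrix}
u_{s,t} & u_{s+1,t} & u_{s+2,t} & u_{s+3,t}\\
u_{s,t+1} & u_{s+1,t+1} & u_{s+2,t+1} & u_{s+3,t+1}\\
u_{s,t+2} & u_{s+1,t+2} & u_{s+2,t+2} & u_{s+3,t+2}\\
u_{s,t+3} & u_{s+1,t+3} & u_{s+2,t+3} & u_{s+3,t+3}
\end{vmatrix}=0,
\]
which holds because its columns solve a common third-order recurrence in $t$. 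Its cofactors are the $3\times 3$ Casoratians of three consecutive sequences, and by the standard Casoratian identity, since (\ref{2dlin1}) has leading and trailing coefficients $1$ and $-1$, such a $3\times 3$ determinant is invariant under $t\to t+1$ — this is the constant $3\times 3$ determinant. Using (\ref{2dlin1}) to replace the last column, this $3\times 3$ Casoratian reduces to $-a$ times a determinant expressible through the $2\times 2$ minors appearing in the conservation law (\ref{conslaw}), which is the Dodgson-condensation route promised in the Introduction and the exact counterpart, for (\ref{lattice}), of what was found for (\ref{2dlittlepiintro}) in \cite{kkmt}. Solving the $3\times 3$ system given by the first three rows for $A(s),B(s),C(s)$ by Cramer's rule then exhibits them as ratios of these $3\times 3$ determinants, manifestly independent of $t$.

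The one genuine obstacle is the nondegeneracy needed for the normalization: I must show that the constant $3\times 3$ Casoratian of three consecutive sequences is not identically zero, so that three consecutive sequences are independent, the dependence really involves $u_{s+3,t}$, and $A,B,C$ are well defined. I expect to argue exactly as in Lemma~\ref{nonzerolemma}: substitute positive real initial values, deduce $u_{s,t}>0$ for all $s,t$ by induction, and then use the conservation law (\ref{conslaw}) to derive a positivity contradiction from vanishing of the determinant. This is the step where the specific structure of (\ref{lattice}), rather than mere linear algebra, has to be invoked.
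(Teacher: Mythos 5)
Your proof is correct and follows essentially the same route as the paper: $J(t)$ is obtained by dividing (\ref{lattice}) by $u_{s,t+1}u_{s+1,t+1}$, the homogeneous relation (\ref{2dlin1}) follows by differencing in $t$, and (\ref{2dlin2}) comes from the left kernel of the singular $4\times 4$ Casorati matrix, which is the same thing as your observation that four sequences solving a common third-order recurrence in $t$ with $s$-independent coefficients must be linearly dependent over the $t$-independent scalars. The only substantive difference is that you explicitly flag and propose to prove the nondegeneracy (non-vanishing of the $3\times 3$ Casoratian) needed to normalize the coefficient of $u_{s+3,t}$ to $1$ and make $A(s),B(s),C(s)$ well defined, a point the paper passes over in silence; your suggested positivity argument in the style of Lemma \ref{nonzerolemma} is the right way to close that gap.
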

\begin{proof}
Dividing both sides of  (\ref{lattice}) by $u_{s,t+1}u_{s+1,t+1}$, 
 we immediately find a quantity which is invariant under shifts in the $s$ direction, 
since the equation becomes the total difference   
$$ 
\Delta_s \, J =0, 
$$ 
where $J=J(t)$ is defined by 
\beq\label{jdefn} 
J(t):=\frac{u_{s,t+2}+u_{s,t}+a}{u_{s,t+1}}. 
\eeq
The definition of $J$ rearranges to 
give an inhomogeneous linear relation 
for $u_{s,t}$, that is 
\begin{equation}\label{inhomogenous}
u_{s,t+2}-J(t)u_{s,t+1}+u_{s,t}+a=0, 
\end{equation}
and by applying the 
difference operator $\Delta_t$ to this we are led to 
the homogeneous relation (\ref{2dlin1}). 
Since the coefficients  of the latter are fixed under  shifting $s$, we 
can write down four shifts of the relation in  the form of a matrix linear 
system, that is  
\beq\label{4by4} 
\begin{bmatrix}
u_{s,t} & u_{s,t+1} & u_{s,t+2} & u_{s,t+3} \\
u_{s+1,t} & u_{s+1,t+1} & u_{s+1,t+2} & u_{s+1,t+3} \\
u_{s+2,t} & u_{s+2,t+1} & u_{s+2,t+2} & u_{s+2,t+3} \\
u_{s+3,t} & u_{s+3,t+1} & u_{s+3,t+2} & u_{s+3,t+3} 
\end{bmatrix}
\begin{bmatrix}
-1 \\
J(t)+1 \\
-J(t+1)-1 \\
1
\end{bmatrix}
=0.
\eeq 
Thus we see that the  $4\times 4$ matrix above has determinant zero, so we may take a  vector 
$(C,B,A,1)$ in the  
left kernel, and then it is apparent that  %$(C(s),B(s),A(s),1)$
the entries of this vector are  invariant under  shifting  $t$. 
% coefficients
This kernel gives the second linear relation (\ref{2dlin2}).
\end{proof}

In the course of the proof, we observed the following result, which is 
also proved for (\ref{2dlittlepiintro}) in \cite{kkmt}. 
\begin{corollary} 
For any solution $u_{s,t}$ of 
(\ref{lattice}),  the corresponding $4\times 4$ Casorati matrix  has vanishing determinant, that is 
$$ 
\left|\begin{array}{cccc} 
u_{s,t} & u_{s,t+1} & u_{s,t+2} & u_{s,t+3} \\
u_{s+1,t} & u_{s+1,t+1} & u_{s+1,t+2} & u_{s+1,t+3} \\
u_{s+2,t} & u_{s+2,t+1} & u_{s+2,t+2} & u_{s+2,t+3} \\
u_{s+3,t} & u_{s+3,t+1} & u_{s+3,t+2} & u_{s+3,t+3} 
\end{array}  \right| =0. 
$$ 
\end{corollary}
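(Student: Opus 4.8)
The plan is to read the result off directly from the matrix system (\ref{4by4}) that was assembled in the preceding proof, so that essentially no new work is required. The homogeneous linear relation (\ref{2dlin1}),
$$u_{s,t+3}-(J(t+1)+1)u_{s,t+2}+(J(t)+1)u_{s,t+1}-u_{s,t}=0,$$
holds for \emph{every} value of $s$, because its coefficients $J(t)$ and $J(t+1)$ depend only on $t$. Writing it out with $s$ replaced by each of $s,s+1,s+2,s+3$ produces four scalar identities, and these are exactly the four rows of the system (\ref{4by4}).

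First I would record that these four identities say precisely that the $4\times 4$ Casorati matrix annihilates, on the right, the column vector
$$v=\big(\,-1,\; J(t)+1,\; -(J(t+1)+1),\; 1\,\big)^T.$$
Since the last component of $v$ equals $1$, the vector $v$ is non-zero. Hence the Casorati matrix possesses a non-trivial right kernel and is therefore singular, so its determinant vanishes, which is exactly the assertion of the corollary.

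I do not expect any genuine obstacle here: the substantive content was already established in the proof of the preceding proposition, where the coefficients of (\ref{2dlin1}) were shown to be independent of $s$, and the vanishing of the determinant is an immediate by-product of that relation holding uniformly in $s$. Should one instead want a derivation mirroring the frieze and $2$-frieze cases from the introduction, I would first identify a $3\times 3$ Casorati determinant that is a first integral in one lattice direction, and then apply the Desnanot--Jacobi identity (\ref{dc}) with $N=4$ to express the $4\times 4$ determinant as a combination of products of $3\times 3$ minors, which would collapse to zero. In that alternative the only delicate point would be pinning down the constant $3\times 3$ determinant; since the kernel argument sidesteps this entirely, it is the route I would take.
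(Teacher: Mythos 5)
Your argument is exactly the one the paper uses: the corollary is read off from the system (\ref{4by4}), since the four shifts of (\ref{2dlin1}) in $s$ exhibit the non-zero vector $(-1,\,J(t)+1,\,-(J(t+1)+1),\,1)^T$ in the right kernel of the Casorati matrix, forcing its determinant to vanish. No gaps; the alternative Dodgson-condensation route you mention is not needed here.
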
  

\subsection{Linear relations for travelling wave reductions}

Upon applying the reduction (\ref{wavereduction}), 
the linear relations (\ref{2dlin1}) and (\ref{2dlin2}) for the lattice equation 
(\ref{lattice}) reduce to linear recurrences with periodic coefficients for  
(\ref{1drecurrence}). 

\begin{proposition} 
The iterates of the equation (\ref{1drecurrence}) for the $(l,-k)$ travelling reduction 
of (\ref{lattice}) satisfy   
linear recurrence relations with periodic coefficients, given by %for the reduction are
\begin{equation}\label{periodiccoefficients1}
x_{n+3k}-(J_{n+k}+1)x_{n+2k}+(J_n+1)x_{n+k}-x_n=0, 
\end{equation}
\begin{equation}\label{periodiccoefficients2}
x_{n+3l}+A_nx_{n+2l}+B_nx_{n+l}+C_nx_n=0, 
\end{equation}
where the coefficient $J_n$ is periodic with period $l$, and 
$A_n,B_n,C_n$ are periodic with period $k$. 
\end{proposition}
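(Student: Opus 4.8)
The plan is to obtain both periodic-coefficient recurrences by directly substituting the travelling wave reduction (\ref{wavereduction}) into the two lattice linear relations (\ref{2dlin1}) and (\ref{2dlin2}), and then to track how the periodicity of the lattice coefficients $J(t)$, $A(s)$, $B(s)$, $C(s)$ translates into periodicity in the single reduced index $n$. Under the reduction $u_{s,t}=x_n$ with $n=ls+kt$, a unit shift $t\to t+1$ in the lattice corresponds to $n\to n+k$, while a unit shift $s\to s+1$ corresponds to $n\to n+l$. Thus the relation (\ref{2dlin1}), which involves the shifts $t,t+1,t+2,t+3$ at fixed $s$, maps term-by-term onto a relation among $x_n,x_{n+k},x_{n+2k},x_{n+3k}$, yielding (\ref{periodiccoefficients1}) with reduced coefficient $J_n:=J(t)$. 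Symmetrically, (\ref{2dlin2}) involves shifts $s,s+1,s+2,s+3$ at fixed $t$ and maps onto a relation among $x_n,x_{n+l},x_{n+2l},x_{n+3l}$, giving (\ref{periodiccoefficients2}) with $A_n:=A(s)$, and likewise for $B_n,C_n$.

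The key remaining point is to verify the claimed periods. First I would confirm that the reduction is well-defined on $J,A,B,C$: since (\ref{wavereduction}) sends two distinct lattice points $(s,t)$ and $(s',t')$ to the same $x_n$ precisely when $l(s-s')+k(t-t')=0$, and since the constraint $u_{s,t}=u_{s+k,t-l}$ is exactly the compatibility condition, the coefficients pull back to well-defined functions of $n$. Because $J(t)$ is independent of $s$, its reduced value $J_n$ depends on $n$ only through the residue it inherits from $t$; concretely, advancing $n$ by $l$ (i.e.\ $s\to s+1$ at fixed $t$) leaves $t$ unchanged and hence leaves $J$ unchanged, so $J_{n+l}=J_n$, giving period $l$. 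Dually, since $A(s),B(s),C(s)$ are independent of $t$, advancing $n$ by $k$ (i.e.\ $t\to t+1$ at fixed $s$) leaves these unchanged, so $A_{n+k}=A_n$ and similarly for $B,C$, giving period $k$.

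The one subtlety worth pinning down carefully is that $n=ls+kt$ does not determine $(s,t)$ uniquely, so I must check the reduced coefficients are consistent across all preimages. This is where the constraint $u_{s,t}=u_{s+k,t-l}$ does the essential work: it guarantees that the full solution $u_{s,t}$, and therefore every quantity built from it such as $J$, $A$, $B$, $C$ via the defining relation (\ref{jdefn}) and the kernel of (\ref{4by4}), is invariant under the lattice translation $(s,t)\mapsto(s+k,t-l)$, which is exactly the generator of the sublattice collapsing to a single $n$. Hence the periodicities in the previous paragraph are mutually consistent and the reduced recurrences are unambiguous. I expect this consistency verification to be the main (though still routine) obstacle, since everything else is a mechanical relabelling of indices; once invariance under $(s,t)\mapsto(s+k,t-l)$ is invoked, the stated periods $l$ and $k$ follow immediately from the $s$- and $t$-independence of the respective coefficients.
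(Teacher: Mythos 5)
Your proposal is correct and follows essentially the same route as the paper: substitute the travelling wave reduction into the two lattice linear relations (\ref{2dlin1}) and (\ref{2dlin2}), and observe that $s$-independence of $J(t)$ becomes $l$-periodicity of $J_n$ (a unit shift in $s$ advances $n$ by $l$), while $t$-independence of $A,B,C$ becomes $k$-periodicity. Your extra consistency check that the reduced coefficients are well defined on the fibres of $n=ls+kt$ is a sound (if routine) addition that the paper leaves implicit.
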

\begin{proof}
The travelling wave reduction (\ref{wavereduction}) applied to (\ref{lattice}) gives the 
%one dimensional 
nonlinear recurrence (\ref{1drecurrence}), and  under this reduction the quantity $J(t)$ defined 
by (\ref{jdefn}) becomes
\begin{equation}\label{newJ}
J_n:=\frac{x_{n+2k}+x_n+a}{x_{n+k}}
\end{equation}
which is periodic with period $l$, while the coefficients 
$A(s), B(s)$ and $C(s)$ in (\ref{2dlin2})  become $k$-periodic quantities, denoted $A_n,B_n,C_n$.  The linear relations  (\ref{periodiccoefficients1}) and (\ref{periodiccoefficients2}) can also be constructed directly 
from the observation that $J_n$ defined by (\ref{newJ}) has period $l$. (See \cite{honeward} 
for details in the case $k=1$.) 
\end{proof}

Making appropriate adjustments compared with the case of Little Pi, we redefine 
$$ 
\Psi_n:=
\begin{bmatrix}
x_n & x_{n+k} & x_{n+2k} \\
x_{n+l} & x_{n+k+l} & x_{n+2k+l} \\
x_{n+2l} & x_{n+k+2l} & x_{n+2k+2l}
\end{bmatrix}, 
$$ 
and set 
$$ 
L_n:= 
\begin{bmatrix}
0 & 0 & 1  \\
1 & 0 & -J_n-1 \\
0 & 1 & J_{n+k}+1 
\end{bmatrix}, \qquad 
\tilde{L}_n:=
\begin{bmatrix}
0 & 1 & 0 \\
0 & 0 & 1 \\
-A_n & -B_n & -C_n 
\end{bmatrix}, 
$$ 
so that we have the linear matrix equations 
\beq\label{linmat} 
\Psi_{n+k}=\Psi_nL_n,  \qquad \Psi_{n+l}=\tilde{L}_n\Psi_n. 
\eeq 
As before, we can make the assumption $\gcd(k,l)=1$, since otherwise 
(\ref{1drecurrence}) splits into several copies of lower dimension. 

\begin{theorem}\label{extlin}
When $\gcd(k,l)=1$, 
the iterates of (\ref{1drecurrence}) satisfy the constant coefficient linear relation
\beq\label{1stK}
x_{n+3kl}-\mathcal{K}x_{n+2kl}+\mathcal{K}x_{n+kl}-x_n=0
\eeq
where $\mathcal{K}$ is the trace of the monodromy matrix, 
\[
\mathcal{K}=\tr(M_n), \qquad M_n := L_nL_{n+k}\cdots L_{n+k(l-1)}, 
\]
which is a first integral, 
as well as the linear relation 
\beq\label{2ndK} 
x_{n+(2k+1)l}-x_{n+2kl} -(\mathcal{K}-1)(x_{n+(k+1)l}-x_{n+kl})+x_{n+l}-x_n=0. 
\eeq 
\end{theorem}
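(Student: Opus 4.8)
The plan is to mirror the monodromy/Cayley--Hamilton strategy already used to prove Theorem~\ref{linearrelationtheorem}, but now applied to the pair of matrices $L_n,\tilde L_n$ redefined for the reduction (\ref{1drecurrence}). First I would establish, exactly as in the Little Pi case, that the two monodromy matrices $M_n=L_nL_{n+k}\cdots L_{n+k(l-1)}$ and $\tilde M_n=\tilde L_{n+(k-1)l}\cdots\tilde L_{n+l}\tilde L_n$ both act as the total shift by $kl$ on $\Psi_n$, that is $\Psi_nM_n=\Psi_{n+kl}=\tilde M_n\Psi_n$. By the cyclic property of the trace together with the $l$-periodicity of $L_n$ and the $k$-periodicity of $\tilde L_n$, the quantity $\mathcal{K}=\tr(M_n)$ is $l$-periodic and $\tr(\tilde M_n)$ is $k$-periodic; but the similarity relation $M_n=\Psi_n^{-1}\Psi_{n+kl}$ and $\tilde M_n=\Psi_{n+kl}\Psi_n^{-1}$ force $\tr(M_n)=\tr(\tilde M_n)$, so this common value has period $\gcd(k,l)=1$ and is therefore a genuine first integral, independent of $n$.

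Next I would pin down the characteristic polynomial of $M_n$. Since $\det L_n=1$ we have $\det M_n=1$, so Cayley--Hamilton gives $M_n^3-\tr(M_n)M_n^2+c\,M_n-I=0$ for some scalar $c$. The crucial step is to show $c=\tr(M_n^{-1})=\mathcal{K}$, i.e.\ the analogue of Proposition~\ref{traceequality} for this equation. I would obtain this by the permutation argument already deployed: first verify the base case (here $l=1$, where $M_n=L_n$ and one checks $\tr(L_n)=\tr(L_n^{-1})$ directly from the explicit entries and the periodicity of $J_n$), and then observe that for general coprime $l$ the identity $\tr(M_n)=\tr(M_n^{-1})$ is obtained from the $l=1$ identity by the index permutation $\sigma(i)=il\bmod$ (appropriate modulus), which merely reorders the quantities $J_j$ appearing in the entries. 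With $c=\mathcal{K}$ in hand, premultiplying the Cayley--Hamilton relation by $\Psi_n$ and reading off the top-left entry yields (\ref{1stK}) immediately, as in Theorem~\ref{linearrelationtheorem}.

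For the second relation (\ref{2ndK}) I would exploit the factored structure of $L_n$. The relation (\ref{1stK}) can be written operationally as $(\mathcal{S}^{kl}-1)\bigl(\mathcal{S}^{2kl}-(\mathcal{K}-1)\mathcal{S}^{kl}+1\bigr)$ annihilating $x_n$, since $\mathcal{S}^{3kl}-\mathcal{K}\mathcal{S}^{2kl}+\mathcal{K}\mathcal{S}^{kl}-1=(\mathcal{S}^{kl}-1)(\mathcal{S}^{2kl}-(\mathcal{K}-1)\mathcal{S}^{kl}+1)$. The target (\ref{2ndK}) is what one gets after replacing the outer factor $(\mathcal{S}^{kl}-1)$ by $(\mathcal{S}^{l}-1)$, so the natural route is to show that $(\mathcal{S}^{l}-1)$ applied to $x_n$ already satisfies the lower-order recurrence $\bigl(\mathcal{S}^{2kl}-(\mathcal{K}-1)\mathcal{S}^{kl}+1\bigr)$. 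Concretely, I would track the inhomogeneous relation (\ref{periodiccoefficients1}), whose telescoped/first-difference origin means the sequence $y_n:=x_{n+l}-x_n$ inherits a shortened linear relation; alternatively one recognizes that the first row of $L_n$ contributes the ``$+1$'' shifts visible in (\ref{periodiccoefficients1}), and that conjugating or projecting the Cayley--Hamilton identity onto the difference sequence lowers the degree of the minimal polynomial by exactly the factor corresponding to the eigenvalue $1$ of the companion-type block.

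The main obstacle I anticipate is making the ``$c=\mathcal{K}$'' step fully rigorous for all coprime $l$ and $k$ simultaneously, rather than case by case. The permutation argument establishes that the required trace identity is a \emph{consequence} of the $l=1$ (or $k=1$) identities only if one knows the map from initial data to the relevant periodic quantities is surjective onto the variety they cut out---precisely the implicit surjectivity assumption flagged in the remark after Proposition~\ref{traceequality}. To sidestep this, I expect the cleanest path is the direct algebraic argument referenced there, namely the method used to prove Theorem~4.5 in \cite{honeward}, which verifies $\tr(M_n)=\tr(M_n^{-1})$ intrinsically from the structure of the $L_n$ without invoking surjectivity; carrying that over verbatim (with $J_n+1$ in place of the Little Pi coefficients) should close the gap. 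The derivation of (\ref{2ndK}) is then essentially bookkeeping with the shift operator, and I would not expect genuine difficulty there once the operator factorization is set up correctly.
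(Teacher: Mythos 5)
Your proposal is correct and takes essentially the same route as the paper: monodromy matrices and Cayley--Hamilton for (\ref{1stK}), with the trace identity $\tr(M_n)=\tr(M_n^{-1})$ secured by the direct argument of Theorem~4.5 in \cite{honeward}, and (\ref{2ndK}) obtained by showing that $x_{n+2kl}-(\mathcal{K}-1)x_{n+kl}+x_n$ is $l$-periodic (the paper realises this concretely via the inhomogeneous $2\times 2$ system $\Phi_{n+k}=\Phi_nL_n^*-aC^*$ and its monodromy, then applies $\mathcal{S}^l-1$). Your anticipated obstacle and its resolution coincide exactly with the paper's remark and its chosen fix.
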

\begin{proof}
From the matrix linear relations (\ref{linmat}) we have  
$$
\Psi_{n+kl}=\Psi_nM_n=\tilde{M}_n\Psi_n, 
$$ 
where the second monodromy matrix is 
$$
%M_n:=L_nL_{n+l}\ldots L_{n+(k-1)l} \qquad 
\tilde{M}_n:=\tilde{L}_{n+(k-1)l}\tilde{L}_{n+(k-2)l}\cdots \tilde{L}_{n+l}\tilde{L}_n. 
$$
Then since the entries of $L_n$ and $\tilde{L}_n$ have periods $l$ and $k$ 
respectively, it follows that 
$\mathcal{K}=\tr(M_n)=\tr(\tilde{M}_n)$ has period $\gcd(k,l)=1$, by assumption. 
By an analogous permutation argument to the one used in the proof of Proposition \ref{traceequality} and in 
the proof of  
Proposition \ref{traceequality2}, we have that 
$\tr(M_n)=\tr(M_n^{-1})$, and then the relation (\ref{1stK}) follows by applying 
the Cayley-Hamilton theorem to $M_n$, just as in the proof of Theorem \ref{linearrelationtheorem}. 
However, we can get a stronger result by considering (\ref{newJ}), and defining the matrices 
$$ 
\Phi_n = \begin{bmatrix} 
x_n & x_{n+k} \\ 
x_{n+l} & x_{n+k+l}  
\end{bmatrix} , 
\quad 
L_n^* = \begin{bmatrix} 
0 & -1 \\ 
1 & J_n   
\end{bmatrix} , 
\quad C^* = 
\begin{bmatrix} 
0 & 1 \\ 
0 & 1   
\end{bmatrix} , 
$$ 
which are related by the inhomogeneous equation 
$$ 
\Phi_{n+k} = \Phi_n L_n^* - a C^*. 
$$ 
Then paraphrasing the steps of the proof of Theorem 4.5 in 
\cite{honeward},  % . The key is to
we introduce the $2\times 2$ monodromy matrix 
$$ 
M_n^* = L_n^* L_{n+k}^*\cdots L_{n+k(l-1)}^*, 
$$ 
and find a  matrix equation of the form  
\beq\label{inhgsphi}
\Phi_{n+2kl}-\kappa\, \Phi_{n+kl}+\Phi_n = \tilde{C}_n^*,  
\eeq  
with $$\kappa = \tr( M_n^*), $$ 
where (like those of $L_n^*$) the entries of the matrix $\tilde{C}_n^*$ are periodic with period $l$.
The  top leftmost entry of (\ref{inhgsphi}) gives the equation 
\beq\label{inhgsxn}
x_{n+2kl}-\kappa\, x_{n+kl} +x_n =\tilde{J}_n, 
\eeq 
for some $l$-periodic quantity $\tilde{J}_n$, and 
if we apply the operator $\mathcal{S}^{kl}-1$ then we obtain (\ref{1stK}) together with the relation 
\beq\label{kapparel} 
\tr (M_n) = \mathcal{K}=\tr (M_n^*)+1; 
\eeq 
this also gives an independent proof that  $\tr (M_n) =\tr (M_n^{-1})$. 
However, we can instead apply the operator  $\mathcal{S}^{l}-1$ to (\ref{inhgsxn}), 
giving the homogeneous linear relation (\ref{2ndK}), which 
is of lower order than (\ref{1stK}) when $k>1$. 
\end{proof}

\begin{remark} 
For $k=1$, the quantity $\kappa=\tr(M_n^*)$ is given by the 
explicit formula 
$$ 
\kappa = \prod_{i=0}^{l-1}\left( 
1-\frac{\partial^2}{\partial J_i \partial J_{i+1}}\right) \, 
\prod_{n=0}^{l-1}J_n ,  
$$ 
and the formula for $k>1$ is obtained by a permutation of indices.  
The term of each distinct homogeneous degree in the above expression 
corresponds to a first integral of the dressing chain 
for one-dimensional Schr\"{o}dinger operators (see \cite{honeward} and 
references).   
With minor modifications, the preceding argument shows that the trace of 
the $3\times 3$ monodromy matrix  $M_n$ in (\ref{M_ndef}), defined by a product 
over matrices $L_n$ as in (\ref{lltdef}), is related via (\ref{kapparel}) 
to the trace of a $2\times 2$ monodromy matrix $M_n^*$ expressed in terms of $l$-periodic entries $J_n=K_n-1$,  
and this gives an independent proof of Proposition \ref{traceequality}, 
deriving $\mathrm{tr}(M_n) =\tr(M_n^{-1})$ 
as an equality between cyclically symmetric functions of $K_0,K_1,\ldots ,K_{l-1}$. 
\end{remark} 

\begin{remark}
The birational map  $\varphi$ in dimension $2k+l$ defined by (\ref{1drecurrence}) is measure-preserving, 
with the  volume form 
$$ 
\Omega = \frac{\rd x_0\wedge \rd x_1 \wedge \cdots \wedge \rd x_{2k+l-1}}{x_kx_{k+1}\cdots x_{k+l-1}}
$$
such that 
$$ 
\varphi^* \Omega = (-1)^l \Omega. 
$$ 
There should be $l$ independent $l$-periodic quantities $J_n$, which appear as 
coefficients in (\ref{periodiccoefficients1}), but it is unclear how many of the 
$k$-periodic quantities appearing in  
(\ref{periodiccoefficients2}) should be independent, so the question of superintegrability of 
(\ref{1drecurrence})  remains open. 
\end{remark} 

\section{Laurent property for linearizable lattice equations }\label{boatsection}

In this section we discuss the Laurent property for both of the 6-point lattice equations 
(\ref{2dlittlepiintro}) and (\ref{lattice}). 
Following van der Kamp \cite{boat}, 
we construct a family of bands of initial values,  as well as some special sets, 
that give well-defined solutions on the whole $\mathbb{Z}^2$ lattice. 
Given suitable conditions on the initial values,  linear relations with coefficients fixed in one lattice direction can be used to prove the Laurent property. We show that the band sets of initial values satisfy the necessary criteria.

\begin{definition}
Let $I$ denote a
set of initial values for a lattice equation, and let $\mathcal{L}$ denote the ring  of 
Laurent polynomials generated by $I$, that is
\[
\mathcal{L}:=\mathbb{Z}[I,I^{-1}]
\]
where $I^{-1}=\{1/u:u\in I\}$. For this $I$, given an additional set of coefficients $A$ 
appearing in the lattice equation, we say 
that a two-dimensional lattice equation satisfies the Laurent property, or is Laurent,  if
\[
u_{s,t}\in \mathcal{L}[A]
\]
for all $s,t\in\mathbb{Z}^2$. % and a set of parameters $A$. 
\end{definition}

In the equation (\ref{lattice}) there is a single coefficient $a$, so we have $A=\{a\}$, and 
we write $\mathcal{L}[a]$ for the ring of Laurent polynomials associated with an initial value set $I$.  

\subsection{Construction of band sets of initial values}\label{bandconstruction}

In \cite{boat} an algorithm is given which finds (in almost all cases) an unique  solution 
to a lattice equation on an arbitrary stencil, %if $I$ 
given a band of initial values $I$. 
We apply this to the $6$-point domino-shaped stencil that (\ref{lattice}) is defined on. 

First we define the lines $L_1$ and $L_2$, each with positive rational gradient, such that $L_2=L_1+(1,-2)$. 
For a given pair of lines $L_1,L_2$ related in this way, the associated band set of 
initial values $I=I(L_1,L_2)$ consists of all the 
lattice points lying between these two lines, including the points on $L_1$ but not those on $L_2$. 
An example with gradient $1/3$ is shown in Figure \ref{gradient 1/3}, 
where the points in the band set $I$ are coloured yellow. 
We will consider  each of the points on $L_1$ to be the top left corner of a $6$-point domino, hence $L_2$ will pass through the points diagonally opposite. 
By the results of \cite{boat}, taking initial values between these lines and on $L_1$ (but not on $L_2$) allows us to find 
an unique solution of (\ref{lattice}) for each choice of gradient. 
The first step is to calculate the values on $L_2$, drawn in blue, using the yellow initial values. 
We then shift the lines perpendicularly to their direction until they pass through another point of the domino, corresponding to the dashed lines in 
the figure. The new $L_2$ will pass through the next points to be calculated, drawn in red. This process is continued until we fill the whole lattice below $L_1$. We can also shift the lines in the opposite direction to fill the whole lattice above $L_1$.

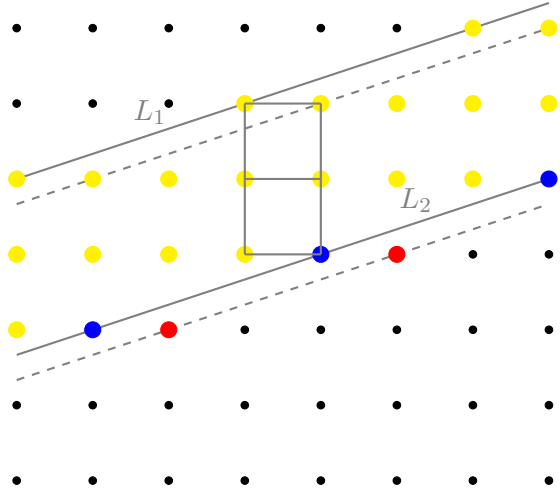
\begin{figure}

\begin{center}
	
\begin{tikzpicture}

\foreach \x in {0,1,...,7}
{
\foreach \y in {1,...,7}
{
\node[draw,circle,inner sep=1pt,fill] at (1*\x,1*\y) {};
}
}
\draw[gray, thick] (0,5) -- node[near start,above] {$L_1$} (7,7+1/3);
\draw[gray, dashed, thick] (0,5-1/3) -- (7,7);
\draw[gray, thick] (0,3-1/3) -- node[near end,above] {$L_2$}(7,5);
\draw[gray, dashed, thick] (0,3-2/3) -- (7,5-1/3);
\filldraw[blue] (4,4) circle (3pt) ;
\filldraw[blue] (1,3) circle (3pt) ;
\filldraw[blue] (7,5) circle (3pt) ;
\filldraw[red] (2,3) circle (3pt) ;
\filldraw[red] (5,4) circle (3pt) ;
\filldraw[yellow] (3,6) circle (3pt) ;
\filldraw[yellow] (4,6) circle (3pt) ;
\filldraw[yellow] (3,5) circle (3pt) ;
\filldraw[yellow] (4,5) circle (3pt) ;
\filldraw[yellow] (3,4) circle (3pt) ;
\filldraw[yellow] (0,5) circle (3pt) ;
\filldraw[yellow] (1,5) circle (3pt) ;
\filldraw[yellow] (1,4) circle (3pt) ;
\filldraw[yellow] (0,4) circle (3pt) ;
\filldraw[yellow] (0,3) circle (3pt) ;
\filldraw[yellow] (6,7) circle (3pt) ;
\filldraw[yellow] (7,7) circle (3pt) ;
\filldraw[yellow] (6,6) circle (3pt) ;
\filldraw[yellow] (7,6) circle (3pt) ;
\filldraw[yellow] (6,5) circle (3pt) ;
\filldraw[yellow] (2,4) circle (3pt) ;
\filldraw[yellow] (2,5) circle (3pt) ;
\filldraw[yellow] (5,5) circle (3pt) ;
\filldraw[yellow] (5,6) circle (3pt) ;
\draw[gray, thick] (3,4) -- (3,6);
\draw[gray, thick] (3,4) -- (4,4);
\draw[gray, thick] (4,4) -- (4,6);
\draw[gray, thick] (3,6) -- (4,6);
\draw[gray, thick] (3,5) -- (4,5);
\end{tikzpicture}

\end{center}

\caption{Initial values on the band with gradient 1/3} \label{gradient 1/3}

\end{figure} 

\subsection{The Laurent property for lattice equation (\ref{lattice})}

To prove the Laurent property we will use the linear relation (\ref{inhomogenous}), but first we must prove that the coefficients $J(t)$ belong to  the ring of Laurent polynomials.

\begin{lemma}\label{Laurentlemma}
If we have an $\tilde{s}$ such that $u_{\tilde{s},t+1}\in I$ and
\[
\{u_{\tilde{s},t},u_{\tilde{s},t+2}\}\subset \mathcal{L}[a]
\] 
then $J(t)\in \mathcal{L}[a]$.
\end{lemma}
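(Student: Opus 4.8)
The plan is to exploit the single most important feature of $J(t)$ established earlier, namely that it is independent of the lattice variable $s$. Recall that dividing (\ref{lattice}) by $u_{s,t+1}u_{s+1,t+1}$ produced the conservation law $\Delta_s J =0$, so the quantity defined by (\ref{jdefn}) takes one and the same value for every choice of $s$. In particular we are free to evaluate it along the specific row $s=\tilde{s}$ singled out in the hypothesis, writing
$$
J(t)=\frac{u_{\tilde{s},t+2}+u_{\tilde{s},t}+a}{u_{\tilde{s},t+1}}.
$$

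The remaining work is simply to read off membership in $\mathcal{L}[a]$ from this expression. For the numerator, the hypothesis gives $u_{\tilde{s},t},u_{\tilde{s},t+2}\in\mathcal{L}[a]$, and since $a\in\mathcal{L}[a]$ by the definition of the coefficient ring, the sum $u_{\tilde{s},t+2}+u_{\tilde{s},t}+a$ lies in $\mathcal{L}[a]$. For the denominator, the hypothesis $u_{\tilde{s},t+1}\in I$ ensures that $u_{\tilde{s},t+1}$ is a unit in $\mathcal{L}=\mathbb{Z}[I,I^{-1}]$, so that its reciprocal $1/u_{\tilde{s},t+1}$ already belongs to $\mathcal{L}$, and hence to $\mathcal{L}[a]$. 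Multiplying the Laurent-polynomial numerator by this unit keeps us inside $\mathcal{L}[a]$, which yields $J(t)\in\mathcal{L}[a]$ as claimed.

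There is essentially no obstacle to overcome: the content of the lemma resides entirely in the freedom to choose $s=\tilde{s}$, which is legitimate precisely because of the $s$-independence of $J$. The only point requiring a moment's care is to confirm that it is the \emph{initial value} $u_{\tilde{s},t+1}$ sitting in the denominator — this is exactly what makes its reciprocal a Laurent monomial rather than some unknown rational function — whereas the numerator entries $u_{\tilde{s},t}$ and $u_{\tilde{s},t+2}$ need only lie in $\mathcal{L}[a]$ and may themselves be nontrivial Laurent polynomials arising from earlier iterations. I expect this lemma to serve as the inductive engine that propagates the Laurent property across successive rows of the band, with the $J(t)$ produced here feeding into the inhomogeneous linear relation (\ref{inhomogenous}) to express the next iterates.
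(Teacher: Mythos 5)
Your proof is correct and follows essentially the same route as the paper: both arguments rest on the $s$-independence of $J(t)$ to justify evaluating (\ref{jdefn}) at $s=\tilde{s}$, and then read off membership in $\mathcal{L}[a]$ since the numerator lies in $\mathcal{L}[a]$ and the denominator is an invertible element of $\mathcal{L}$. Your version merely spells out the unit/reciprocal step that the paper leaves implicit.
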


\begin{proof}	
Since $J(t)$ defined by (\ref{jdefn}) 
is independent of $s$ we may shift it in the $s$ direction until the index value $\tilde{s}$ appears, 
and then we have 	
\[
J(t)=\frac{u_{\tilde{s},t+2}+u_{\tilde{s},t}+a}{u_{\tilde{s},t+1}}\in \mathcal{L}[a]
\]	
as required. 	
\end{proof}

\begin{theorem}\label{Laurentnesstheorem}
For a given initial value set $I$, if 	 
Lemma \ref{Laurentlemma} holds for each $t$,  and if for each $s$ there is some $\tilde{t}$ such that 
\[
\{u_{s,\tilde{t}}, u_{s,\tilde{t}+1}\}\subset \mathcal{L}[a], 
\] 
then (\ref{lattice}) has the Laurent property for this $I$.
\end{theorem}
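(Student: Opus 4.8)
The plan is to treat the second-order linear relation (\ref{inhomogenous}) as a recurrence in $t$ along each vertical column $\{s\}\times\mathbb{Z}$ separately, using the two hypotheses to supply, respectively, the coefficients and the seed data for that recurrence. First I would record what the two hypotheses actually give. The assumption that Lemma \ref{Laurentlemma} holds for each $t$ is precisely the statement that the coefficient $J(t)$ defined in (\ref{jdefn}) lies in $\mathcal{L}[a]$ for every $t\in\mathbb{Z}$; thus all coefficients appearing in (\ref{inhomogenous}) are already in the ring $\mathcal{L}[a]$. The second assumption gives, for each column index $s$, a pair of \emph{consecutive} entries $\{u_{s,\tilde t},u_{s,\tilde t+1}\}\subset\mathcal{L}[a]$ to start from.

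The structural point that makes the argument go through is that in (\ref{inhomogenous}) the coefficients of both outer terms $u_{s,t+2}$ and $u_{s,t}$ are units ($\pm1$), so the relation can be solved for either of them with no division. Explicitly I would use the forward form $u_{s,t+2}=J(t)u_{s,t+1}-u_{s,t}-a$ and the backward form $u_{s,t}=J(t)u_{s,t+1}-u_{s,t+2}-a$. In each case the right-hand side is a $\mathbb{Z}[a]$-linear combination of the three entries with coefficient $J(t)\in\mathcal{L}[a]$, so, since $\mathcal{L}[a]$ is a ring, once two consecutive entries of a column lie in $\mathcal{L}[a]$ the third does too. This is exactly the feature that makes the \emph{linear} relation far more convenient for Laurentness than the nonlinear recurrence (\ref{1drecurrence}), where solving for a boundary value forces one to divide by an iterate.

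I would then fix $s$, take the seed pair $\{u_{s,\tilde t},u_{s,\tilde t+1}\}$ supplied by the second hypothesis, and run the two inductions. Feeding the pair into the forward form and inducting certifies $u_{s,t}\in\mathcal{L}[a]$ for all $t\ge\tilde t$, while the backward form and a downward induction handle all $t\le\tilde t+1$; together these exhaust the column $\{s\}\times\mathbb{Z}$. Since $s$ was arbitrary, $u_{s,t}\in\mathcal{L}[a]$ for every $(s,t)\in\mathbb{Z}^2$, which is the asserted Laurent property for $I$.

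The one step I expect to need care is guarding against a latent circularity in the first hypothesis: Lemma \ref{Laurentlemma} requires $u_{\tilde s,t},u_{\tilde s,t+2}\in\mathcal{L}[a]$, and if those were computed rather than initial values their Laurentness would itself depend on the column-filling being performed. The hard part is therefore to ensure that certifying $J(t)$ never presupposes a value not yet obtained. For the band sets of Section \ref{bandconstruction} this does not bite, since the band geometry places three consecutive vertical lattice points of some column inside $I$ at every vertical level, so the entries used to certify $J(t)$ are genuine initial values and hence units in $\mathcal{L}[a]$; I would either quote this property of the band, or, to keep the proof self-contained, merge the two per-column inductions into a single induction proceeding outward from the band, in which each $J(t)$ is certified from column entries already shown to lie in $\mathcal{L}[a]$ before it is used. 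Everything beyond this bookkeeping is a routine ring computation.
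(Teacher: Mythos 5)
Your proof is correct and is essentially identical to the paper's: both run a two-sided induction along each column using $u_{s,t+2}=J(t)u_{s,t+1}-u_{s,t}-a$ (and its backward counterpart), with $J(t)\in\mathcal{L}[a]$ supplied by Lemma \ref{Laurentlemma} and the consecutive pair $\{u_{s,\tilde t},u_{s,\tilde t+1}\}$ as the base case. Your closing worry about circularity in certifying $J(t)$ is not needed for this theorem, since Laurentness of $J(t)$ is taken as a hypothesis here and is only verified later (for band sets) in Theorem \ref{laurentforbandtheorem}.
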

\begin{proof}
For each $s$ we use induction on $t$ and the relation		
\[
u_{s,t+2}=J(t)u_{s,t+1}-u_{s,t}-a
\]		
noting that, by the previous lemma, $J(t)$ is in the Laurent ring. The base case for the induction is given by	
\[
u_{s,\tilde{t}+2}=J(\tilde{t})u_{s,\tilde{t}+1}-u_{s,\tilde{t}}-a.
\]
This proves Laurentness for $t>\tilde{t}+1$, and  the proof for $t<\tilde{t}$ is similar.		
\end{proof}

\begin{theorem}\label{laurentforbandtheorem}	
The Laurent property for (\ref{lattice}) holds for the band sets of initial values $I$, as described in Subsection \ref{bandconstruction}.
\end{theorem}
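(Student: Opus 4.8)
The plan is to verify that the band sets $I$ of Subsection~\ref{bandconstruction} satisfy the two hypotheses of Theorem~\ref{Laurentnesstheorem}, after which the conclusion is immediate. The crucial structural observation is that the within-column propagation relation $u_{s,t+2}=J(t)u_{s,t+1}-u_{s,t}-a$ coming from (\ref{inhomogenous}) is \emph{division-free}, so the only place a denominator enters is the definition (\ref{jdefn}) of $J(t)$, where one divides by a value $u_{\tilde s,t+1}$ that we will arrange to be an initial value, hence a unit of $\mathcal{L}[a]$. Thus no genuine cancellation argument is required, and the whole proof reduces to a count of lattice points lying in the band.

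I would set up coordinates so that $L_1$ is the line $y=gx+c_1$ with $g>0$ the common (rational) gradient; since $L_2=L_1+(1,-2)$, the lower line is $L_2:\ y=gx+c_1-(g+2)$, and by definition $I$ consists of the lattice points $(s,y)$ with $L_2(s)<y\le L_1(s)$. First I would record the base-case hypothesis of Theorem~\ref{Laurentnesstheorem}: each column contains at least two consecutive initial values, because the half-open interval $(L_2(s),L_1(s)]$ has length $g+2>2$ and therefore contains at least two consecutive integers. Hence for every $s$ there is a $\tilde t$ with $u_{s,\tilde t},u_{s,\tilde t+1}\in I\subset\mathcal{L}[a]$.

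Next I would verify the hypothesis of Lemma~\ref{Laurentlemma} for every $t$ by producing, for each integer $t$, a single column containing the three consecutive initial values $u_{s,t},u_{s,t+1},u_{s,t+2}$. The requirement $t,t+1,t+2\in(L_2(s),L_1(s)]$ is equivalent to $gs+c_1\in[t+2,\,t+2+g)$, a half-open interval of length exactly $g$. As $s$ ranges over $\mathbb{Z}$ the quantities $gs+c_1$ form an arithmetic progression of common difference $g$, so precisely one integer $s=s_t$ lands in this interval; for that column all three points lie strictly above $L_2$ and on or below $L_1$, hence belong to $I$. Taking $\tilde s=s_t$ in Lemma~\ref{Laurentlemma} then gives $J(t)\in\mathcal{L}[a]$ for every $t$ using only initial values, so there is no circularity between this step and the column induction.

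With both hypotheses established, Theorem~\ref{Laurentnesstheorem} yields the Laurent property on the band set $I$, and the results of van der Kamp~\cite{boat} quoted in Subsection~\ref{bandconstruction} guarantee that this $I$ determines a unique solution on all of $\mathbb{Z}^2$, so the statement holds for the whole lattice. The only real work is the elementary point-counting above; the one place demanding care is the half-open convention (include $L_1$, exclude $L_2$), where I must keep the strict and non-strict inequalities consistent so that the three certified points are shown to lie in $I$ rather than on the excluded line $L_2$. I do not anticipate any substantive obstacle beyond this bookkeeping, precisely because the linearization makes the column recurrence polynomial and confines every division to an initial value.
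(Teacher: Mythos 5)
Your proof is correct, and it reaches the conclusion by the same overall reduction as the paper --- both verify the two hypotheses of Theorem \ref{Laurentnesstheorem} (via Lemma \ref{Laurentlemma}) by counting lattice points in the band --- but your verification of the Lemma \ref{Laurentlemma} hypothesis is genuinely different and in fact sharper. The paper first enlarges the band: it notes that the values adjacent to $I$ (the ``green'' points) are obtained by a single division by an initial value, hence lie in $\mathcal{L}[a]$, and then argues from the offset $L_2'=L_1'+(1,-4)$ between the widened lines that every line meeting $I$ meets at least four elements of $\mathcal{L}$, at least one of them in $I$. Your argument shows the widening is unnecessary: writing column $s$ of the band as the half-open interval $(L_2(s),L_1(s)]$ of length $g+2$, the condition that $t,t+1,t+2$ all lie in that column is equivalent to $gs+c_1\in[t+2,t+2+g)$, an interval whose length exactly matches the spacing of the arithmetic progression $\{gs+c_1\}_{s\in\Z}$, so for every $t$ precisely one column contains all three points as genuine initial values; thus $J(t)$ has a monomial denominator drawn from $I$ itself and there is no circularity. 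This buys two things: it dispenses with the green-point bookkeeping entirely, and it supplies exactly what Lemma \ref{Laurentlemma} asks for --- three consecutive $t$-values in a single column $\tilde s$ with the middle one in $I$ --- whereas the paper's phrase ``horizontal lines with height $t$'' produces points sharing a common $t$, so your vertical count is the cleaner reading of that step. Your base case (length $g+2>2$ forces two consecutive integers per column) and the appeal to van der Kamp for well-posedness coincide with the paper's; the gradient-$0$ configuration is treated separately in the paper and lies outside the band sets of Subsection \ref{bandconstruction}, so your standing assumption $g>0$ is legitimate.
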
 

\begin{proof}
To calculate $u_{s,t}$ we only have to divide by $u_{s+1,t+1}$ and vice versa, so we know all the values we calculate are Laurent polynomials 
until we have to divide at one of the blue points in Figure \ref{gradient 1/3}, and the corresponding points above $L_1$. These we mark in green in Figure \ref{gradient 1/3withgreen}. We draw $L'_2$ parallel to and below $L_2$ through the first non-green point and $L'_1$ parallel to and above $L_1$ through the last green point. Equivalently 
\[
L'_2=L_2+(-1,-1), \qquad L'_1=L_1+(-1,1), 
\]
hence $L'_2=L'_1+(1,-4)$. Since the minimal distance between $L'_1$ and $L'_2$ is $\sqrt{17}>4$ any line that intersects $I$ will intersect at least four elements of $\mathcal{L}$. For Lemma \ref{Laurentlemma} we take horizontal lines with height $t$ and see that they intersect at least four green or yellow points, at least one of which will be yellow. Hence $J(t)\in \mathcal{L}$ for all $t$. For Theorem \ref{Laurentnesstheorem} we take vertical lines for each $s$ and see that these intersect at least two green or yellow points. Hence the conditions of the 
preceding theorem hold and we have the Laurent property for these initial values.
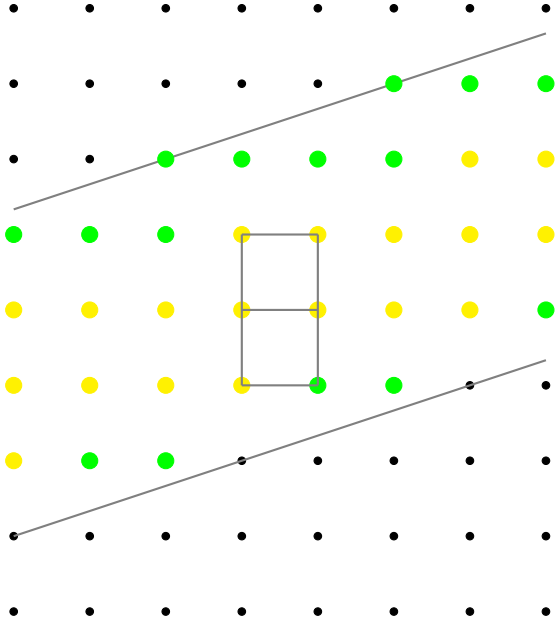
\begin{figure}
	
	\begin{center}
		
		\begin{tikzpicture}
		
		\foreach \x in {0,1,...,7}
		{
			\foreach \y in {1,...,9}
			{
				\node[draw,circle,inner sep=1pt,fill] at (1*\x,1*\y) {};
			}
		}
		\draw[gray, thick] (0,19/3) -- (7,26/3);
		\draw[gray, thick] (0,2) -- (7,13/3);
		\filldraw[green] (4,4) circle (3pt) ;
		\filldraw[green] (1,3) circle (3pt) ;
		\filldraw[green] (7,5) circle (3pt) ;
		\filldraw[green] (2,3) circle (3pt) ;
		\filldraw[green] (5,4) circle (3pt) ;
		\filldraw[green] (2,6) circle (3pt) ;
		\filldraw[green] (1,6) circle (3pt) ;
		\filldraw[green] (0,6) circle (3pt) ;
		\filldraw[green] (2,7) circle (3pt) ;
		\filldraw[green] (3,7) circle (3pt) ;
		\filldraw[green] (4,7) circle (3pt) ;
		\filldraw[green] (5,7) circle (3pt) ;
		\filldraw[green] (5,8) circle (3pt) ;
		\filldraw[green] (6,8) circle (3pt) ;
		\filldraw[green] (7,8) circle (3pt) ;
		\filldraw[yellow] (3,6) circle (3pt) ;
		\filldraw[yellow] (4,6) circle (3pt) ;
		\filldraw[yellow] (3,5) circle (3pt) ;
		\filldraw[yellow] (4,5) circle (3pt) ;
		\filldraw[yellow] (3,4) circle (3pt) ;
		\filldraw[yellow] (0,5) circle (3pt) ;
		\filldraw[yellow] (1,5) circle (3pt) ;
		\filldraw[yellow] (1,4) circle (3pt) ;
		\filldraw[yellow] (0,4) circle (3pt) ;
		\filldraw[yellow] (0,3) circle (3pt) ;
		\filldraw[yellow] (6,7) circle (3pt) ;
		\filldraw[yellow] (7,7) circle (3pt) ;
		\filldraw[yellow] (6,6) circle (3pt) ;
		\filldraw[yellow] (7,6) circle (3pt) ;
		\filldraw[yellow] (6,5) circle (3pt) ;
		\filldraw[yellow] (2,4) circle (3pt) ;
		\filldraw[yellow] (2,5) circle (3pt) ;
		\filldraw[yellow] (5,5) circle (3pt) ;
		\filldraw[yellow] (5,6) circle (3pt) ;
		\draw[gray, thick] (3,4) -- (3,6);
		\draw[gray, thick] (3,4) -- (4,4);
		\draw[gray, thick] (4,4) -- (4,6);
		\draw[gray, thick] (3,6) -- (4,6);
		\draw[gray, thick] (3,5) -- (4,5);
		\end{tikzpicture}
		
	\end{center}
	
	\caption{The points marked green are in the Laurent ring} \label{gradient 1/3withgreen}
	
\end{figure} 
\end{proof}

In the special case where the gradient is $0$ it is prescribed in \cite{boat} that we should take an extra line of initial values perpendicular to $L_1$ and $L_2$, as shown in Figure \ref{gradient0}, and this case also has the Laurent property. 
However, we note that Laurentness does not hold for all well-posed initial value problems, 
for example the yellow set shown in Figure \ref{notLaurent}. 
In this case one can see from the form of (\ref{lattice}) that to calculate the value of $u_{s,t}$ at the blue node we must divide by a polynomial (not a monomial) in the surrounding initial values.

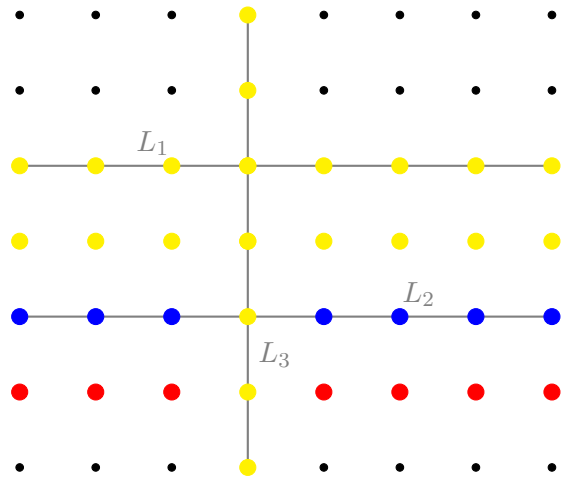
\begin{figure}
	
\begin{center}
		
\begin{tikzpicture}

\foreach \x in {0,1,...,7}
{
\foreach \y in {1,...,7}
{
\node[draw,circle,inner sep=1pt,fill] at (1*\x,1*\y) {};
}
}	
\draw[gray, thick] (0,5) -- node[near start,above] {$L_1$} (7,5);
\draw[gray, thick] (0,3) -- node[near end,above] {$L_2$}(7,3);
\draw[gray, thick] (3,1) -- node[near start,right] {$L_3$} (3,7);
\foreach \x in {0,1,...,7}
{
\filldraw[yellow] (\x,5) circle (3pt) ;
}
\foreach \x in {0,1,...,7}
{
\filldraw[yellow] (\x,4) circle (3pt) ;
}		
\foreach \y in {1,...,7}
{
\filldraw[yellow] (3,\y) circle (3pt);
}
\foreach \x in {0,1,2,4,5,6,7}
{
\filldraw[blue] (\x,3) circle (3pt) ;
}
\foreach \x in {0,1,2,4,5,6,7}
{
\filldraw[red] (\x,2) circle (3pt) ;
}
\end{tikzpicture}				
		
\end{center}
	
\caption{Initial values on the band with gradient 0 with Laurentness} \label{gradient0}
	
\end{figure}

\begin{figure}
	
	\begin{center}
		
		\begin{tikzpicture}
		
		\foreach \x in {0,1,...,7}
		{
			\foreach \y in {1,...,7}
			{
				\node[draw,circle,inner sep=1pt,fill] at (1*\x,1*\y) {};
			}
		}
		\draw[gray, thick] (0,5) -- node[near start,above] {$L_1$} (7,5);
		\draw[gray, thick] (0,3) -- node[near end,above] {$L_2$}(7,3);
		\draw[gray, thick] (3,1) -- node[near start,right] {$L_3$} (3,7);
		\foreach \x in {0,1,...,7}
		{
			\filldraw[yellow] (\x,5) circle (3pt) ;
		}
		\foreach \x in {0,1,...,7}
		{
			\filldraw[yellow] (\x,3) circle (3pt) ;
		}
		\foreach \y in {1,...,7}
		{
			\filldraw[yellow] (3,\y) circle (3pt) ;
		}
		\filldraw[blue] (4,4) circle (3pt);
		\end{tikzpicture}				
		
	\end{center}
	
	\caption{An example of initial values without the Laurent property} \label{notLaurent}
	
\end{figure}
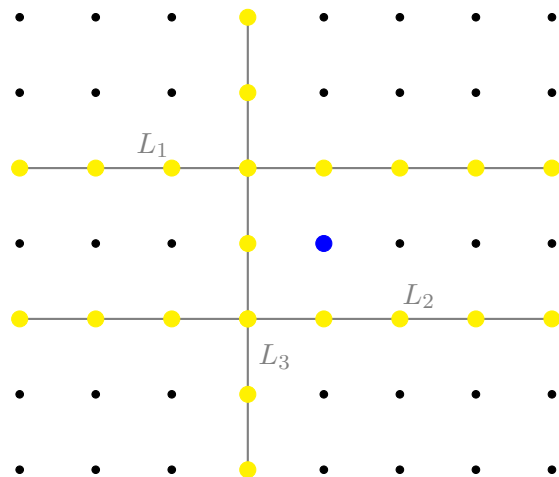 

Note that the 
Laurent property for the reductions (\ref{1drecurrence}) is easily seen from (\ref{newJ}). 
In fact, since $J_n$ has period $l$, the only initial variables that can appear in the denominator are 
$x_{k},x_{k+1},\ldots,x_{k+l-1}$. In particular, setting each of these to be $1$ will give a polynomial sequence in the remaining initial values. So we have 
\begin{corollary} The equation (\ref{1drecurrence}) has the Laurent property in the form 
$$ x_n \in 
\Z [a, x_0, \ldots, x_{k-1}, x_k^{\pm 1}, \ldots, x_{k+l-1}^{\pm 1}, x_{k+l},\ldots, x_{2k+l-1}] 
\qquad \forall n\in\Z. 
$$ 
%for all $n\in\Z$.  
\end{corollary}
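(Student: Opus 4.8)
The plan is to read off the Laurent property for the reduction \eqref{1drecurrence} directly from the linear relation \eqref{periodiccoefficients1} together with the definition \eqref{newJ} of the periodic coefficient $J_n$, rather than deducing it laboriously from the lattice result of Theorem \ref{laurentforbandtheorem}. The key observation is that \eqref{newJ} rearranges to the explicit forward recurrence
\[
x_{n+2k} = J_n\,x_{n+k} - x_n - a,
\]
in which the only denominator that ever enters is that of $J_n$ itself, namely $x_{n+k}$. Because $J_n$ is $l$-periodic, the entire infinite sequence of coefficients $\{J_n\}$ is determined by the finitely many values $J_k, J_{k+1},\ldots, J_{k+l-1}$ (or any $l$ consecutive ones), and each of these is a Laurent polynomial whose only denominators are drawn from the block $x_k, x_{k+1},\ldots, x_{k+l-1}$.

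The steps I would carry out are as follows. First I would fix the initial data $x_0,\ldots, x_{2k+l-1}$ and record that, by \eqref{newJ}, $J_n = (x_{n+2k}+x_n+a)/x_{n+k}$ lies in $\Z[a, x_0^{\pm1},\ldots,x_{2k+l-1}^{\pm1}]$ for every $n$, with the crucial refinement that by $l$-periodicity the value $J_n$ equals $J_{n'}$ for the unique $n'\in\{k,\ldots,k+l-1\}$ congruent to $n$ mod $l$, so its denominator is a single variable among $x_k,\ldots,x_{k+l-1}$. Second, I would run induction on $n$ in both directions using the rearranged relation above (and its backward form $x_n = J_n x_{n+k} - x_{n+2k} - a$) to propagate the Laurent property, observing that at each step one multiplies by an element of $\Z[a, x_0,\ldots,x_{k-1}, x_k^{\pm1},\ldots,x_{k+l-1}^{\pm1}, x_{k+l},\ldots, x_{2k+l-1}]$ and adds polynomial terms, so the ring in the statement is preserved. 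Third, I would note that setting $x_k=\cdots=x_{k+l-1}=1$ clears all denominators, which both confirms that these are the only variables appearing in denominators and yields the stronger polynomiality claim.

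There is essentially no obstacle here beyond bookkeeping: the content is entirely contained in the periodicity of $J_n$ established in Proposition \ref{periodickernel} and the algebraic form of \eqref{newJ}. The one point requiring a word of care is the direction of the induction and the verification that the backward recurrence introduces no new denominators, but since both the forward and backward forms feature the same divisor $x_{n+k}$ and the periodicity pins this down to the central block, the argument closes immediately. I would therefore present this as a short corollary, remarking only that the first integral $b$ in \eqref{genextreme} and the coefficient $a$ ride along harmlessly in the coefficient ring.
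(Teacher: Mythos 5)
Your argument is essentially the paper's own proof: the paper likewise reads off the Laurent property directly from the $l$-periodicity of $J_n$ in (\ref{newJ}), concluding that the only initial variables that can appear in denominators are $x_k,\ldots,x_{k+l-1}$ and that setting these to $1$ yields a polynomial sequence. The one bookkeeping slip is your choice of representative window: to express $J_{n'}$ purely in the initial data (with denominator $x_{n'+k}$ in the central block) you should take $n'\in\{0,\ldots,l-1\}$, not $n'\in\{k,\ldots,k+l-1\}$.
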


\subsection{The Laurent property for the lattice Little Pi}

The Laurent property for the lattice equation (\ref{2dlittlepiintro}) was proved in \cite{kkmt} for 
points $(s,t)$ in the positive quadrant with the initial value set  
\[
I=\{u_{s,0},u_{s,1}, u_{0,t}:s,t\in \mathbb{N}\} 
\]
(note that we switched $s$ and $t$ compared with the original reference). 
We have drawn  the above set $I$  in yellow in Figure \ref{latticelittlepi}, extended 
to include indices $s,t$ in the whole of $\Z$.  
Again we can define the associated Laurent ring $\mathcal{L}$ (without the coefficient $a$, since we set $a\to 1$ here), 
and provide a different proof of Laurentness, similar to the proof for (\ref{lattice}). 

From Theorem 2.1 and Proposition 2.6  in \cite{kkmt}, respectively,  we have that 
\begin{equation}\label{2dlittlepilinear}
u_{s,t+6}-\beta({t+1})u_{s,t+4}+\beta(t)u_{s,t+2}-u_{s,t}=0
\end{equation}
with 
$\beta = \beta(t)$ (independent of $s$) being  given by
\begin{equation}\label{beta}
\beta(t)=\frac{1+u_{0,t}u_{0,t+3}+u_{0,t+1}u_{0,t+4}+u_{0,t+2}u_{0,t+5}}{u_{0,t+2}u_{0,t+3}}
\end{equation} 
Note that in this expression $s$ has been set to zero, but due to the fact that 
$\beta$ is $s$-independent the same formula is valid with each term $u_{0,j}$  
replaced by $u_{s,j}$ for $j=t,t+1,\ldots,t+5$.  
% since $\beta$ is fixed under $s$ shifts. 
Similarly to the proof of Theorem \ref{laurentforbandtheorem}, we colour the values that only require division by elements of $I$  in green in Figure \ref{latticelittlepi}. Due to the shape of (\ref{2dlittlepiintro}) we end up with more green vertices than we had for (\ref{lattice}). 

\begin{proposition}\label{propositionforkkmtinitialvalues}
The lattice equation (\ref{2dlittlepiintro}) has the Laurent property for the initial values 
\[
I=\{u_{s,0},u_{s,1},u_{0,t}:s,t\in \mathbb{Z}\}.
\]
\end{proposition}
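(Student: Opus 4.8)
The plan is to mirror the structure of the proofs of Theorem~\ref{Laurentnesstheorem} and Theorem~\ref{laurentforbandtheorem}: exhibit a finite block of consecutive rows whose entries all lie in the Laurent ring $\mathcal{L}$, verify that the coefficients of the $t$-direction linear relation (\ref{2dlittlepilinear}) lie in $\mathcal{L}$, and then propagate in the $t$-direction using that relation, which involves no division. The two ingredients to establish are therefore that $\beta(t)\in\mathcal{L}$ for every $t\in\mathbb{Z}$, and that enough consecutive rows are initially Laurent to seed (\ref{2dlittlepilinear}).

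The first ingredient is immediate from (\ref{beta}): for every $t$ the expression for $\beta(t)$ involves only the entries $u_{0,t},\ldots,u_{0,t+5}$ of the column $s=0$, all of which lie in $I$, divided by the monomial $u_{0,t+2}u_{0,t+3}$, so $\beta(t)\in\mathcal{L}$ for all $t$. For the second ingredient I would identify the green region of Figure~\ref{latticelittlepi} as those vertices reachable from the yellow set $I$ by steps of (\ref{2dlittlepiintro}) that require only division by an element of $I$. Solving (\ref{2dlittlepiintro}) for its top-right corner $u_{\sigma+1,\tau+2}$ divides by $u_{\sigma,\tau}$, whereas solving for its bottom-right corner $u_{\sigma+1,\tau}$ divides by $u_{\sigma,\tau+2}$. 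Using the former with $\tau\in\{0,1\}$ fills the rows $t=2$ and $t=3$ by a cascade in $s$ seeded by $u_{0,2},u_{0,3}\in I$ and dividing only by the rows $t=0,1$; using the latter with $\tau\in\{-2,-1\}$ fills the rows $t=-1$ and $t=-2$ by a cascade in $s$ seeded by $u_{0,-1},u_{0,-2}\in I$ and again dividing only by the rows $t=0,1$. Together with the rows $t=0,1$, which lie in $I$, this shows that the six consecutive rows $t=-2,-1,0,1,2,3$ lie entirely in $\mathcal{L}$.

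With these two facts in hand the conclusion follows by induction, exactly as in Theorem~\ref{Laurentnesstheorem}. Since the genuine solution determined by $I$ satisfies (\ref{2dlittlepilinear}), and since that relation advances in steps of two in $t$, the six Laurent rows supply three consecutive even-index values $u_{s,-2},u_{s,0},u_{s,2}$ and three consecutive odd-index values $u_{s,-1},u_{s,1},u_{s,3}$ in every column $s$. Rewriting (\ref{2dlittlepilinear}) as $u_{s,t+6}=\beta(t+1)u_{s,t+4}-\beta(t)u_{s,t+2}+u_{s,t}$, and as $u_{s,t}=u_{s,t+6}-\beta(t+1)u_{s,t+4}+\beta(t)u_{s,t+2}$, lets me run the third-order recurrence on each parity class both upward and downward, and because $\beta(t)\in\mathcal{L}$ and no division is involved, every iterate stays in $\mathcal{L}$. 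This yields $u_{s,t}\in\mathcal{L}$ for all $(s,t)\in\mathbb{Z}^2$.

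The step I expect to require the most care is the identification of the six green rows, and in particular the two rows $t=-1,-2$ below the initial strip: one must check that the downward cascade genuinely uses only the monomial divisors in the rows $t=0,1$ and never an already-computed non-monomial value, so that \emph{three} consecutive same-parity seeds are available in each column rather than only two (if one fills upward only, the cascade stalls at $t=3$, since computing $u_{s,4}$ would force division by the non-monomial $u_{s-1,2}$). The remaining points are routine: confirming that (\ref{2dlittlepilinear}) and the $s$-independence of $\beta$ hold for all $t\in\mathbb{Z}$, and not merely in the positive quadrant treated in \cite{kkmt}, and noting that, because the solution is uniquely fixed by $I$, the values produced by the linear recurrence are the actual lattice iterates, so no separate consistency check against the nonlinear equation (\ref{2dlittlepiintro}) is needed.
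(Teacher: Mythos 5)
Your proposal is correct and follows essentially the same route as the paper: establish $\beta(t)\in\mathcal{L}$ directly from (\ref{beta}) using the column $s=0$, verify that the six consecutive rows $t=-2,\ldots,3$ (the yellow and green rows of Figure \ref{latticelittlepi}) lie in $\mathcal{L}$, and then induct on $t$ for each fixed $s$ by solving the division-free relation (\ref{2dlittlepilinear}) for $u_{s,t+6}$ upward and $u_{s,t}$ downward. Your write-up simply makes explicit the cascade argument and the monomial divisors that the paper leaves to the figure.
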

\begin{proof}
Again we fix $s$ and use induction on $t$. The induction starts with the vertical line of six values in $\mathcal{L}$, shown in yellow and green in Figure \ref{latticelittlepi}. We can see from (\ref{beta}) that, 
for this $I$, $\beta(t)\in\mathcal{L}$ for all $t$,  and we increase $t$ by solving (\ref{2dlittlepilinear}) for $u_{s,t+6}$ to obtain Laurent 
polynomials for all $t>0$, while to extend to $t<0$ we solve for  $u_{s,t}$ instead. 
\end{proof}

For the band sets of initial values we have to work harder.

\begin{lemma}\label{rotatedlittlepilemma}
For a set of initial values $I$, suppose that there is an $\tilde{s}$ such that
\[
\{u_{\tilde{s},t},u_{\tilde{s},t+1},u_{\tilde{s},t+4},u_{\tilde{s},t+5}\}\subset \mathcal{L}
\]
and
\[
\{u_{\tilde{s},t+2},u_{\tilde{s},t+3}\}\subset I;
\]
then $\beta(t)\in \mathcal{L}$. 
\end{lemma}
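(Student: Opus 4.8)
The plan is to exploit the $s$-independence of $\beta$, exactly as in the proof of Lemma \ref{Laurentlemma}, by translating the defining expression (\ref{beta}) horizontally to the distinguished column $\tilde{s}$ where the hypotheses apply. The remark immediately following (\ref{beta}) already guarantees this is permissible: because $\beta(t)$ does not depend on $s$, the same formula holds with each $u_{0,j}$ replaced by $u_{\tilde{s},j}$, giving
\[
\beta(t)=\frac{1+u_{\tilde{s},t}u_{\tilde{s},t+3}+u_{\tilde{s},t+1}u_{\tilde{s},t+4}+u_{\tilde{s},t+2}u_{\tilde{s},t+5}}{u_{\tilde{s},t+2}u_{\tilde{s},t+3}}.
\]
The entire argument then reduces to reading off which of these six variables must be invertible in $\mathcal{L}$ and which need only lie in $\mathcal{L}$.

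First I would treat the denominator $u_{\tilde{s},t+2}u_{\tilde{s},t+3}$. By hypothesis both factors lie in $I$, hence are units of $\mathcal{L}=\mathbb{Z}[I,I^{-1}]$, so their product is a unit of $\mathcal{L}$. Next I would treat the numerator: its variables are $u_{\tilde{s},t},\ldots,u_{\tilde{s},t+5}$, where the two middle entries $u_{\tilde{s},t+2},u_{\tilde{s},t+3}$ lie in $I\subset\mathcal{L}$ and the remaining four entries $u_{\tilde{s},t},u_{\tilde{s},t+1},u_{\tilde{s},t+4},u_{\tilde{s},t+5}$ lie in $\mathcal{L}$ by hypothesis. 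Since $\mathcal{L}$ is a ring closed under addition and multiplication, the whole numerator is an element of $\mathcal{L}$. Dividing an element of $\mathcal{L}$ by a unit of $\mathcal{L}$ stays in $\mathcal{L}$, so $\beta(t)\in\mathcal{L}$, as required.

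I do not expect any genuine obstacle here; the lemma is essentially immediate once the expression is shifted into place. The only point requiring care—and the real content of the two hypotheses—is the distinction between the two roles played by the variables of (\ref{beta}): the two diagonal variables $u_{\tilde{s},t+2},u_{\tilde{s},t+3}$ appear in the denominator and therefore must be \emph{invertible} (hence the demand that they lie in $I$), whereas the four cross-term variables appear only in the numerator and so need only be \emph{Laurent} (membership in $\mathcal{L}$ suffices, with no invertibility required). Matching these two hypotheses to the structure of (\ref{beta}) is the crux, and once it is made explicit the conclusion follows by the same one-line closure argument used for Lemma \ref{Laurentlemma}.
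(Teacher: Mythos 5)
Your proof is correct and follows the same route as the paper: shift the formula (\ref{beta}) to the column $\tilde{s}$ using the $s$-independence of $\beta$, observe that the denominator factors lie in $I$ and are therefore invertible in $\mathcal{L}$, and note that the numerator lies in $\mathcal{L}$ by hypothesis. Your write-up simply makes explicit the unit/non-unit distinction that the paper's one-line argument leaves implicit.
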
 
\begin{proof}
In the expression for $\beta(t)$ in (\ref{beta}), 
we shift $s$  until $u_{\tilde{s},t+2}$ and $u_{\tilde{s},t+3}$ appear in the denominator, 
and the terms in the numerator belong to $\mathcal{L}$ by assumption, so the result follows. 
\end{proof}
\begin{theorem}
For a given initial set $I$, if the conditions of Lemma \ref{rotatedlittlepilemma} hold for all $t$, and if for all $s$ there is a $\tilde{t}$ such that 
\[
\{u_{s,\tilde{t}},u_{s,\tilde{t}+1},u_{s,\tilde{t}+2},u_{s,\tilde{t}+3},u_{s,\tilde{t}+4},u_{s,\tilde{t}+5}\}\subset \mathcal{L}, 
\]
then equation (\ref{2dlittlepiintro}) has the Laurent property.
\end{theorem}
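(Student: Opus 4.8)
The plan is to mirror the structure of the proof of Theorem \ref{Laurentnesstheorem} for the new lattice equation, adapting it to the linear relation (\ref{2dlittlepilinear}) and the coefficient $\beta(t)$ in place of (\ref{inhomogenous}) and $J(t)$. Since the relation (\ref{2dlittlepilinear}) is a three-term recurrence in the variable $u_{s,t+2j}$ (stepping $t$ by $2$), the key observation is that for fixed $s$ it connects $u_{s,t}, u_{s,t+2}, u_{s,t+4}, u_{s,t+6}$, so propagating in the $t$ direction naturally splits into the two parity classes of $t$. First I would fix $s$ and treat the induction on $t$, solving (\ref{2dlittlepilinear}) for the highest iterate $u_{s,t+6}$ to move upward and for the lowest iterate $u_{s,t}$ to move downward, exactly as in Proposition \ref{propositionforkkmtinitialvalues}.

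The base case for this induction is supplied by the second hypothesis: for each $s$ there is a $\tilde t$ with the six consecutive values $u_{s,\tilde t},\ldots,u_{s,\tilde t+5}$ all lying in $\mathcal{L}$. These six values seed both parity classes simultaneously and allow the three-term relation (\ref{2dlittlepilinear}) to be iterated in either direction. The only remaining ingredient needed to run the recurrence is that the coefficients $\beta(t)$ themselves lie in $\mathcal{L}$ for every $t$, and this is precisely what Lemma \ref{rotatedlittlepilemma} provides, under the stated hypothesis that for each $t$ there is an $\tilde s$ placing the four numerator terms $u_{\tilde s,t},u_{\tilde s,t+1},u_{\tilde s,t+4},u_{\tilde s,t+5}$ in $\mathcal{L}$ and the two denominator terms $u_{\tilde s,t+2},u_{\tilde s,t+3}$ in the initial set $I$ (so that division is by monomials from $I$). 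Since $\beta(t)$ is independent of $s$, I can evaluate it at whatever $\tilde s$ is convenient for each $t$ and then use the resulting Laurent value uniformly across all $s$.

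Thus the proof assembles as follows: first invoke Lemma \ref{rotatedlittlepilemma} together with the first hypothesis to conclude $\beta(t)\in\mathcal{L}$ for all $t$; then fix an arbitrary $s$, use the second hypothesis to obtain the six-term Laurent base case at height $\tilde t$; and finally induct upward using $u_{s,t+6}=\beta(t+1)u_{s,t+4}-\beta(t)u_{s,t+2}+u_{s,t}$ and downward using the rearrangement for $u_{s,t}$, concluding that $u_{s,t}\in\mathcal{L}$ for all $t$ and hence for all $(s,t)\in\Z^2$.

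The main obstacle I anticipate is bookkeeping the interplay between the two hypotheses rather than any deep difficulty: I must check that the $\tilde s$ guaranteed for the coefficient condition and the $\tilde t$ guaranteed for the seeding condition are genuinely compatible and that, once $\beta(t)$ is established to be Laurent for all $t$, the induction in $t$ for fixed $s$ never requires dividing by anything outside $\mathcal{L}$. In the companion band-set argument (cf.\ the proof of Theorem \ref{laurentforbandtheorem}) one would additionally need to verify geometrically that the band between the relevant shifted lines is wide enough that every horizontal line meets the requisite six terms and that the denominator positions fall on initial-value nodes; but at the level of this abstract theorem the hypotheses are exactly engineered to make the induction go through, so the content is confirming that the recurrence (\ref{2dlittlepilinear}) closes in both directions from the six-term seed.
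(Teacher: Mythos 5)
Your proposal is correct and follows essentially the same route as the paper, which simply refers back to the proof of Proposition \ref{propositionforkkmtinitialvalues}: establish $\beta(t)\in\mathcal{L}$ for all $t$ via Lemma \ref{rotatedlittlepilemma}, seed the induction with the six consecutive Laurent values at height $\tilde t$, and iterate (\ref{2dlittlepilinear}) upward and downward in $t$ for each fixed $s$. Your explicit remark that the six-term seed is needed because the relation steps $t$ by $2$ and therefore covers both parity classes is a useful clarification that the paper leaves implicit.
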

\begin{proof}
The proof is the same as for Proposition (\ref{propositionforkkmtinitialvalues}).
\end{proof}

\begin{figure}
	
\begin{center}
		
\begin{tikzpicture}
		
\foreach \x in {0,1,...,8,9}
{
\foreach \y in {1,...,9,10}
{
\node[draw,circle,inner sep=1pt,fill] at (1*\x,1*\y) {};
}
}
\foreach \x in {0,1,...,9}
{
\filldraw[yellow] (\x,5) circle (3pt) ;
}
\foreach \x in {0,1,...,9}
{
\filldraw[yellow] (\x,4) circle (3pt) ;
}		
\foreach \y in {1,...,10}
{
\filldraw[yellow] (3,\y) circle (3pt);
}
\foreach \y in {1,2,3,6,7,8,9,10}
{
\filldraw[green] (2,\y) circle (3pt);
}
\foreach \y in {1,2,3,6,7,8,9,10}
{
\filldraw[green] (4,\y) circle (3pt);
}
\foreach \x in {0,1,2,4,5,6,7,8,9}
{
\filldraw[green] (\x,6) circle (3pt);
}
\foreach \x in {0,1,2,4,5,6,7,8,9}
{
\filldraw[green] (\x,7) circle (3pt);
}
\foreach \x in {0,1,2,4,5,6,7,8,9}
{
\filldraw[green] (\x,3) circle (3pt);
}
\foreach \x in {0,1,2,4,5,6,7,8,9}
{
\filldraw[green] (\x,2) circle (3pt);
}
\end{tikzpicture}					
		
\end{center}
	
\caption{The yellow dots are initial values and the green are Laurent in the initial values} \label{latticelittlepi}
	
\end{figure}
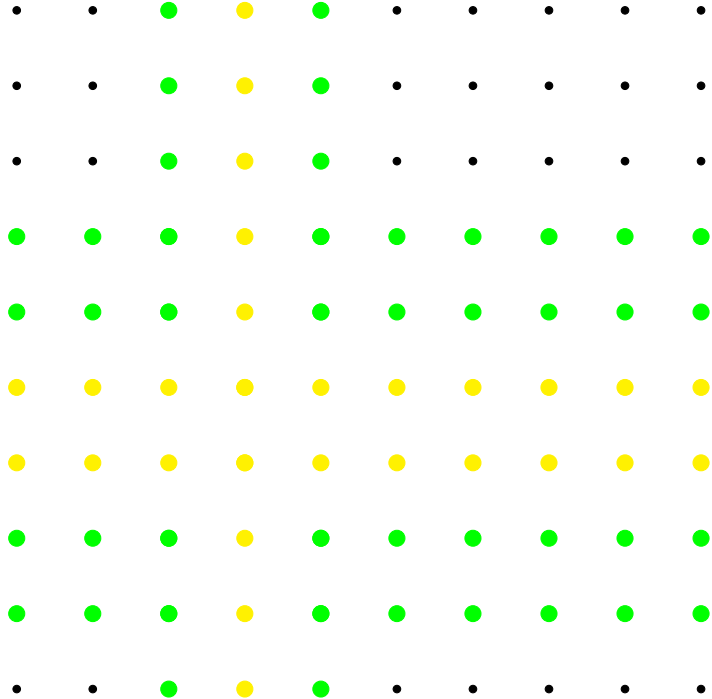

\begin{theorem}
The equation (\ref{2dlittlepiintro}) has the Laurent property if $I$ is a band of initial values.
\end{theorem}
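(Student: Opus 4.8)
The plan is to mirror the proof of Theorem \ref{laurentforbandtheorem} for the new lattice equation (\ref{lattice}), now using the linear relation (\ref{2dlittlepilinear}) with its $s$-independent coefficient $\beta(t)$ from (\ref{beta}) in place of the relation (\ref{inhomogenous}) with coefficient $J(t)$. The goal is to verify that a band set of initial values, built as in Subsection \ref{bandconstruction} but adapted to the domino stencil of (\ref{2dlittlepiintro}), satisfies the hypotheses of the preceding theorem: namely that the conditions of Lemma \ref{rotatedlittlepilemma} hold for every $t$, and that for every $s$ there is a $\tilde{t}$ with six consecutive vertical values $u_{s,\tilde{t}},\ldots,u_{s,\tilde{t}+5}$ lying in $\mathcal{L}$.

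First I would fix the appropriate band geometry. For (\ref{lattice}) the relevant displacement governing the two bounding lines $L_1,L_2$ was $(1,-2)$, reflecting the vertical span of three ($t,t+1,t+2$) in the local relation; here the linear relation (\ref{2dlittlepilinear}) spans six rows ($t$ through $t+5$), so I would take $L_2 = L_1 + (1,-5)$, choosing $L_1$ with positive rational gradient and taking $I=I(L_1,L_2)$ to be the lattice points between the lines, including those on $L_1$ but excluding those on $L_2$. I would then run the van der Kamp propagation of \cite{boat} exactly as before: starting from the yellow initial values, compute the next line of values, colouring green those that are obtained by dividing only by a monomial (i.e.\ a single initial value) and hence lie in $\mathcal{L}$, then translate the lines to sweep the whole lattice.

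Next I would carry out the bookkeeping that makes the two hypotheses hold. For Lemma \ref{rotatedlittlepilemma} I need, for each $t$, a column index $\tilde{s}$ at which four of the six values $u_{\tilde{s},t},u_{\tilde{s},t+1},u_{\tilde{s},t+4},u_{\tilde{s},t+5}$ lie in $\mathcal{L}$ while the middle pair $u_{\tilde{s},t+2},u_{\tilde{s},t+3}$ lie in $I$ itself (so they sit in the denominator of (\ref{beta}) as genuine initial values); this is the analogue of requiring a horizontal line to meet enough green and yellow points, and it forces a width condition on the band analogous to the $\sqrt{17}>4$ estimate in Theorem \ref{laurentforbandtheorem}. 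I would introduce shifted lines $L_1',L_2'$ bounding the extended green-plus-yellow region, compute the perpendicular separation, and check it exceeds the span needed to guarantee that every horizontal line meets at least the required six elements of $\mathcal{L}$ in the correct configuration, and every vertical line meets at least six consecutive such elements, which supplies the base case $\tilde{t}$ for the induction on $t$ in the preceding theorem. With both hypotheses verified, the theorem applies and yields the Laurent property.

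The main obstacle I anticipate is precisely this combinatorial verification of the band width and the pattern of green vertices for the wider six-row stencil of (\ref{2dlittlepiintro}): because the relation (\ref{2dlittlepilinear}) couples rows at distance up to six, the band must be substantially thicker than for (\ref{lattice}), and one must confirm that the green region produced by the propagation is large enough that the middle pair $u_{\tilde{s},t+2},u_{\tilde{s},t+3}$ can always be arranged to be genuine initial values of $I$ (not merely elements of $\mathcal{L}$) while the outer four are in $\mathcal{L}$. Establishing that this configuration is available uniformly in $t$, for every admissible rational gradient, is the delicate point; once the geometry is pinned down, the rest follows by the same inductive argument as in Proposition \ref{propositionforkkmtinitialvalues} and Theorem \ref{laurentforbandtheorem}.
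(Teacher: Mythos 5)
Your proposal goes wrong at the very first step, where you redefine the band by taking $L_2=L_1+(1,-5)$ on the grounds that the linear relation (\ref{2dlittlepilinear}) spans six rows. The band of initial values is dictated by the stencil of the \emph{nonlinear} equation, not by any derived linear relation: (\ref{2dlittlepiintro}) lives on exactly the same $2\times 3$ domino as (\ref{lattice}), so the well-posed band sets of Subsection \ref{bandconstruction} --- and these are the sets the theorem refers to --- are again those with $L_2=L_1+(1,-2)$. A band of vertical span six would be over-determined: interior points of such a band are themselves computable from instances of (\ref{2dlittlepiintro}) supported entirely inside the band, so generic values assigned there would be inconsistent with the equation and van der Kamp's propagation scheme would not apply. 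Hence you are not proving the stated theorem, and the initial value problem you propose to prove it for is not well posed.

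The idea you are missing is that the band keeps the same width, but the region of \emph{automatically Laurent} values grows. Solving (\ref{2dlittlepiintro}) for $u_{s,t}$ (respectively $u_{s+1,t+2}$) requires division only by the single variable $u_{s+1,t+2}$ (respectively $u_{s,t}$), and because these are the diagonally opposite corners of the domino, the eager propagation produces more green vertices than it did for (\ref{lattice}): the green-plus-yellow region is bounded by shifted lines $L'_1,L'_2$ with $L'_2=L'_1+(3,-6)$, wide enough that every horizontal or vertical lattice line meets at least six consecutive values in $\mathcal{L}$, at least two of which are adjacent and genuinely lie in $I$. That is exactly what Lemma \ref{rotatedlittlepilemma} and the preceding theorem require: two neighbouring initial values to sit in the denominator of (\ref{beta}) flanked by Laurent values, and six consecutive Laurent values on each vertical line to start the induction via (\ref{2dlittlepilinear}). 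The combinatorial verification you flag as the delicate point is indeed the content of the proof, but it is carried out for the original band by widening the green region, not by widening the band.
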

\begin{proof}
Similarly to the proof of Theorem \ref{laurentforbandtheorem} we have
\[
L'_1=L_1+(-1,-2), \qquad L'_2=L_2+(1,-2),
\]
so $L'_2=L'_1+(3,-6)$ and the minimal distance between them is $\sqrt{45}>6$. Hence for any vertical or horizontal line intersecting the lattice we have at least 
six  consecutive values in $\mathcal{L}$, and at least two  of these will be neighbours and in $I$.
\end{proof}

\noindent 
{\bf Acknowledgements:} 
ANWH is supported by EPSRC fellowship EP/M004333/1. 
He is grateful to the School of Mathematics and Statistics, UNSW for hosting him as a 
Visiting Professorial Fellow with funding from the Distinguished Researcher Visitor Scheme, 
and thanks John Roberts and Wolfgang Schief for providing additional financial support 
for his time in Sydney. He is also grateful for the invitation from the 
organisers of the Integrable Systems workshop at the University of Sydney in December 2017, 
where he presented preliminary results on the lattice equation (\ref{lattice}). 
 JP is supported by a PhD studentship from SMSAS, University of Kent. He 
thanks UNSW and La Trobe University, Melbourne for hospitality during his visit to 
Australia in May 2018.

\end{document}